\newcommand{\meach}{\text{\textnormal{\textbf{each} }}}
\newcommand{\Sec}[1]{\hyperref[sec:#1]{\S\ref*{sec:#1}}} 
\newcommand{\Eqn}[1]{\hyperref[eqn:#1]{(\ref*{eqn:#1})}} 
\newcommand{\Fig}[1]{\hyperref[fig:#1]{Fig.\,\ref*{fig:#1}}} 
\newcommand{\Tab}[1]{\hyperref[tab:#1]{Tab.\,\ref*{tab:#1}}} 
\newcommand{\Thm}[1]{\hyperref[thm:#1]{Thm.\,\ref*{thm:#1}}} 
\newcommand{\Lem}[1]{\hyperref[lem:#1]{Lemma\,\ref*{lem:#1}}} 
\newcommand{\Prop}[1]{\hyperref[prop:#1]{Prop.~\ref*{prop:#1}}} 
\newcommand{\Cor}[1]{\hyperref[cor:#1]{Cor.~\ref*{cor:#1}}} 
\newcommand{\Def}[1]{\hyperref[def:#1]{Defn.~\ref*{def:#1}}} 
\newcommand{\Alg}[1]{\hyperref[alg:#1]{Alg.\,\ref*{alg:#1}}} 
\newcommand{\Ex}[1]{\hyperref[ex:#1]{Ex.~\ref*{ex:#1}}} 
\newcommand{\Clm}[1]{\hyperref[clm:#1]{Claim~\ref*{clm:#1}}} 
\newcommand{\Step}[1]{\hyperref[step:#1]{Step~\ref*{step:#1}}} 
\newtheorem{theorem}{Theorem}
\newtheorem{defn}{Definition}
\newtheorem{lemma}{Lemma}
\newcommand{\cC}{\mathcal{C}}
\newcommand{\cF}{\mathcal{F}}
\newcommand{\cH}{\mathcal{H}}
\newcommand{\cN}{\mathcal{N}}
\newcommand{\cO}{\mathcal{O}}
\newcommand{\cR}{\mathcal{R}}
\newcommand{\cS}{\mathcal{S}}
\newcommand{\cU}{\mathcal{U}}
\newcommand{\true}{\textsc{true}}
\newcommand{\false}{\textsc{false}}
\newcommand{\NN}{\mathbb{N}}
\newcommand{\sR}{\cR}
\newcommand{\sS}{\cS}
\newcommand{\kval}{\kappa}
\newcommand{\upf}{\cU}
\newcommand{\tassk}{\textsc{ask}\xspace}
\newcommand{\tfb}{\textsc{fb}\xspace}
\newcommand{\tslj}{\textsc{slj}\xspace}
\newcommand{\tso}{\textsc{ork}\xspace}
\newcommand{\tsse}{\textsc{sse}\xspace}
\newcommand{\tsth}{\textsc{hg}\xspace}
\newcommand{\ttw}{\textsc{tw}\xspace}
\newcommand{\twg}{\textsc{wgo}\xspace}
\newcommand{\twn}{\textsc{wnd}\xspace}
\newcommand{\twiki}{\textsc{wiki}\xspace}
\newcommand{\tfri}{\textsc{fri}\xspace}
\definecolor{applegreen}{rgb}{0.55, 0.71, 0.0}
\definecolor{asparagus}{rgb}{0.53, 0.66, 0.42}
\definecolor{brightgreen}{rgb}{0.4, 1.0, 0.0}
\definecolor{caribbeangreen}{rgb}{0.0, 0.8, 0.6}
\definecolor{chromeyellow}{rgb}{1.0, 0.65, 0.0}
\definecolor{darkolivegreen}{rgb}{0.33, 0.42, 0.18}
\definecolor{darkpastelgreen}{rgb}{0.01, 0.75, 0.24}
\newcommand{\algSND}[1]{{\color{darkpastelgreen}#1}}
\newcommand{\algAND}[1]{{\color{cyan}#1}}
\newcommand{\algSP}[1]{{\color{orange}#1}}
\definecolor{myblue}{RGB}{0,186,255}
\definecolor{myred}{RGB}{233,3,7}
\definecolor{mygreen}{RGB}{93,199,70}
\definecolor{myyellow}{RGB}{255,158,23}
\newcommand{\assk}{{\tt as-skitter}\xspace}
\newcommand{\fb}{{\tt facebook}\xspace}
\newcommand{\slj}{{\tt soc-LiveJournal}\xspace}
\newcommand{\so}{{\tt soc-orkut}\xspace}
\newcommand{\sse}{{\tt soc-sign-epinions}\xspace}
\newcommand{\sth}{{\tt soc-twitter-higgs}\xspace}
\newcommand{\tw}{{\tt twitter}\xspace}
\newcommand{\wg}{{\tt web-Google}\xspace}
\newcommand{\wn}{{\tt web-NotreDame}\xspace}
\newcommand{\wiki}{{\tt wiki-200611}\xspace}
\newcommand{\fri}{{\tt friendster}\xspace}
\newcommand{\PAND}{\textsc{PartialAnd}\xspace}
\newcommand{\AND}{\textsc{And}\xspace}
\newcommand{\SND}{\textsc{Snd}\xspace}
\newcommand{\hl}[1]{{\color{black}{#1}}}
\let\oldnl\nl
\newcommand{\nonl}{\renewcommand{\nl}{\let\nl\oldnl}}
\begin{document}
\SetEndCharOfAlgoLine{}
\title{Local Algorithms for Hierarchical Dense Subgraph Discovery}

\numberofauthors{3} 
\author{
\alignauthor
Ahmet Erdem Sar{\i}y\"{u}ce\\
	\affaddr{University at Buffalo}\\
	\email{erdem@buffalo.edu}
\alignauthor
C. Seshadhri\\
	\affaddr{Univ. of California, Santa Cruz} \\
	\email{sesh@ucsc.edu}
\alignauthor
Ali Pinar\\
	\affaddr{Sandia National Laboratories} \\
	\email{apinar@sandia.gov}
}

\maketitle

\begin{abstract}
Finding the dense regions of a graph and relations among them is a fundamental problem in network analysis. Core and truss decompositions reveal dense subgraphs with hierarchical relations. The incremental nature  of algorithms for computing these decompositions  and the  need for global information at each step of the algorithm hinders  scalable parallelization and approximations since the densest regions are not revealed until the end. In a previous work, Lu et al. proposed to iteratively compute the $h$-indices of neighbor vertex degrees to obtain the core numbers and prove that the convergence is obtained after a finite number of iterations.  This work generalizes the iterative $h$-index computation for truss decomposition as well as nucleus decomposition which leverages higher-order structures to generalize  core and truss decompositions. In addition, we prove convergence bounds on the number of iterations. We present a framework of local algorithms to obtain the core, truss, and nucleus decompositions. Our algorithms are local, parallel, offer high scalability, and enable approximations to explore time and quality trade-offs. Our shared-memory implementation verifies the efficiency, scalability, and effectiveness of our local algorithms on real-world networks.
\end{abstract}


\section{Introduction}\label{sec:intro}
\noindent
A characteristic feature of the real-world graphs is  sparsity at the global level yet density in the local neighborhoods~\cite{Gleich12}.
Dense subgraphs are indicators for functional units or unusual behaviors.
They have been adopted  in various applications, such as detecting  DNA motifs in biological networks~\cite{Fratkin06}, identifying the news stories from microblogging streams in real-time~\cite{Angel12}, finding  price value motifs in financial networks~\cite{Du09}, and locating  spam link farms in web~\cite{Kumar99,Gibson05,Dourisboure07}.
Dense regions are also used to improve efficiency of compute-heavy tasks like distance query computation~\cite{Jin09} and materialized per-user view creation~\cite{Gionis13}.

Detecting dense structures in various granularities and finding the hierarchical relations among them is a fundamental problem in graph mining. For instance,  in a  citation network,  the hierarchical relations of dense parts in various granularities can reveal how new research areas are initiated or which research subjects became popular in time~\cite{Sariyuce-TWEB17}.
$k$-core~\cite{Seidman83,MaBe83} and $k$-truss decompositions~\cite{Saito06,Cohen08,Verma12,Zhang12} are effective ways to find many dense regions in a graph and construct a hierarchy among them.
$k$-core is based on the vertices and their degrees, whereas $k$-truss relies on the edges and their triangle counts.

\begin{figure}[!b]
\centering
\captionsetup[subfigure]{captionskip=1.3ex}
\vspace{-3ex}
\hspace{-16ex}
\subfloat[Convergence rates]{\includegraphics[height=2.65cm]{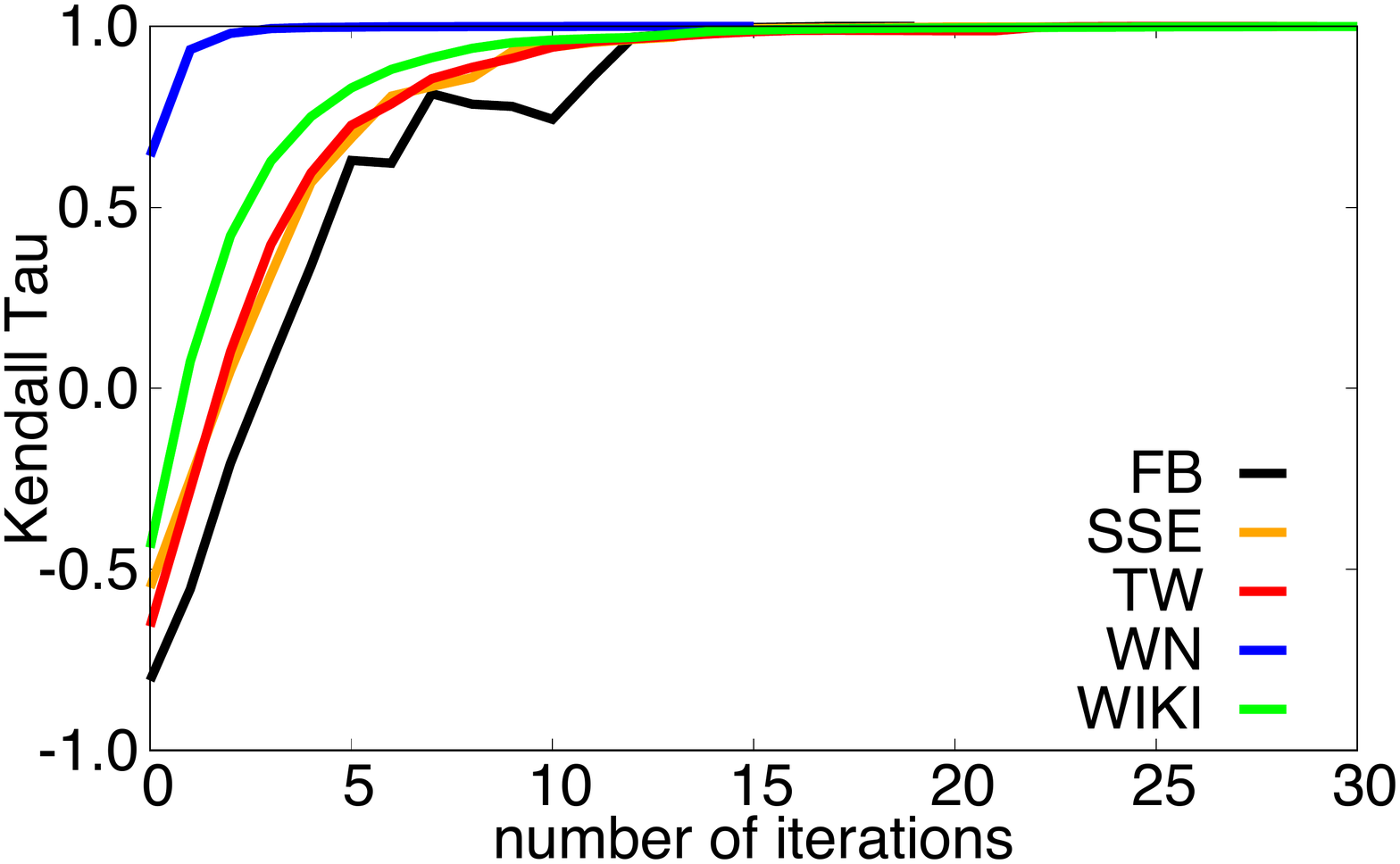}\label{fig:1a}}
\subfloat[Scalability performance]{\includegraphics[height=2.65cm]{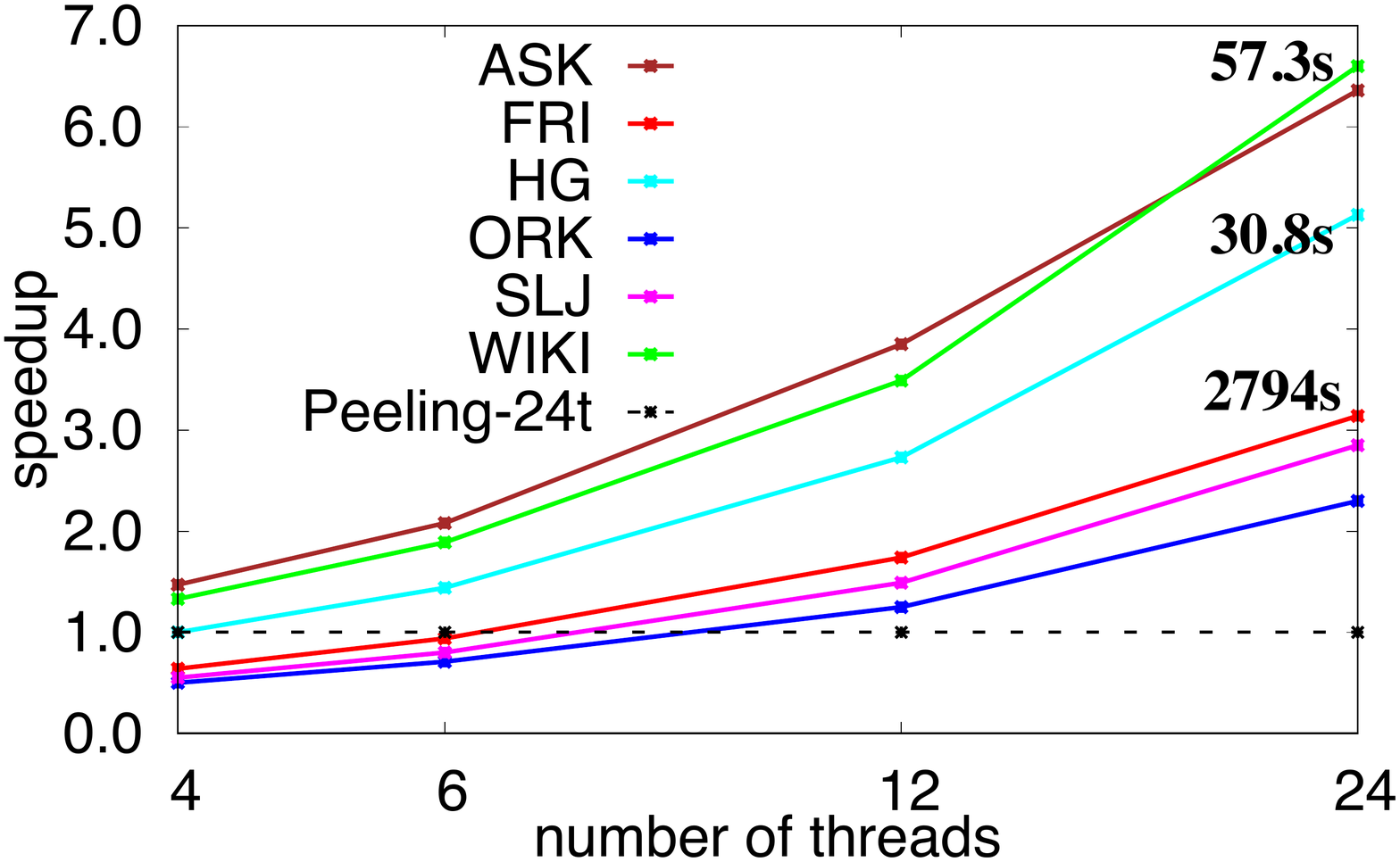}\label{fig:1b}}\vspace{-1ex}
\hspace{-18ex}
\caption{\it{On the left, we present the convergence rates for {$k$}-truss decomposition on five graphs. Kendall-Tau similarity score compares the obtained and the exact decompositions; becomes 1.0 when they are the same. Our local algorithms compute the almost-exact decompositions in around 10 iterations. On the right, we show the runtime performances w.r.t. partially parallel peeling algorithms. On average, {$k$}-truss computations are  $4.8$x faster when we switch from 4 threads to 24 threads.}}
\vspace{-3ex}
\label{fig:intro}
\end{figure}

Higher-order structures, also known as motifs or graphlets, have been used to find dense regions that cannot be detected with edge-centric methods~\cite{Benson16,Tsourakakis15}.
Computing the frequency and distribution of triangles and other small motifs  is a simple yet effective approach used in data analysis~\cite{Jha15, Pinar16, Ahmed15, Rossi17}.
Nucleus decomposition is a framework of decompositions that is able to use higher-order structures to find dense subgraphs with hierarchical relations~\cite{Sariyuce15, Sariyuce-TWEB17}.
It generalizes the $k$-core and $k$-truss approaches and finds higher-quality dense subgraphs with more detailed hierarchies.
However, existing algorithms in the nucleus decomposition framework require global graph information, which becomes a performance bottleneck for massive networks.
They are also not amenable for parallelization or approximation due to their interdependent incremental nature.
We introduce a framework of algorithms for nucleus decomposition that uses \textbf{only local information}.
Our algorithms provide  faster and approximate solutions and their local nature enables query-driven processing of  vertices/edges.

\subsection{Problem and Challenges}

\noindent The standard method to compute a $k$-core decomposition is a sequential algorithm, known as the peeling process.
To find a $k$-core, all vertices with degree less than $k$ are removed repeatedly until no such vertex remains.
This process is repeated after incrementing $k$ until no vertices remain.
Batagelj and Zaversnik introduced a bucket-based $\cO(|E|)$ algorithm for this process~\cite{BaZa03}.
It keeps track of the vertex with the minimum degree at each step, thus requires global information about the graph at any time.
$k$-truss decomposition has a similar peeling process with $\cO(|\triangle|)$ complexity~\cite{Cohen08}.
To find a $k$-truss, all edges with less than $k$ triangles are removed repeatedly and at each step, algorithm keeps track of the edge with the minimum triangle count, which requires information from all around the graph.
Nucleus decomposition~\cite{Sariyuce15} also facilitates the peeling process on the given higher-order structures.
\textbf{The computational bottleneck in the peeling process is the need for the global graph information.}
This results in inherently sequential processing.
Parallelizing the peeling process in a scalable way is challenging since each step depends on the results of the previous step.
Parallelizing each step in itself is also infeasible since synchronizations are needed to decrease the degrees of the vertices that are adjacent to multiple vertices being processed in that step.

\noindent {\bf Iterative \emph{h}-index computation:} Lu et al. introduced an alternative formulation for $k$-core decomposition~\cite{Lu16}.
They proposed iteratively computing $h$-indices on the vertex degrees to find the core numbers of vertices (even though they do not call out the correspondence of their method to h-indices). Degrees of the vertices are used as the initial core number estimates  and each vertex updates its estimate as  the $h$-index value for its neighbors' core number estimates.
This process is repeated until convergence.
At the end, each vertex has  its core number.
They prove that convergence to the core numbers is guaranteed, and analyze the convergence characteristics of the real-world networks and show runtime/quality trade-offs.

We generalize Lu et al.'s work for \emph{any} nucleus decomposition, including $k$-truss.
We show that convergence is guaranteed for all nucleus decompositions and prove the first upper bounds for the number of iterations.
Our framework of algorithms locally compute \textit{any} nucleus decomposition.
We propose that iteratively computing $h$-indices of vertices/edges/$r$-cliques based on their degrees/triangle/$s$-clique counts converges in the core/truss/nucleus numbers ($r < s$).
Local formulation also enables the parallelization.
Intermediate values  provide an approximation to the exact nucleus decomposition to trade-off between runtime and quality.
Note that this is not possible in the peeling process, because no intermediate solution can provide an overall approximation to the exact solution, e.g., the densest regions are not revealed until the end.

\subsection{Contributions}
\noindent Our contributions can be summarized as follows:
\begin{compactitem}[\leftmargin=0.1ex $\bullet$]
\item \textbf{Generalizated nucleus decomposition:} We generalize the iterative $h$-index computation idea~\cite{Lu16} for \emph{any} nucleus decomposition by using only local information.
Our approach is based on iteratively computing the $h$-indices on the degrees of vertices, triangle counts of edges, and $s$-clique counts of $r$-cliques ($r<s$) until convergence.
We prove that the iterative computation by  $h$-indices guarantees  exact core, truss, and nucleus decompositions.
\item \textbf{Upper bounds for convergence:}  We prove an upper bound for the number of iterations needed for convergence.
We define the concept of \textit{degree levels} that models the worst case for convergence.
Our bounds are applicable to  \textit{any} nucleus decomposition and much tighter than the trivial bounds that rely on the total number of vertices.
\item \textbf{Framework of parallel local algorithms:} We introduce a framework of efficient algorithms that only use local information to compute \textit{any} nucleus decomposition.
Our algorithms are highly parallel due to the local computation and are implemented in OpenMP for shared-memory architectures.
\item \textbf{Extensive evaluation on real-world networks:} We evaluate our algorithms and implementation on various real-world networks.
We investigate the convergence characteristics of our new algorithms and show that close approximations can be obtained only in a few iterations.
This enables exploring trade-offs between time and accuracy.
\cref{fig:1a} presents the convergence rates  for the $k$-truss decomposition.
In addition, we present a metric that approximates solution quality  for  informed decisions on accuracy/runtime trade-offs.
We also evaluate  runtime performances of our algorithms, present  scalability results, and examine trade-offs between runtime and accuracy. \cref{fig:1b} has the results at a glance for the $k$-truss case.
Last, but not least, we highlight a query-driven scenario where our local algorithms are used on a subset of vertices/edges to estimate the core and truss numbers.
\end{compactitem}

\section{Background}\label{sec:prelim_nuc}
\noindent We work on a simple undirected graph $G=(V, E)$ where $V$ is the set of vertices and $E$ is the set of edges.
We define $r$-clique as a complete graph among $r$ vertices for $r>0$, i.e., each vertex is connected to all the other vertices. 
\textbf{We use $R$ (and $S$) to denote $r$-clique (and $s$-clique).}

\subsection{Core, Truss, and Nucleus Decompositions}

\begin{defn}
$k$-core  of  $G$ is a maximal connected subgraph  of $G$, where each vertex has at least degree $k$.
\end{defn}

A vertex can reside in multiple $k$-cores, for different $k$ values, which results in a hierarchy.
Core number of a vertex is defined to be the the largest $k$ value for which there is a $k$-core that contains the vertex.
Maximum core of a vertex is the maximal subgraph around it that contains vertices with  equal or larger core numbers.
It can be found by a traversal that only includes the vertices with larger or equal core numbers.
\cref{fig:toycore} illustrates $k$-core examples.

\begin{defn}
$k$-truss of $G$ is a maximal connected subgraph of $G$ where each edge is in at least $k$ triangles.
\end{defn}

Cohen~\cite{Cohen08} defined the standard maximal $k$-truss as one-component subgraph such that each edge participates in at least $k-2$ triangles, but here we just assume it is $k$ triangles for the sake of simplicity.
An edge can take part in multiple $k$-trusses, for different $k$ values, and there are also hierarchical relations
between $k$-trusses in a graph.
Similar to the core number, truss number of an edge is defined to be the largest $k$ value for which there exists a $k$-truss that includes the edge and maximum truss of an edge is the maximal subgraph around it that contains edges with larger or equal truss numbers. 
We show some truss examples in \cref{fig:toytruss}.
Computing the core and truss numbers are known as the core and truss decomposition.

\begin{figure}[t!]
\centering
\captionsetup[subfigure]{captionskip=1ex, margin=5ex}
\hspace{-22ex}
\subfloat[{$k$}{-cores}]{\includegraphics[width=0.53\linewidth]{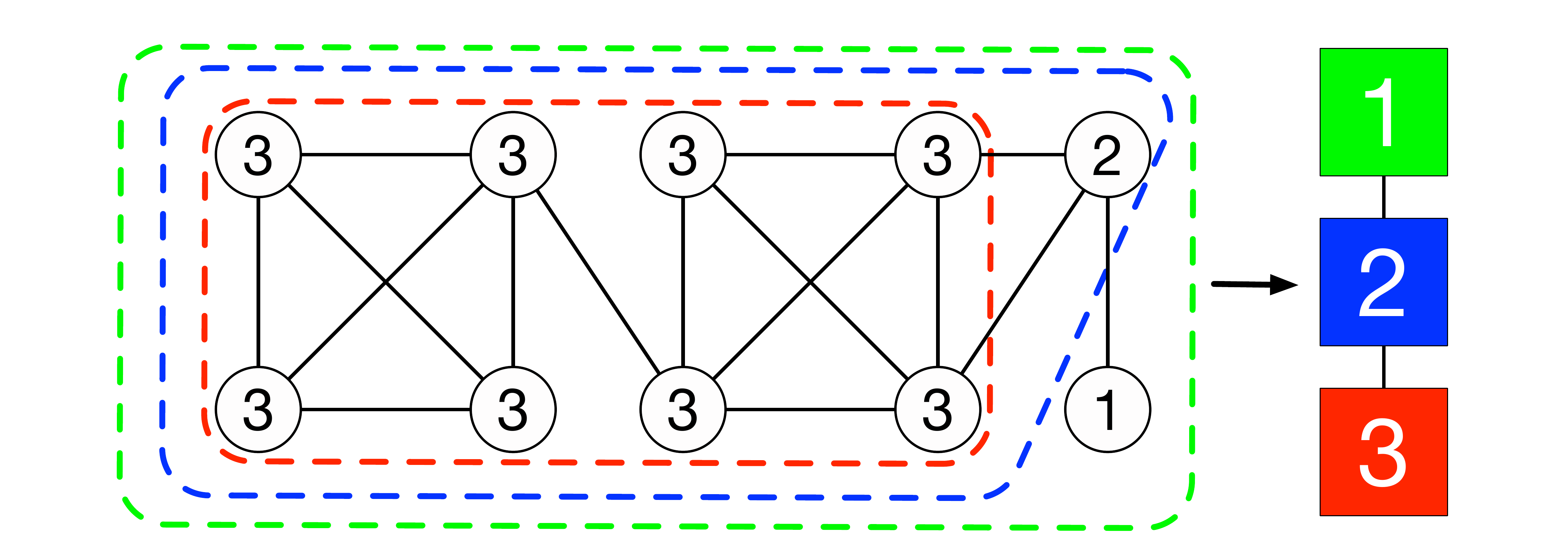}\label{fig:toycore}}
\hspace{-2ex}
\subfloat[{$k$}{-trusses}]{\includegraphics[width=0.53\linewidth]{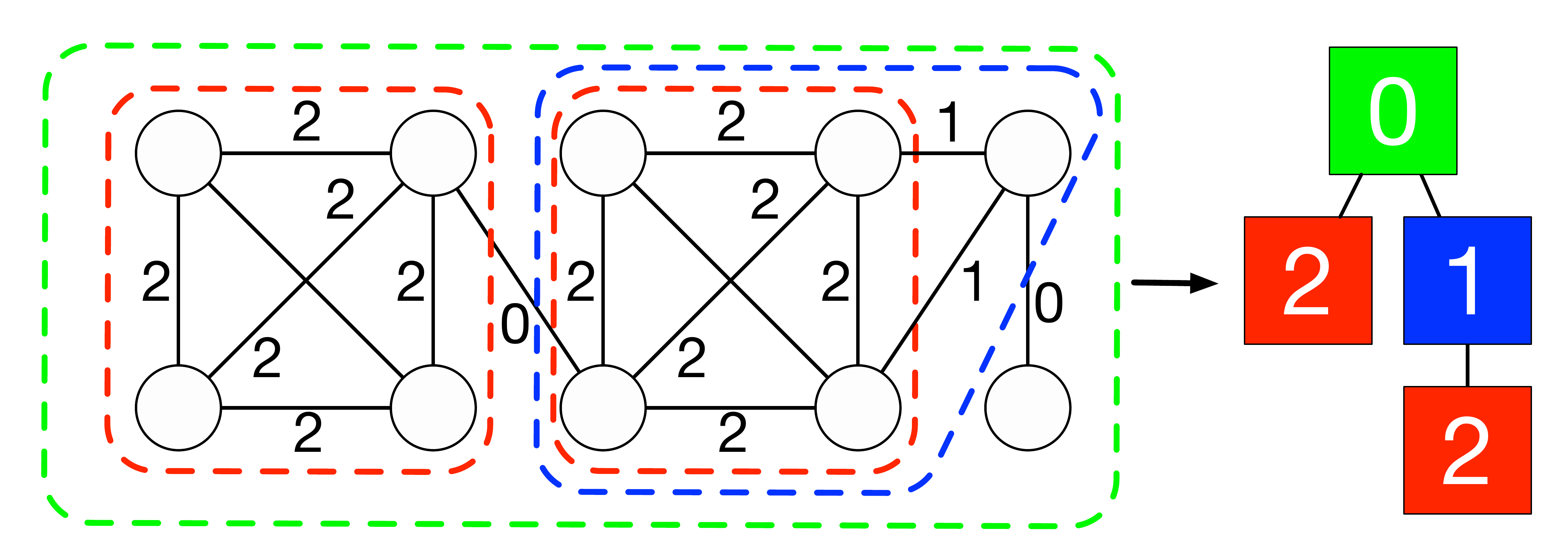}\label{fig:toytruss}}
\hspace{-19ex}
\caption{\it{Illustrative examples for $k$-core and $k$-truss. On the left, red, blue, and green regions show the 3-, 2-, and 1-cores. Core numbers are also shown for each vertex. For the same graph, trusses are presented on the right. Entire graph is a 0-truss. Five vertices on the right form a 1-truss, in blue. There are also two 2-trusses and one of them is a subset of the 1-truss. Truss numbers of the edges are also shown.}}
\vspace{-2ex}
\label{fig:toys}
\end{figure}

\begin{figure}[b!]
\vspace{-3ex}
\centering
\captionsetup[subfigure]{captionskip=0.1ex, margin=5ex}
\hspace{-21ex}
\subfloat[{$k$}{-trusses}]{\includegraphics[width=0.53\linewidth]{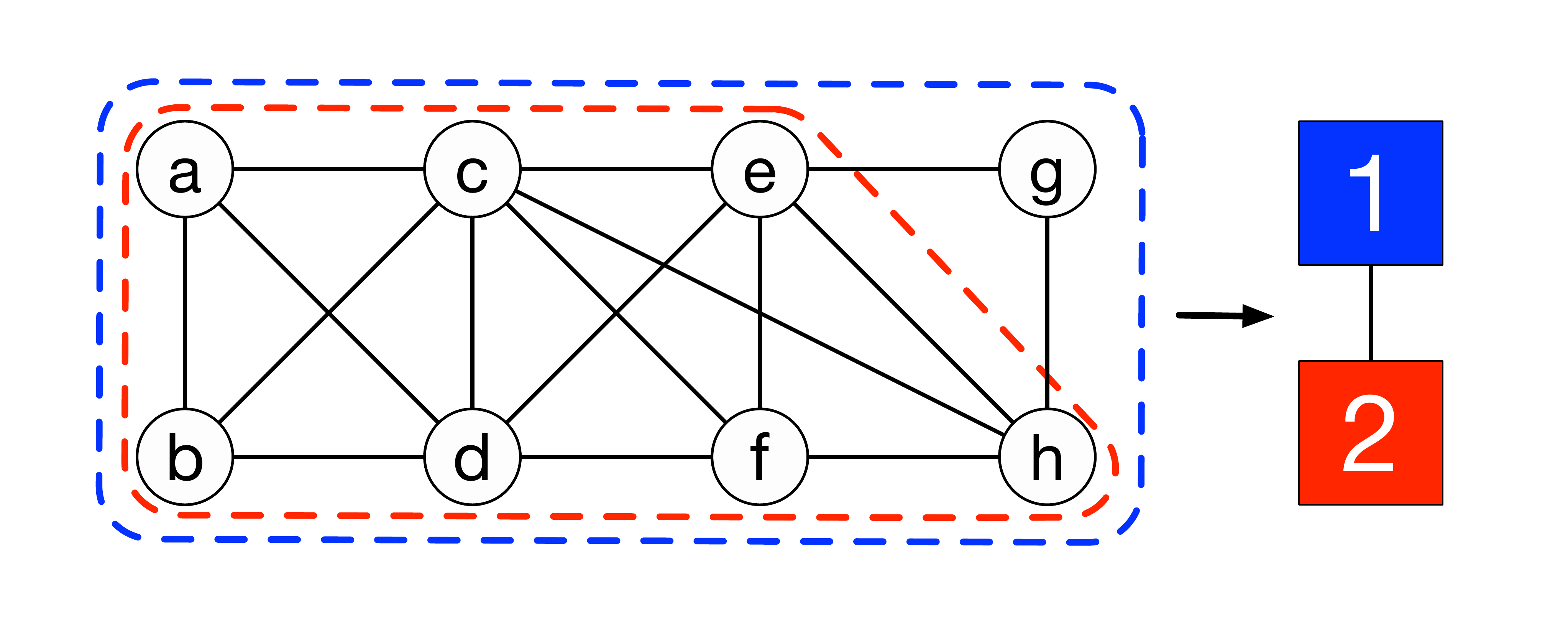}\label{fig:toytruss2}}
\hspace{-2ex}
\subfloat[{$k$}{-(3,4) nuclei}]{\includegraphics[width=0.53\linewidth]{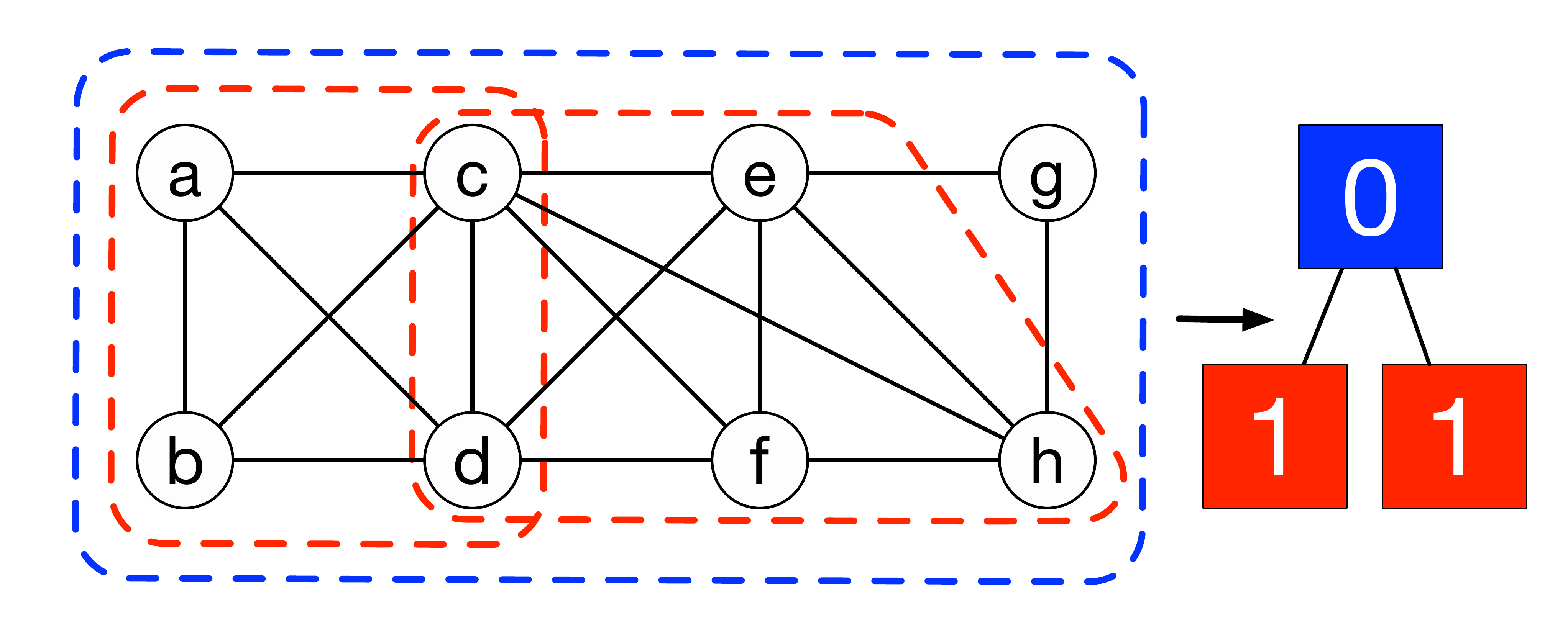}\label{fig:toynuc}}
\hspace{-20ex}
\vspace{-1ex}
\caption{\textit{Illustrative examples for $k$-truss and $k$-$(3,4)$ nucleus. On the left, entire graph is a single 1-truss and all except vertex $g$ forms a 2-truss. For the same graph, $k$-$(3,4)$ nuclei are shown on the right. Entire graph is a 0-(3,4) nucleus and there are two 1-(3,4) nuclei (in red): subgraph among vertices $a, b, c, d$ and subgraph among $c, d, e, f, h$. Note that those two subgraphs are reported as separate, not merged, since there is no four-clique that contains a triangle from each nuclei (breaking $\sS$-connectedness).}}
\vspace{0ex}
\label{fig:toynucleus}
\end{figure}

\noindent \textbf{Unifying $k$-core and $k$-truss:} Nucleus decomposition is a framework that generalizes core and truss decompositions~\cite{Sariyuce15}.
$k$-$(r,s)$ nucleus is defined as the maximal subgraph of the $r$-cliques where each $r$-clique takes part in at least $k$ $s$-cliques.
We first give some basic definitions and then formally define the $k$-$(r,s)$ nucleus subgraph.

\begin{defn} \label{def:cl-path} Let $r < s$ be positive integers.
\begin{compactitem}
\setlength\itemsep{1ex}
	\item $\sR(G)$and $\sS(G)$ are the set of $r$-cliques and $s$-cliques in $G$, respectively (or $\sR$ and $\sS$ when $G$ is  unambigous).
	\item \boldmath{$\sS$}\textbf{-degree} of $~R \in \sR(G)$ is the number of $~S \in \sS(G)$ such that $S$ contains $R$ ($R \subset S$). It is denoted as {\boldmath$d_s|_G(R)$} (or {$d_s|(R)$} when $G$ is obvious).
	\item Two $r$-cliques $R, R'$ are \emph{\boldmath{$\sS$}\textbf{-connected}} if there exists a sequence $R = R_1, R_2, \ldots, R_k = R'$ in $\sR$ such that for each $i$, some $S \in \sS$ contains $R_i \cup R_{i+1}$.
	\item Let $k$, $r$, and $s$ be positive integers such that $r < s$. A \emph{\boldmath{$k$}\textbf{-}\boldmath{$(r,s)$} \textbf{nucleus}} is a subgraph $G'$ which contains the edges in the maximal union $\cS$ of $s$-cliques such that
\begin{compactitem}
\item $\sS$-degree of any $r$-clique $R \in \sR(G')$ is at least $k$.
\item Any $r$-clique pair $R, R' \in \sR(G')$ are $\sS$-connected.
\end{compactitem}
\end{compactitem}
\end{defn}

For $r=1, s=2$, $k$-$(1, 2)$ nucleus is a maximal (induced) connected subgraph with minimum vertex degree $k$. This is exactly the $k$-core.
Setting $r=2, s=3$ gives maximal subgraphs where every edge participates in at least $k$ triangles, which corresponds to $k$-truss, and all edges are triangle connected, which is also introduced as a $k$-truss community~\cite{Huang14}.
Note that the original $k$-truss definition is different from $(2,3)$ nucleus since it does not require triangle connectivity. In this work, we focus on core, truss, and $\kappa_s$ indices (\cref{def:kappaindex}) and our algorithms work for \textit{any connectivity constraint}. We skip details for brevity.

Nucleus decomposition for $r=3$ and $s=4$ has been shown to give denser (and non-trivial) subgraphs than the $k$-cores and $k$-trusses, where density of $G=(V,E)$ is defined as $2|E|/{|V| \choose 2}$~\cite{Sariyuce15}.
\cref{fig:toynucleus} presents the difference between $k$-truss and $k$-$(3,4)$ nucleus on a toy graph.
It is used to analyze citation networks of  APS journal papers and discovered  hierarchy of topics  i.e., a large subgraph on complex networks has children subgraphs on synchronization networks, epidemic spreading and random walks, which cannot be observed with core and truss decompositions~\cite{Sariyuce-TWEB17}. 
Nucleus decomposition for larger $r$ and $s$ values are costly and only affordable for small networks with a few thousand edges. Enumerating $r$-cliques and checking their involvements in $s$-cliques for $r,s>4$, can become intractable for larger graphs, making $k$-$(3,4)$ is a sweet spot.

In graph $G$, minimum $\sS$-degree of an $r$-clique $R \in \sR(G)$ is denoted as {\boldmath$\delta_{r,s}(G)$}, e.g., the minimum degree of a vertex in $G$ is $\delta_{1,2}(G)$.
We also use {\boldmath $\cN_s(R)$} to denote the set of neighbor $r$-cliques of $R$ such that $R' \in \cN_s(R)$ if $\exists$ an $s$-clique $S$ s.t. $S\supset R$ and $S\supset R'$.
As in $k$-core and $k$-truss definitions, an $r$-clique can take part in multiple $k$-$(r,s)$ nuclei for different $k$ values.
We define the $\kval_s$ index of $r$-clique analogous to the core numbers of vertices and truss numbers of edges~\cite{Sariyuce15}.

\vspace{-2ex}
\begin{defn} \label{def:kappaindex}
For any $r$-clique $R \in \cR(G)$, the $\boldsymbol{\kval_s}$\textbf{-index} of $R$, denoted as $~\kval_s(R)$, is the largest $k$ value such that $R$ is contained in a $k$-($r,s$) nucleus.
\end{defn}
\vspace{-2ex}

\begin{table}[!t]
\vspace{1ex}
\small
\renewcommand{\tabcolsep}{2pt}
\caption{{\it Notations}}
\label{tab:notation}
\vspace{-2ex}
\hspace{-1ex}
\begin{tabular}{ | l | l |} \hline
\textbf{Symbol}	&	\textbf{Description} \\\hline\hline
$G$ & graph\\ \hline
$R$ & $r$-clique\\ \hline
$S$ & $s$-clique\\ \hline
$\cR(G)~($or~$\cR)$ & set of $r$-cliques in graph $G$\\ \hline
$\cS(G)~($or~$\cS)$ & set of $s$-cliques in graph $G$\\ \hline
$\cC(G)~($or~$\cC)$ & $\cR(G) \cup \cS(G)$\\ \hline
$d_s|_G(R)$ & $\cS$-degree of $R$: number of $s$-cliques \\
$($or~$d_s(R))$ & that contains $R$ in graph $G$ \\ \hline
$\delta_{r,s}(G)$ & minimum $\sS$-degree of an $r$-clique in graph $G$\\ \hline
$\cN_s(R)$ & neighbor $R'$s s.t. $\exists$ an $s$-clique $S \supseteq (R \cup R')$ \\ \hline
$\kval_s(R)$ & largest $k$ s.t. $R$ is contained in a $k$-$(r,s)$ nucleus \\ \hline
$\kval_2(u)$, $\kval_3(e) $ & Core number of vertex $u$, truss number of edge $e$ \\ \hline
$\cH(K)$ & largest $h$ s.t. at least $h$ numbers in set $K$ are $\ge h$\\ \hline
$\cU$ & update operator, $(\tau: \cR \to \NN) \to (\upf \tau: \cR \to \NN)$\\ \hline
\end{tabular}
\vspace{0ex}
\end{table}

\LinesNumberedHidden
\setlength{\textfloatsep}{0pt}
\begin{algorithm}[!t]
\caption{\textsc{Peeling($G, r, s$)}}
\label{alg:peeling}
\nonl \textbf{Input:} $G$: graph, $r< s$: positive integers \;
\nonl \textbf{Output:} $\kappa_s(\cdot)$: array of $\kappa_s$ indices for $r$-cliques\;
  Enumerate all $r$-cliques in $G$\;
  For every $r$-clique $R$, set $d_s(R)$ ($\cS$-degrees)\;
  Mark every $r$-clique as unprocessed\;
  \For{\meach unprocessed $r$-clique $R$ with minimum $d_s(R)$}{
  	$\kval_s(R) = d_s(R)$\;
    Find set $\cS$ of $s$-cliques containing $R$\label{ln:sk6} \;
    \For{\meach $C \in \cS$}{\label{ln:sk7}
    	\lIf{any $R \subset C$ is processed}{\label{ln:sk8}
    		continue
    	}
    	\For{\meach $r$-clique $R' \subset C$, $R' \neq R$}{
    		\lIf{$d_s(R') > d_s(R)$}{
    			$d_s(R') = d_s(R') - 1$
    		}
    	}
    }
  Mark $R$ as processed\;
  }
  \Return array $\kval_s(\cdot)$\;
\end{algorithm}

Core number of a vertex $u$ is denoted by $\kval_2(u)$ and the truss number of an edge $e$ is denoted by $\kval_3(e)$.
We use the notion of $k$-($r,s)$ nucleus and $\kval_s$-index to introduce our generic theorems and algorithms for \emph{any} $r,s$ values.
The set of $k$-$(r,s)$ nuclei is found by the peeling algorithm~\cite{Sariyuce15} (given in \cref{alg:peeling}).
It is a generalization of the $k$-core and $k$-truss decomposition algorithms, and finds the $\kval_s$ indices of $r$-cliques in non-decreasing order.

The following lemma is standard in the $k$-core literature and we prove the analogue for $k$-$(r,s)$ nucleus.
It is a convenient characterization of the $\kval_s$ indices.

\begin{lemma} \label{lem:char1}
$\forall~R \in \cR(G)$, $\kval_s(R)=max_{\cR(G') \ni R}\delta_{r,s}(G')$, where $G' \subseteq G$.
\end{lemma}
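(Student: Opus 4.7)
The plan is to prove the equality by showing both inequalities $\kappa_s(R) \leq \max_{\cR(G') \ni R} \delta_{r,s}(G')$ and $\kappa_s(R) \geq \max_{\cR(G') \ni R} \delta_{r,s}(G')$, mimicking the well-known characterization of core numbers but adapting carefully to the $\cS$-connectedness requirement built into \cref{def:cl-path}.

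For the easy direction ($\leq$), I would pick as witness $G'$ the $\kappa_s(R)$-$(r,s)$ nucleus that contains $R$, which exists by definition of $\kappa_s(R)$. By \cref{def:cl-path}, every $r$-clique in this subgraph has $\cS$-degree at least $\kappa_s(R)$, so $\delta_{r,s}(G') \geq \kappa_s(R)$, and the maximum over subgraphs is at least $\kappa_s(R)$.

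For the harder direction ($\geq$), I would take any $G' \subseteq G$ with $R \in \cR(G')$ and let $d = \delta_{r,s}(G')$; the goal is to exhibit a $d$-$(r,s)$ nucleus containing $R$, which yields $\kappa_s(R) \geq d$. I would define $H$ to be the subgraph formed by the union of all $s$-cliques of $G'$ that contain some $r$-clique $\cS$-connected to $R$ inside $G'$. The key observation, which I expect to be the main obstacle, is that $H$ is ``closed'' under $\cS$-participation: if an $s$-clique $S$ lies in $H$ through some $r$-subset $R'$, then every other $r$-subset $R'' \subset S$ is automatically $\cS$-connected to $R'$ via $S$, hence to $R$. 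This closure lets me argue that for every $r$-clique $R'$ in $\cR(H)$, the $s$-cliques of $G'$ that witness $d_s|_{G'}(R') \geq d$ are all retained in $H$, so $d_s|_H(R') \geq d$. Moreover, by construction every pair of $r$-cliques in $\cR(H)$ is $\cS$-connected through $R$.

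Once $H$ satisfies both clauses of \cref{def:cl-path} with parameter $d$ and contains $R$, it must be contained in the maximal such subgraph, which is a $d$-$(r,s)$ nucleus containing $R$, giving $\kappa_s(R) \geq d$ and completing the proof. The only subtle point, beyond the closure property above, is making sure the $r$-cliques used to define $H$ are exactly $\cR(H)$ and not a strict subset — I would handle this by defining $H$ through its $s$-clique set rather than through vertex or edge sets, matching the convention in \cref{def:cl-path} that a nucleus is ``the edges in the maximal union $\cS$ of $s$-cliques.''
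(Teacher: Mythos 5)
Your proof is correct and follows essentially the same route as the paper's: one direction uses the $\kappa_s(R)$-$(r,s)$ nucleus itself as the witness subgraph, and the other shows that any $G' \ni R$ with minimum $\cS$-degree $d$ sits inside a maximal $\cS$-connected $d$-$(r,s)$ nucleus containing $R$, whence $\kappa_s(R) \geq d$. The paper phrases this second direction as a contradiction and compresses your explicit closure/component construction of $H$ into a brief ``WLOG $T'$ is connected, else take the component containing $R$'' remark, but the underlying argument is the same.
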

\vspace{-3ex}
\begin{proof} Let $T$ be the $\kval_s(R)$-$(r,s)$ nucleus containing $R$.
By definition, $\delta_{r,s}(T) = \kval_s(R)$, so $\max_{G'} \delta_{r,s}(G') \geq \kval_s(R)$.
Assume the contrary that there exists some subgraph $T' \ni R$ such that $\delta_{r,s}(T') > \kval_s(R)$ (WLOG, we can assume $T'$ is connected; otherwise, we denote $T'$ to be the component containing $R$).
There must exist some maximal connected $T'' \supseteq T'$ that is a $\delta_{r,s}(T')$-nucleus.
This would imply that $\kval_s(R) \geq \delta_{r,s}(T') > \kval_s(R)$, a contradiction.
\end{proof}
\vspace{-2ex}

\subsection{{\large \boldmath{\lowercase{$h$}}}-index computation}
\noindent The main idea in our work is the iterative $h$-index computation on the $\cS$-degrees of $r$-cliques.
$h$-index metric is introduced to measure the impact and productivity of researchers by the citation counts~\cite{Hirsch05}.
A researcher has an $h$-index of $k$ if she has at least $k$ papers and each paper is cited at least $k$ times such that there is no $k'>k$ that satisfies these conditions.
We define the function $\cH$ to compute the $h$-index as follows:

\vspace{-1ex}
\begin{defn} \label{def:H}
Given a set $K$ of natural numbers, $\cH(K)$ is the largest $k \in \NN$ such that $\geq k$ elements of $K$ are $\geq k$.
\end{defn}
\vspace{-1ex}
Core number of a vertex can be defined as the largest $k$ such that it has at least $k$ neighbors whose core numbers are also at least $k$.  In the following section, we formalize this observation, and  build on it to design algorithms to compute not only core decompositions but also truss or nucleus decomposition for any $r$ and $s$ values.
\section{From the {\large {\lowercase{$\mathbf{h}$}}}-index to the {\lowercase {\Large {$\mathbf{\kval_s}$}}}-index} \label{sec:hindex}
\noindent Our main theoretical contribution is two-fold.
First, we introduce a generic formulation to compute the $k$-$(r,s)$ nucleus by an iterated $h$-index computation on $r$-cliques.
Secondly, we prove convergence bounds on the number of iterations.

We define the \emph{update operator} $\upf$.
This takes a function $\tau: \cR \to \NN$ and returns another function $\upf \tau: \cR \to \NN$, where $\cR$ is the set of $r$-cliques in the graph.

\begin{defn}
The update $\upf$ is applied on the $r$-cliques in a graph $G$ such that for each $r$-clique $R \in \cR(G)$:
\begin{compactenum}[\leftmargin=-5ex]
\item \label{step:one} \hspace{-1.5ex}1. For each $s$-clique $S\supset R$, set $\rho(S,R) = \min_{R' \subset S, R' \neq R} \tau(R')$.
\item \label{step:two} \hspace{-1.5ex}2. Set $\upf \tau(R) = \cH(\{\rho(S,R)\}_{S \supset R})$.
\end{compactenum}
\label{def:operator}
\end{defn}

Observe that $\upf \tau$ can be computed in parallel over all $r$-cliques in $\cR(G)$.
It is convenient to think of the $\cS$-degrees ($d_s$) and $\kval_s$ indices as functions $\cR \to \NN$.
We initialize $\tau_0 = d_s$, and set $\tau_{t+1} = \upf \tau_t$.

The results of Lu et al.~\cite{Lu16} prove that, for the $k$-core case ($r=1, s=2$), there is a sufficiently large $t$ such that $\tau_{t} = \kval_2$ (core number).
We generalize this result for \emph{any} nucleus decomposition.
Moreover, we prove the first convergence bounds for $\upf$.

The core idea of~\cite{Lu16} is to prove that the $\tau_t(\cdot)$ values
never increase (monotonicity) and are always lower bounded by core numbers.
We generalize their proof for any $(r,s)$ nucleus decomposition.

\vspace{-1ex}
\begin{theorem} \label{thm:prop} For all $t$ and all $r$-cliques $R$:
\begin{compactitem}
    \item (Monotonicity) $\tau_{t+1}(R) \leq \tau_t(R)$.
    \item (Lower bound) $\tau_t(R) \geq \kval_s(R)$.
\end{compactitem}
\end{theorem}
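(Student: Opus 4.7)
The plan is to prove both claims simultaneously by induction on $t$, since the lower bound step will rely on monotonicity being available from the inductive hypothesis in a uniform way, and both claims share the same structural observation: the operator $\upf$ is monotone in its input function, and $\cH$ is monotone in its multiset argument.

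For the \textbf{base case} $t = 0$, monotonicity amounts to showing $\tau_1(R) = \upf\tau_0(R) \le \tau_0(R) = d_s(R)$. This is immediate from \cref{def:H}: the multiset $\{\rho_0(S,R)\}_{S \supset R}$ has exactly $d_s(R)$ entries, and the $h$-index of a multiset of size $n$ is at most $n$. The lower bound at $t=0$ just says $d_s(R) \ge \kappa_s(R)$, which follows from \cref{lem:char1} applied to the whole graph $G$, since $R$ lies in some $\kappa_s(R)$-nucleus where every $r$-clique has $\cS$-degree at least $\kappa_s(R)$, and $\cS$-degrees only grow when we enlarge the ambient graph.

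For the \textbf{inductive step on monotonicity}, suppose $\tau_t(R') \le \tau_{t-1}(R')$ for every $r$-clique $R'$. Then for each $s$-clique $S \supset R$, taking a minimum over a pointwise-dominated family gives $\rho_t(S,R) \le \rho_{t-1}(S,R)$. Since $\cH$ is monotone under pointwise decrease of its argument multiset (any $k$ certified by the smaller multiset is also certified by the larger), we conclude $\tau_{t+1}(R) \le \tau_t(R)$. The only delicate point is checking that $\cH$ indeed respects pointwise domination, which is a short argument directly from \cref{def:H} and worth including as an inline remark.

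For the \textbf{inductive step on the lower bound}, I would argue as follows. Let $T$ be the $\kappa_s(R)$-$(r,s)$ nucleus containing $R$, so by definition every $r$-clique $R' \in \cR(T)$ satisfies $\kappa_s(R') \ge \kappa_s(R)$ (since $R'$ itself lies in the $\kappa_s(R)$-nucleus $T$, so its own $\kappa_s$-index is at least $\kappa_s(R)$). In particular, $R$ is contained in at least $\kappa_s(R)$ many $s$-cliques $S \subset T$, and for each such $S$ every other $r$-clique $R' \subset S$ lies in $T$, hence, by the inductive hypothesis, $\tau_t(R') \ge \kappa_s(R') \ge \kappa_s(R)$. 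Therefore $\rho_{t+1}(S, R) = \min_{R' \subset S,\, R' \ne R} \tau_t(R') \ge \kappa_s(R)$ for at least $\kappa_s(R)$ many $s$-cliques $S \supset R$, and \cref{def:H} then yields $\tau_{t+1}(R) = \upf\tau_t(R) \ge \kappa_s(R)$.

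The \textbf{main obstacle} I anticipate is not really technical but notational: one must be careful that the $\cS$-degree $d_s|_T(R)$ inside the nucleus $T$, the pointwise monotonicity of $\min$, and the monotonicity of $\cH$ are all invoked cleanly, so that the chain ``$R$ has $\ge \kappa_s(R)$ good $s$-cliques, each of which contributes a $\rho$-value $\ge \kappa_s(R)$, hence $\cH(\cdot) \ge \kappa_s(R)$'' is transparent. Once this bookkeeping is set up, both parts of the theorem fall out of the same induction, and no appeal beyond \cref{lem:char1} and the two definitions (\cref{def:H}, \cref{def:operator}) is needed.
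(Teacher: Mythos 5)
Your proof is correct, and the monotonicity half is identical to the paper's. The lower-bound half reaches the same conclusion by a slightly different induction: the paper proves the stronger, subgraph-parametrized statement that $\tau_t(R) \ge \delta_{r,s}(G')$ for every $G' \subseteq G$ with $R \in \cR(G')$, and only afterwards invokes \cref{lem:char1} to convert the maximum over subgraphs into $\kval_s(R)$; you instead induct directly on $\tau_t \ge \kval_s$, fixing the witnessing $\kval_s(R)$-nucleus $T$ and using the immediate fact that $\kval_s(R') \ge \kval_s(R)$ for every $R' \in \cR(T)$ to chain the inductive hypothesis through the neighbors. The two arguments are interchangeable --- your observation that $\kval_s$ is bounded below by $\kval_s(R)$ on $\cR(T)$ plays exactly the role the paper assigns to the hypothesis that $\tau_t$ is bounded below by $\delta_{r,s}(G')$ on $\cR(G')$ --- but yours keeps the induction statement equal to the theorem itself and avoids \cref{lem:char1} in the inductive step, at the cost of needing to note explicitly that every $r$-clique contained in an $s$-clique of $T$ again lies in $\cR(T)$ (which holds, and which your argument does use). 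One cosmetic point: you motivate the joint induction by saying the lower bound ``will rely on monotonicity,'' but neither your argument nor the paper's actually uses monotonicity in the lower-bound step, so the two inductions can be run independently.
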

\vspace{-3ex}

\begin{proof} (Monotonicity) We prove by induction on $t$.
Consider the base case $t=0$. Note that for all $R$, $\tau_1(R) = \upf d_s(R) \leq d_s(R)$.
This is because in \Step{two}, the $\cH$ operator acts on a set of $d_s(R)$,
and this is largest possible value it can return.
Now for induction (assume the property is true up to $t$).
Fix an $r$-clique $R$, and $s$-clique $S \supset R$. 
For $\tau_t(R)$, one computes the value $\rho(S,R) = \min_{R' \subset S, R' \neq R} \tau_{t-1}(R)$.
By the induction hypothesis, the values $\rho(S,R)$ computed
for $\tau_{t+1}$ is at most the value computed for $\tau_t$.
Note that the $\cH$ operator is monotone; if one decreases values in a set $K$,
then $\cH(K)$ cannot increase. Since the $\rho$ values
cannot increase, $\tau_{t+1}(R) \leq \tau_t(R)$.

(Lower bound) We will prove that for any $G' \subseteq G$, $\cR(G') \ni R$, $\tau_t(R) \geq \delta_{r,s}(G')$.
\Lem{char1} completes the proof.

We prove the above by induction on $t$. For the base case,
$\tau_0(R) = d_s|_G(R) \geq d_s|_{G'}(R) \geq \delta_{r,s}(G')$. Now for induction.
By the induction hypothesis, $\forall~R \in \cR(G'), \tau_t(R) \geq \delta_{r,s}(G')$.
Consider the computation of $\tau_{t+1}(R)$, and the values
$\rho(S,R)$ computed in \Step{one}. For every $s$-clique $S$,
note that $\rho(S,R) = \min_{R' \subset S, R' \neq R}\tau_t(R')$.
By the induction hypothesis, this is at least $\delta_{r,s}(G')$.
By definition of $\delta_{r,s}(G')$, $d_s|_{G'}(R) \geq \delta_{r,s}(G')$.
Thus, in \Step{two}, $\cH$ returns at least $\delta_{r,s}(G')$.
\end{proof}

Note that this is an intermediate result and we will present our final result in \cref{lem:final} at the end.  

\subsection{Convergence bounds by the degree levels} \label{sec:level}
\noindent A trivial upper bound for convergence is the number of $r$-cliques in the graph, $|\cR(G)|$, because after $n$ iterations $n$ $r$-cliques with the lowest $\kval_s$ indices will converge.
We present a tighter bound for convergence.
Our main insight is to define the \emph{degree levels} of $r$-cliques, and relate these to the convergence of $\tau_t$ to $\kval_s$.
We prove that the $\kval_s$ indices in the $i$-th level converge within $i$ iterations of the update operation.
This gives quantitative bounds on the convergence.\\

\vspace{-3ex}
\begin{defn} \label{def:ring}
For a graph $G$,
\begin{compactitem}
\item $\cC(G)=\cR(G) \cup \cS(G)$, i.e., set of all $r$-cliques and $s$-cliques.
\item $S \in \cC(G)$ if and only if $R \in \cC(G), \forall~R \subset S$.
\item If $R$ is removed from $\cC(G)$, all $S \supset R$ are also removed from $\cC(G)$.
\item \textbf{\emph{Degree levels}} are defined recursively as follows.
The $i$-th level is the set $L_i$.
\begin{compactitem}[\leftmargin=0ex$-$]
    \item $L_0$ is the set of $r$-cliques that has the minimum $\sS$-degree in $\cC$.
    \item $L_i$ is the set of $r$-cliques that has the minimum $\sS$-degree in $\cC \setminus \bigcup_{j < i} L_j$.
\end{compactitem}
\end{compactitem}
\end{defn}

\cref{fig:toydeglevel} shows the degree levels for $k$-core decomposition on a toy graph. 
We first prove the $\kval_s$ indices cannot decrease as the level increases.
The following proof is closely related to the fact the minimum degree removal algorithm (peeling) finds all cores/trusses/nuclei.

\begin{theorem} \label{thm:mindeg}
Let $i \leq j$. For any $R_i \in L_i$ and $R_j \in L_j$, $\kval_s(R_i) \leq \kval_s(R_j)$.
\end{theorem}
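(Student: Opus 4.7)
The plan is to invoke \cref{lem:char1}, which characterizes $\kval_s(R) = \max_{G' \ni R} \delta_{r,s}(G')$, and to exhibit a single subgraph that simultaneously witnesses $\kval_s(R_j) \geq \kval_s(R_i)$. Let $G'_k$ denote the residual structure $\cC \setminus \bigcup_{\ell<k} L_\ell$ and write $\mu_k = \delta_{r,s}(G'_k)$. First I would apply \cref{lem:char1} to pick a subgraph $H \subseteq G$ containing $R_i$ with $\delta_{r,s}(H) = \kval_s(R_i)$, and then locate the earliest level $k^* = \min\{\ell : \cR(H) \cap L_\ell \neq \emptyset\}$ at which some $r$-clique of $H$ gets peeled. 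Because $R_i \in \cR(H)$ sits at level $i$, this automatically gives $k^* \leq i \leq j$.

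The heart of the argument is to show that neither any $r$-clique nor any $s$-clique of $H$ has been removed by the time we begin processing level $k^*$. For $r$-cliques this is immediate from the minimality of $k^*$. For $s$-cliques, since $H$ is an honest subgraph of $G$, every $r$-sub-clique $R' \subset S$ of an $s$-clique $S \in \cS(H)$ itself lies in $\cR(H)$ and therefore has level at least $k^*$; hence $S$ survives into $G'_{k^*}$. Picking any $R^* \in \cR(H) \cap L_{k^*}$, this yields $d_s|_{G'_{k^*}}(R^*) \geq d_s|_H(R^*) \geq \delta_{r,s}(H) = \kval_s(R_i)$. Since $R^* \in L_{k^*}$ achieves the minimum $\sS$-degree in $G'_{k^*}$, I conclude $\mu_{k^*} \geq \kval_s(R_i)$.

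The theorem then follows from a second application of \cref{lem:char1}. Because $k^* \leq j$, the $r$-clique $R_j$ still belongs to $\cR(G'_{k^*})$, so \cref{lem:char1} with $G' = G'_{k^*}$ gives $\kval_s(R_j) \geq \delta_{r,s}(G'_{k^*}) = \mu_{k^*} \geq \kval_s(R_i)$, which is the desired inequality.

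The main technical hurdle I foresee is the $s$-clique preservation step. Naively, peeling an $r$-clique removes all $s$-cliques containing it, so one must rule out that some $S \in \cS(H)$ is destroyed prematurely by an $r$-sub-clique of level $<k^*$. This is where having $H$ be a genuine subgraph, rather than an arbitrary combinatorial subset, is essential: it guarantees $\cR(H)$ contains every $r$-sub-clique of every $S \in \cS(H)$, which in turn forces every such sub-clique to have level $\geq k^*$. With that observation in place, the rest of the argument reduces to tracking the definitions of $L_i$ and $G'_k$.
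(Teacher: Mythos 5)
Your proof is correct and follows essentially the same route as the paper's: both pick a witness subgraph for $\kval_s(R_i)$ via \cref{lem:char1} (the paper uses the $\kval_s(R_i)$-nucleus $T$), locate the earliest degree level met by that subgraph, and lower-bound $\kval_s(R_j)$ by the minimum $\sS$-degree of the residual structure at that level. The only differences are cosmetic: you merge the paper's two cases ($\cR(T)$ entirely in levels $\geq i$ versus not) into one by taking $k^*$ as the minimum level outright, and you spell out the $s$-clique preservation step that the paper leaves implicit in the inequality $d_s|_Q(R') \geq d_s|_T(R')$.
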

\begin{proof} Let $L' = \bigcup_{r \geq i} L_r$, the union of all levels $i$ and above, and $G'$ is the graph such that $L'=\cR(G')$.
By definition of the levels, $d_s|_{G'}(R_i) = \delta_{r,s}(G')$ and $d_s|_{G'}(R_j) \geq d_s|_{G'}(R_i)$. 
There exists some $\kval_s(R_i)$-nucleus $T$ containing $R_i$.
We split into two cases.

\textbf{Case 1:} $\cR(T) \subseteq L'$.
Thus, $\kval_s(R_i) = \delta_{r,s}(T) \leq \delta_{r,s}(G')$ $= d_s|_{G'}(R_i)$.
Note that $\kval_s(R_j) = \min_{P \ni R_j} \delta_{r,s}(P)$, so $\kval_s(R_j) \geq \delta_{r,s}(G')$. Hence, $\kval_s(R_i) \leq \kval_s(R_j)$.

\textbf{Case 2:} $\cR(T) \setminus L' \neq \emptyset$.
Thus, there exists some $r$-clique $R' \in \cR(T) \cap L_b$, where $b < i$.
Choose the $R'$ that minimizes this value of $b$.
Since $T$ is a $\kval_s(R_i)$-nucleus, $d_s|_T(R') \geq \kval_s(R_i)$.
Consider $M = \bigcup_{r \geq b} L_r$. Note that $\cR(T) \subseteq M$,
since we chose $R'$ to minimize $b$.
Let $Q$ is the graph such that $M=\cR(Q)$.
We have $d_s|_Q(R') \geq d_s|_T(R') \geq \kval_s(R_i)$.
Since $R' \in L_b$, $d_s|_Q(R') = \delta_{r,s}(Q)$.
Since $j > b$ and $R_j \in M$, $\kval_s(R_j) \geq \delta_{r,s}(Q)$.
Combining the above, we deduce $\kval_s(R_i) \leq \kval_s(R_j)$.
\end{proof}

\begin{figure}[!t]
\centering
\vspace{1ex}
\includegraphics[width=0.7\linewidth]{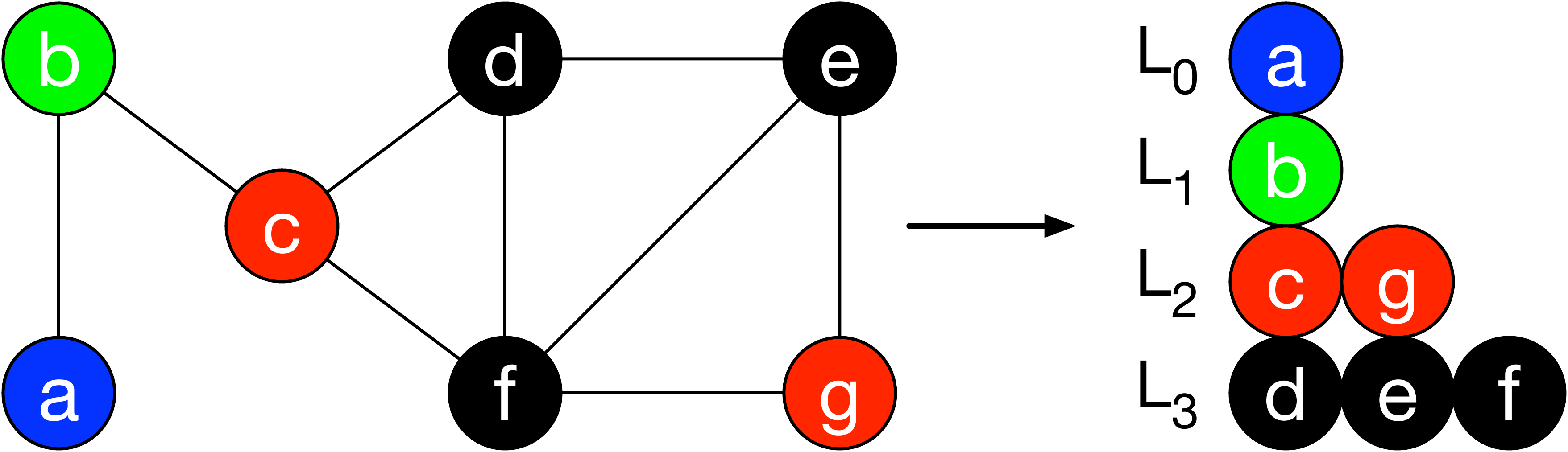}
\vspace{0ex}
\caption{\textit{Illustration of degree levels for the $k$-core decomposition. $L_0=\{a\}$ since it has the minimum degree initially and the only such vertex. Its removal makes the $b$ with the minimum degree, so $L_1=\{b\}$. After removing vertex $b$, there are two vertices with the least degree; $L_2=\{c,g\}$. Lastly, removing those leaves three vertices with the same degree and $L_3=\{d,e,f\}$.}}
\vspace{1ex}
\label{fig:toydeglevel}
\end{figure}

The main convergence theorem is the following. As explained earlier,
it shows that the $i$-th level converges within $i$ iterations.

\begin{theorem} \label{thm:conv} Fix level $L_i$.
For all $t \geq i$ and $R \in L_i$, $\tau_t(R) = \kval_s(R)$.
\end{theorem}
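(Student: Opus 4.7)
The plan is strong induction on the level index $i$, combining the monotonicity and lower bound from \Thm{prop} with the level-monotonicity of $\kval_s$ given by \Thm{mindeg}. Throughout I write $G_i$ for the subgraph whose $r$-cliques are $\bigcup_{j \geq i} L_j$, so that any $R \in L_i$ satisfies $d_s|_{G_i}(R) = \delta_{r,s}(G_i)$ by the definition of degree levels, and $\delta_{r,s}(G_i) \leq \kval_s(R)$ by \Lem{char1} applied with $G' = G_i$. These two facts will be the workhorses of the argument.

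The base case $i = 0$ is essentially immediate from \Thm{prop}: any $R \in L_0$ has $\tau_0(R) = d_s|_G(R) = \delta_{r,s}(G)$, while \Lem{char1} with $G' = G$ gives $\kval_s(R) \geq \delta_{r,s}(G)$. The chain $\kval_s(R) \leq \tau_t(R) \leq \tau_0(R) = \delta_{r,s}(G) \leq \kval_s(R)$ (from the lower bound, monotonicity, and the identity just noted) then collapses to equalities for every $t \geq 0$.

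For the inductive step, fix $i \geq 1$ and $R \in L_i$ and assume the claim for all $j < i$. Since the lower bound and monotonicity already yield $\kval_s(R) \leq \tau_t(R) \leq \tau_i(R)$ for $t \geq i$, it suffices to prove $\tau_i(R) \leq \kval_s(R)$. I would split the $s$-cliques $S \supset R$ into two types: type (a), where every other $r$-clique $R' \subset S$ lies in $G_i$; and type (b), where some $R' \subset S$ with $R' \neq R$ lies in $L_j$ for some $j < i$. The type (a) cliques are exactly the $s$-cliques of $G_i$ containing $R$, so their count is $d_s|_{G_i}(R) = \delta_{r,s}(G_i) \leq \kval_s(R)$. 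For a type (b) clique, the inductive hypothesis applied at $t = i-1 \geq j$ gives $\tau_{i-1}(R') = \kval_s(R')$, and \Thm{mindeg} gives $\kval_s(R') \leq \kval_s(R)$; hence the minimum $\rho(S,R)$ from \Step{one} satisfies $\rho(S, R) \leq \tau_{i-1}(R') \leq \kval_s(R)$. Only type (a) cliques can therefore produce a $\rho$-value strictly above $\kval_s(R)$, and there are at most $\kval_s(R)$ of them. So if the $h$-index of the $\rho$-values exceeded $\kval_s(R)$, \Def{H} would force strictly more than $\kval_s(R)$ values to be strictly greater than $\kval_s(R)$, which is impossible. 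Thus $\tau_i(R) = \cH(\{\rho(S,R)\}_{S \supset R}) \leq \kval_s(R)$, closing the induction.

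The delicate point is the type (b) bookkeeping: the inductive hypothesis must be invoked at the single time $t = i-1$, which is exactly why the induction is on the level index rather than on $t$, and the inequality $\kval_s(R') \leq \kval_s(R)$ that dominates those $\rho$-values is the nontrivial content of \Thm{mindeg}. The type (a) count and the final $\cH$ bound are then routine bookkeeping on top of the definitions and \Lem{char1}.
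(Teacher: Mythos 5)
Your proof is correct and follows essentially the same route as the paper: the paper's induction on the level index likewise partitions the $s$-cliques containing $R$ into a ``high'' set contained in $\bigcup_{j \geq i} L_j$ (your type (a), counted via $\delta_{r,s}$ and \Lem{char1}) and a ``low'' set meeting an earlier level (your type (b), whose $\rho$-values are capped by the inductive hypothesis together with \cref{thm:mindeg}), and then closes with the same $\cH$-bound and the lower bound of \cref{thm:prop}. The only cosmetic difference is that you explicitly reduce to the single time step $t=i$ via monotonicity before bounding, whereas the paper argues for all $t \geq i$ directly.
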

\begin{proof} We prove by induction on $i$.
For the base case $i\!=\!0$; note that for any $R$
of minimum $\cS$-degree in $G$, $\kval_s(R) \!=\! d_s|_G(R) \!= \!\tau_0(R)$.
For induction, assume the theorem is true up to level $i$.
Thus, for $t \geq i$ and $\forall~R \in \bigcup_{j \leq i} L_j$, $\tau_t(R) \!= \!\kval_s(R)$.
Select arbitrary $R_a\in L_{i+1}$,
and set $L' \!=\! \bigcup_{j \geq i+1} L_j$.

We partition the $s$-cliques containing $R_a$ into the ``low'' set $\cS_\ell$
and ``high'' set $\cS_h$. $s$-cliques in $\cS_\ell$ contain
some $r$-clique outside $L'$, and those in $\cS_h$ are contained
in $L'$. For every $s$-clique $S \in \cS_\ell$, there is a $R_b \subset S$
such that $R_b \in L_{k}$ for $k \!\leq \! i$. 
By the inductive hypothesis, $\tau_t(R_b) \!= \!\kval_s(R_b)$.
By \cref{thm:mindeg} applied to $R_b \in L_{k}$ and $R_a \!\in\! L_{i+1}$,
$\kval_s(R_b) \!\leq \!\kval_s(R_a)$. 

Now we focus on the computation of $\tau_{t+1}(R_a)$,
which starts with computing $\rho(S,R_a)$ in \Step{one} of \cref{def:operator}.
For every $S \in \cS_\ell$, by the previous argument,
there is some $r$-clique $R_b \subset S$, $R_b \neq R_a$, such
that $\tau_t(R_b) \! \leq \!\kval_s(R_a)$. Thus,
$\forall\,S\! \in\! \cS_\ell$, $\rho(S, R_a) \!\leq \!\kval_s(R_a)$.
This crucially uses the \emph{min} in the setting
of $\rho(S,R_a)$, and is a key insight into the generalization
of iterated $\cH$-indices for any nucleus decomposition.

The number of edges in $\cS_h$ is exactly 
$d_s|_{G'}(R_a) = \delta_{r,s}(G')$. Applying \Lem{char1}
to $R_a \in L'$, we deduce $\kval_s(R_a) \geq d_s|_{G'}(R_a)$.
All in all, for all $S \in \cS_\ell$,
$\rho(S,R_a)$ is at most $\kval_s(R_a)$. On the other hand,
there are at most $\kval_s(R_a)$ $s$-cliques in $\cS_h$.
The application of the $\cH$ function in \Step{two}
yields $\tau_{t+1}(R_a) \leq \kval_s(R_a)$.
But the lower bound of \cref{thm:prop} asserts $\tau_{t+1}(R_a) \geq \kval_s(R_a)$, and hence, these are equal. This completes the induction.
\end{proof}
\vspace{-1ex}

We have the following lemma to show that convergence is guaranteed in a finite number of iterations.
\begin{lemma}\label{lem:final}
Given a graph $G$ let $l$  be the maximum $i$, such that $L_l\neq \emptyset$  and  $\tau_l(R) \geq \kval_s(R)$ for all $r$-cliques (e.g., $\tau_0 = d_s$) and set $\tau_{t+1} = \upf \tau_t$.    For some $t\leq l$,
$\tau_t(R) = \kval_s(R)$, for all  $r$-cliques. 
\end{lemma}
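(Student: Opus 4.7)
The plan is to derive this lemma as a direct corollary of \cref{thm:conv}, using the fact that the degree-level decomposition partitions $\cR(G)$. The core observation is that once we identify $l$ as the largest index with a nonempty level, setting $t=l$ in \cref{thm:conv} simultaneously forces convergence for every $r$-clique, since each $r$-clique lives in some level $L_i$ with $i \leq l$.

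First I would verify that $\{L_0, L_1, \ldots, L_l\}$ is a well-defined partition of $\cR(G)$. This follows from the recursive construction in \cref{def:ring}: since $\cR(G)$ is finite and each nonempty $L_i$ contains at least the minimum-$\cS$-degree $r$-clique(s) in the remaining structure, the peeling exhausts $\cR(G)$ after finitely many rounds, and the levels are pairwise disjoint by construction. Hence $l$ is well-defined and $\bigcup_{i \leq l} L_i = \cR(G)$.

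Next I would invoke \cref{thm:conv} at $t = l$. For any $R \in \cR(G)$, there exists a unique $i \leq l$ with $R \in L_i$, and \cref{thm:conv} guarantees $\tau_t(R) = \kval_s(R)$ whenever $t \geq i$. Applying this with $t = l \geq i$ gives $\tau_l(R) = \kval_s(R)$ uniformly over $\cR(G)$, which is the conclusion (taking $t = l$ in the lemma's statement). The monotonicity clause of \cref{thm:prop} then ensures the equality persists for all $t \geq l$, so convergence is in fact stable once it is reached.

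There is essentially no obstacle here; the heavy lifting has already been done by \cref{thm:prop} (for the lower bound $\tau_t \geq \kval_s$ and for monotonicity) and by \cref{thm:conv} (which quantifies how many iterations each level requires). The only mildly subtle point to spell out is the partition argument — specifically that every $r$-clique eventually gets peeled and therefore ends up in some $L_i$ with $i \leq l$ — which is immediate from the finiteness of $\cR(G)$ and the fact that at each step of \cref{def:ring} the minimum-$\cS$-degree set is nonempty whenever $\cC$ still contains an $r$-clique.
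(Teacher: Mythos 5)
Your proof is correct and follows exactly the route the paper intends: the lemma is stated immediately after \cref{thm:conv} as its direct consequence, and your observation that the degree levels $L_0,\ldots,L_l$ partition $\cR(G)$ (so that taking $t=l$ in \cref{thm:conv} covers every $r$-clique) is precisely the omitted argument. The added remark that monotonicity from \cref{thm:prop} makes the convergence stable is a harmless bonus beyond what the statement requires.
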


\section{Local algorithms}

\noindent We introduce generalized local algorithms to find the $\kval_s$ indices of $r$-cliques for any $(r,s)$ nucleus decomposition.
For each $r$-clique, we iteratively apply $h$-index computation.
Our local algorithms are parallel thanks to the independent nature of the $h$-index computations.
We also explore time and quality trade-offs by using the iterative nature.
We first present the deterministic synchronous algorithm which does not depend on the order of processing the $r$-cliques.
It implements the $\upf$ operator in \cref{def:operator}.
Then we adapt our algorithm to work in an asynchronous manner that converges faster and uses less space.
For those familiar with linear algebra, the synchronous and asynchronous algorithms are analogous to Jacobi and Gauss-Seidel iterations for iterative solvers.
At the end, we discuss some heuristics and key implementation details for shared-memory parallelism in OpenMP.

\subsection{Synchronous Nucleus Decomposition (\textbf{\SND})}
\noindent We use the update operator $\upf$ to compute the $k$-$(r,s)$ nuclei of a graph $G$ in a synchronous way.
\cref{alg:SND} (\SND) implements the \cref{def:operator} for functions $\tau_0 = d_s$ and $\tau_{t+1} = \upf \tau_t$ to find the $\kval_s$ indices of $r$-cliques in graph $G$.

\SND algorithm iterates until no further updates occur for any $\tau$ index, which means all the $\tau$ indices converged to $\kval_s$.
Computation is performed synchronously on all the $r$-cliques and at each iteration $i$, $\tau_i$ indices are found for all $r$-cliques.
We declare two arrays, $\tau(\cdot)$ and $\tau^p(\cdot)$, to store the indices being computed and the indices that were computed in the previous iteration, respectively (Lines~\ref{sn:1} and~\ref{sn:4}).
$\tau(\cdot)$ are initialized to the $\cS$-degrees of the $r$-cliques since $\tau_0=d_s$ (\cref{sn:2}).
At each iteration, newly computed $\tau(\cdot)$ indices are backed up in $\tau^p(\cdot)$ (\cref{sn:7}), and the new $\tau(\cdot)$ indices are computed.
During the iterative process, convergence is checked by the flag $\mathcal{F}$ (\cref{sn:5}), which is initially set to \true~(\cref{sn:3}) and stays \true~as long as there is an update on a $\tau$ index (Lines~\ref{sn:6},~\ref{sn:13}, and~\ref{sn:14}).

\LinesNumbered
\begin{algorithm}[!t]
\caption{\small {\SND : Synchronous\,Nucleus\,Decomp}}\label{alg:SND}
\nonl \textbf{Input:} $G$: graph, $r, s$: positive integers ($r<s$)\;
\nonl \textbf{Output:} $\kappa_s(\cdot)$: array of $\kappa_s$ indices for $r$-cliques\;
$\tau(\cdot) \leftarrow$ indices $\forall~R \in \cR(G)$~\tcp*{current iteration}\label{sn:1}
$\tau(R) \leftarrow d_s(R)~\forall~R \in \cR(G)$~\tcp*{set to the $\cS$-degrees}\label{sn:2}
$\mathcal{F} \leftarrow \true$~\tcp*{stays \textnormal{\true} if any $\tau(R)$ is updated}\label{sn:3}
\algSND{${\tau^p(\cdot)} \leftarrow$ backup indices $\forall~R \in \cR(G)$~\tcp{prev. iter.}}\label{sn:4}
\While{$\mathcal{F}$} {\label{sn:5}
	$\mathcal{F} \leftarrow \false$\;\label{sn:6}
	\algSND{$\tau^p(R) \leftarrow \tau(R)~\forall~R \in \cR(G)$}\;\label{sn:7}
	\For{\meach $R \in \cR(G)$ in parallel} {\label{sn:8}
		$L \leftarrow$ empty set\;\label{sn:9}
		\For{\meach $s$\textnormal{-clique} $S \supset R$} {\label{sn:10}
			\algSND{${{\color{black}\rho \leftarrow \min_{R' \in \cN_s(R)}}}~\tau^p(R')$}\;\label{sn:11}
			$L$ . add ($\rho$)\;\label{sn:12}
		}
		\algSND{\If{$\tau^p(R) \neq \cH(L)$} {\label{sn:13}
			{{\color{black}$\mathcal{F} \leftarrow \true$}}\label{sn:14}
		}}
		$\tau(R) \leftarrow \cH(L)$\;\label{sn:15}
	}
}
$\kval_s(\cdot) \leftarrow \tau(\cdot)$\;\label{sn:16}
\Return array $\kval_s(\cdot)$\;\label{sn:17}
\end{algorithm}

\begin{figure}[!b]
\centering
\vspace{1ex}
\includegraphics[width=0.8\linewidth]{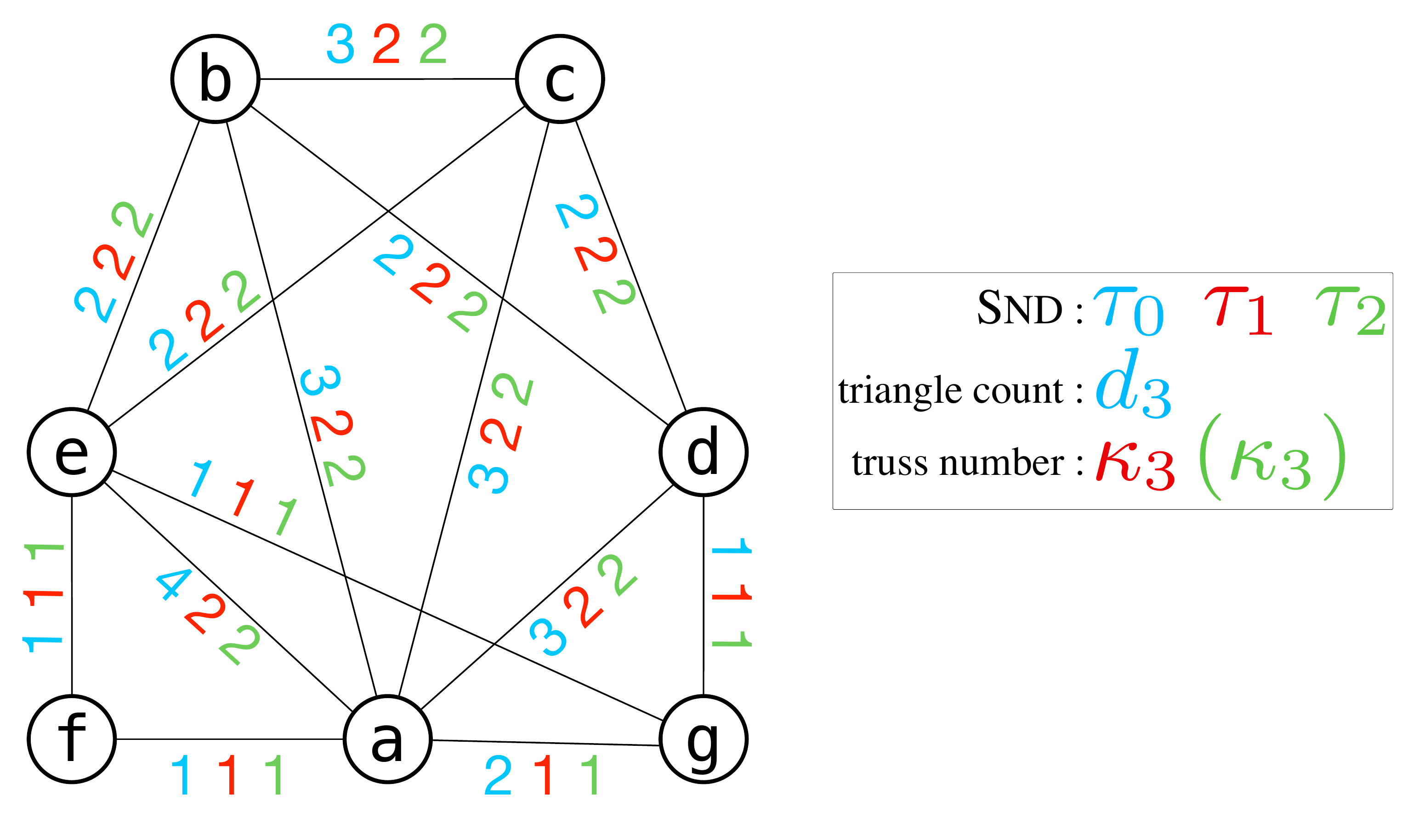}
\vspace{-1ex}
\caption{\it{\SND (\cref{alg:SND}) for the {$k$}-truss decomposition ($r=2,s=3$).
We find the $\kval_3$ indices.
Triangle counts of all the edges are computed ({$d_3$}) and set as their {$\tau_0$} values (blue).
For each edge, we first compute {$\tau_1$} indices (red) based on the {$\tau_0$} indices.
The \emph{bc} edge, for instance, has three triangles and for each of those we find the neighbor with the minimum {$\tau_0$} index and compute the {$h$}-index.
So, {$\tau_1(\emph{bc})=\cH\{min(\tau_0(\emph{ba}),\tau_0(\emph{ca})),min(\tau_0(\emph{bd}), \tau_0(\emph{cd})),min(\tau_0(\emph{be}),$\\$\tau_0(\emph{ce}))\}=\cH\{3, 2, 2\}=2$}.
No updates happen in the second iteration (green), so convergence is obtained in a single iteration.}}
\label{fig:trussEx}
\vspace{-5ex}
\end{figure}

Computation of the new $\tau(\cdot)$ indices for each $r$-clique can be performed in parallel (Lines~\ref{sn:8} to~\ref{sn:15}).
For each $r$-clique $R$, we apply the two step process in the \cref{def:operator}.
First, for each $s$-clique $S$ that contains $R$, we compute the $\rho$ values that is the minimum $\tau^p$ index of an $r$-clique $R' \subset S$ ($R'\neq R$) and collect them in a set $L$ (Lines~\ref{sn:10} to~\ref{sn:12}).
Then, we assign the $h$-index of the set $L$ as the new $\tau$ index of the $r$-clique (\cref{sn:15}). The algorithm continues until there are no updates on the $\tau$ index ` (Lines~\ref{sn:13} and~\ref{sn:14}).
Once the $\tau$ indices converge, we assign them to $\kappa_s$ indices and finish (Lines~\ref{sn:16} and~\ref{sn:17}).

\hl{
\textbf{Time complexity:} \SND algorithm starts with enumerating the $r$-cliques (not shown in the pseudocode) and its runtime is denoted by $RT_r(G)$ (this part can be parallelized as well, but we ignore that for now). Then, each iteration (Lines~\ref{sn:5} to~\ref{sn:15}) is performed $t$ times until convergence where $t$ is the total number of iterations for which we provided bounds in \cref{sec:level}. In each iteration, each $r$-clique $R \in \cR$ is processed once, \emph{which is parallelizable}. Suppose $R$ has vertices
$v_1, v_2, \cdots,v_r$ . We can find all $s$-cliques containing $R$ by looking at all $(s-r)$-tuples in
each of the neighborhoods of $v_i$ (Indeed, it suffices to look at just one such neighborhood). This takes $(\sum_R{\sum_{v\in R}{d(v)^{s-r}}})/p=(\sum_{v \in V}{\sum_{R \ni v}{d(v)^{s-r}}})/p=(\sum_{v \in V}{d_R(v)d(v)^{s-r}})/p$ time if $p$ threads are used for parallelism.
Note that the $h$-index computation can be done incrementally without storing all $\rho$ values in set $L$ (see~\cref{sec:heu}). Overall, the time complexity of \SND using $p$ threads is:

\begin{equation} \label{eq:comp}
{\small
\vspace{-1ex}
\cO\Big(RT_r(G)+t \big(\sum_{v \in V}{d_R(v)d(v)^{s-r}}\big) / p\Big)}
\vspace{-1ex}
\end{equation}

When $t=p$, complexity is same as the sequential peeling algorithm's  (\cref{alg:peeling}) and \SND is work-efficient.

\textbf{Space complexity:} In addition to the space that is needed to store $r$-cliques (taking $\cO(r\,|\cR(G)|)$), we need to store $\tau$ indices for the current and the previous iterations, which takes $\cO(|\cR(G)|)$ space, i.e., number of $r$-cliques. $\rho$ values are not need to be stored in set $L$ since the $h$-index computation can be done incrementally.
So, the total space complexity is $\cO(|\cR(G)|)$ (since $r=\cO(1)$).}

\begin{algorithm}[!t]
\caption{\small {\AND : Asynchronous\,Nucleus\,Decomp.}}
\label{alg:AND}
\nonl \textbf{Input:} $G$: graph, $r, s$: positive integers ($r<s$)\;
\nonl \textbf{Output:} $\kappa_s(\cdot)$: array of $\kappa_s$ indices for $r$-cliques\;
$\tau(\cdot) \leftarrow$ indices $\forall~R \in \cR(G)$~\tcp*{current iteration} \label{an:1}
$\tau(R) \leftarrow d_s(R)~\forall~R \in \cR(G)$~\tcp*{set to the $\cS$-degrees} \label{an:2}
$\mathcal{F} \leftarrow \textsc{true}$~\tcp*{stays \textnormal{\true} if any $\tau(R)$ is updated} \label{an:3}
\algSP{${\tt c}(R) \leftarrow \true~\forall~R \in \cR(G)$}\; \label{an:4}
\While{$\mathcal{F}$} { \label{an:5}
	$\mathcal{F} \leftarrow \false$\; \label{an:6}
	\For{\meach $R \in \cR(G)$ in parallel} { \label{an:7} 
		\algSP{\lIf{${\tt c}(R)$ is \textnormal{\false}}{cont.~{\bf else} ${\tt c}(R) \leftarrow \false$}} \label{an:8}
		$L \leftarrow$ empty set\; \label{an:9}
		\For{\meach $s$\textnormal{-clique} $S \supset R$} { \label{an:10}
			\algAND{${{\color{black}\rho \leftarrow \min_{R' \in \cN_s(R)}}}~\tau(R')$}\;\label{an:11}
			$L$ . add ($\rho$)\; \label{an:12}
		}
		\algAND{\If{$\tau(R) \neq \cH(L)$}{ \label{an:13}
			\algSP{${{\color{black}\mathcal{F} \leftarrow \true,}}~{\tt c}(R) \leftarrow \true$}\; \label{an:14}
			\algSP{\For{\meach $R' \in \cN_s(R)$} { \label{an:15}
			\If{$\cH(L) \le \tau(R')$}{ \label{an:16}
			${\tt c}(R')\leftarrow\true$} \label{an:17} }}
		}}
		$\tau(R) \leftarrow \cH(L)$\; \label{an:18}
	}
}
$\kval_s(\cdot) \leftarrow \tau(\cdot)$\;\label{an:19}
\Return array $\kval_s(\cdot)$\label{an:20}
\end{algorithm}

\cref{fig:trussEx} illustrates the \SND algorithm for $k$-truss decomposition ($r=2,s=3$) on a toy graph, where the participations of edges ($2$-cliques) in triangles ($3$-cliques) are examined.
Triangle counts of all the edges ($d_3$) are computed and set as their $\tau_0$ values (in blue).
For each edge, first we compute $\tau_1$ indices (in red) based on the $\tau_0$ indices (Lines~\ref{sn:5} to~\ref{sn:15}).
For instance, the {\bf ae} edge has four triangles and for each of those we find the neighbor with minimum $\tau_0$ index (Lines~\ref{sn:10} to~\ref{sn:12}); thus $L=\{(min(\tau_0({\bf eb}),\tau_0({\bf ab})),min(\tau_0({\bf ec}),$ $\tau_0({\bf ac})),min(\tau_0({\bf eg})$~$,\tau_0({\bf ag})),min(\tau_0({\bf ef}),\tau_0({\bf af}))\}=\{2, 2, 1,$ $1\}$ and $\tau_1({\bf ae})=\cH(L)=2$ (\cref{sn:15}).
Since the $\tau$ index is updated, we set flag $\cF$ \true~to continue iterations.
In the second iteration ($\tau_2$ indices), no update occurs, i.e., $\tau_2({\bf e})=\tau_1({\bf e})$ for all edges, thus the algorithm termnates.
So, one iteration is enough for the convergence and we have $\kval_3=\tau_1$ for all the edges.

\subsection{Asynchronous\,Nucleus\,Decomposition\,(\textbf{\AND})}\label{sec:async}

\noindent In the \SND algorithm, updates on the $\tau$ indices are synchronous and all the $r$-cliques are processed by using the same snapshot of $\tau$ indices.
However, when an $r$-clique $R$ is being processed in iteration $i$, a neighbor $r$-clique $R'\in\cN_s(R)$ might have already completed its computation in that iteration and updated its $\tau$ index.
By \cref{thm:prop}, we know that the $\tau$ index can only decrease as the algorithm proceeds.
Lower $\tau(R')$ indices in set $L$ can decrease $\cH(L)$, and it can help $\tau(R)$ to converge faster.
So, it is better to use the \textit{up-to-date} $\tau$ indices for faster convergence.
In addition, there would be no need to store the $\tau$ indices computed in the previous iteration, saving $|\cR(G)|$ space.

\begin{figure}[!t]
\centering
\vspace{-1ex}
\captionsetup[subfigure]{captionskip=-3ex}
{\includegraphics[width=0.9\linewidth]{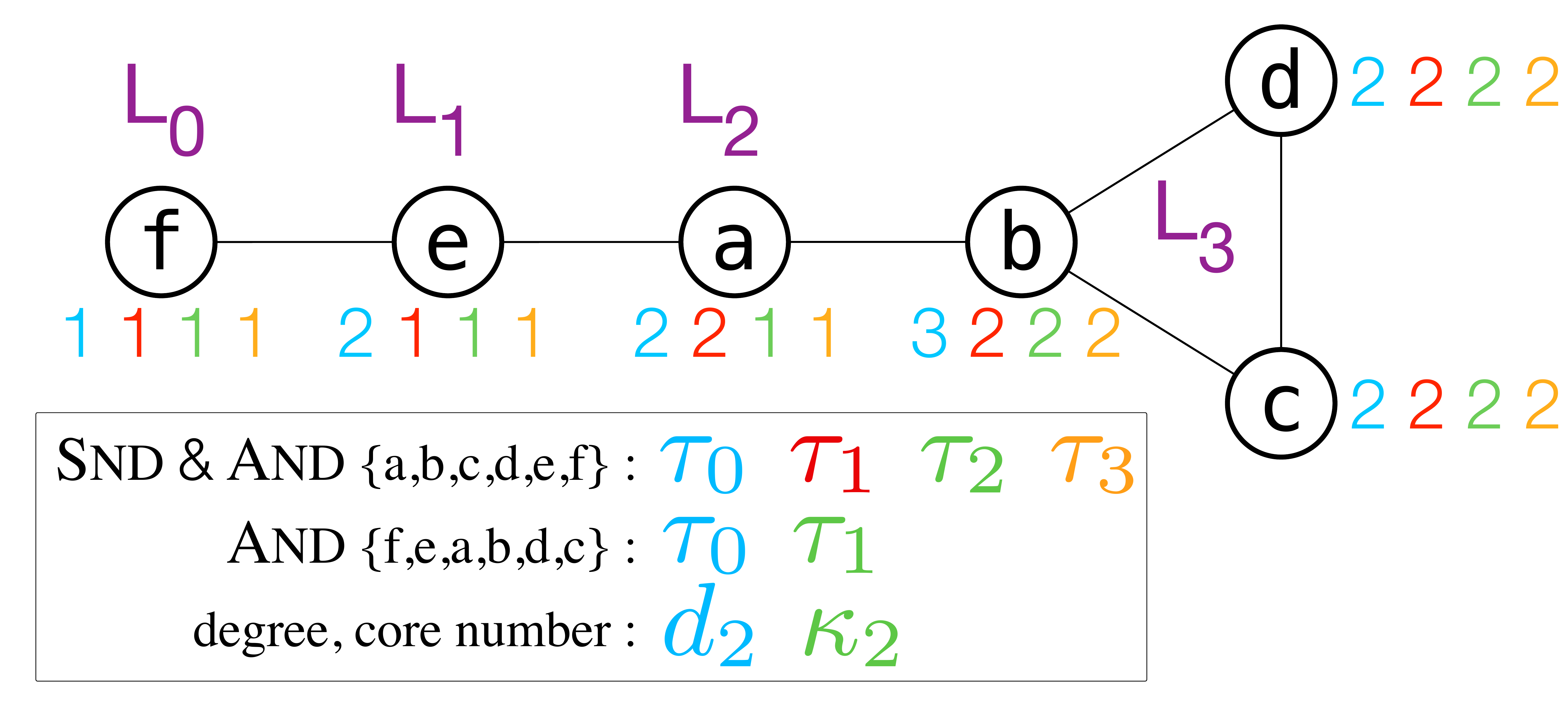}}
\vspace{-1ex}
\caption{\textit{\SND (\cref{alg:SND}) and \AND (\cref{alg:AND}, w/o orange lines) for the {$k$}-core decomposition ({$r=1,s=2$}). We find the {$\kval_2$} indices (core numbers) of vertices (edge is {$2$}-clique).
{$\tau_0$} indices are initialized to the degrees ({$d_2$} in blue). \SND algorithm uses the $\tau_{i-1}$ indices to compute the $\tau_i$ indices and converges in two iterations ({$\tau_1$} in red, {$\tau_2$} in green, {$\tau_3$} in orange). Same happens when we use \AND and follow the \{\emph{a,b,c,d,e,f}\} order to process the vertices.
On the other hand, if we choose the order by degree levels, \{\emph{f,e,a,b,c,d}\}, convergence is obtained in a single iteration.}}
\vspace{1ex}
\label{fig:core1}
\end{figure}

We introduce the \AND algorithm (\cref{alg:AND}) to leverage the up-to-date $\tau$ indices for faster convergence (ignore the orange lines for now).
At each iteration, we propose to use the latest available information in the neighborhood of an $r$-clique.
Removing the green lines in the \SND algorithm and inserting the blue lines in the \AND algorithm are sufficient to switch from synchronous to asynchronous computation.
We do not need to use the $\tau_p(\cdot)$ to back up the indices in the previous iteration anymore, so Lines~\ref{sn:4} and~\ref{sn:7} in \cref{alg:SND} are removed.
Computation is done on the latest $\tau$ indices, so we adjust the Lines~\ref{sn:11} and~\ref{sn:13} in \cref{alg:SND} accordingly, to use the up-to-date $\tau$ indices.

In the same iteration, each $r$-clique can have a different view of the $\tau(\cdot)$ and updates are done \textit{asynchronously} in an arbitrary order.
Number of iterations for convergence depends on the processing order (\cref{an:7} in \cref{alg:AND}) and never more than the \SND algorithm.

\hl{
\textbf{Time complexity:} \textit{The worst case for \AND happens when all the $r$-cliques see the $\tau$ indices of their neighbors that are computed in the previous iteration, which exactly corresponds to the \SND algorithm.} Thus the time complexity of \AND is same as \SND's (\cref{eq:comp}). However, in practice we expect fewer iterations.

\textbf{Space complexity:} The only difference with \SND is that we do not need to store the $\tau$ values in the previous iteration anymore. So, it is still $\cO(|\cR(G)|)$.
}

\cref{fig:core1} illustrates \AND algorithm with two different orderings and the \SND algorithm on the $k$-core case ($r=1,s=2$).
Focus is on vertices ($1$-cliques) and their edge ($2$-clique) counts (degrees).
We start with \SND.
Vertex degrees are set as $\tau_0$ indices (blue).
For each vertex $u$ we compute the $\tau_1(u)=\cH(\{\tau_0(v) : v \in \cN_2(u) \}$ (red), i.e., $h$-index of its neighbors' degrees.
For instance, vertex \textbf{a} has two neighbors, \textbf{e} and \textbf{b}, with degrees $2$ and $3$.
Since $\cH(\{2,3\})=2$, we get $\tau_1($\textbf{a}$)=2$.
For vertex \textbf{b}, we get $\tau_1($\textbf{b}$)=\cH(\{2,2,2\})=2$.
After computing all the $\tau_1$ indices, $\tau$ values of vertices {\bf e} and {\bf b} are updated, thus we compute the $\tau_2$ indices, shown in green.
We observe an update for the vertex \textbf{a}; $\tau_2($\textbf{a}$)=\cH(\{\tau_1($\textbf{e}$), \tau_1($\textbf{b})$\})=\cH(\{1,2\})=1$ and continue computation.
For $\tau_3$ indices (orange), no update is observed which means that $\kval_s=\tau_2$,  and \SND converges in two iterations.
Regarding the \AND algorithm, say we choose to follow the increasing order of degree levels (noted in purple) where $L_0=\{\textbf{f}\}$, $L_1=\{\textbf{e}\}$, $L_2=\{\textbf{a}\}$, $L_3=\{\textbf{b,c,d}\}$.
Computing the $\tau_1$ indices on this order enables us to reach the convergence in a single iteration.
For instance, $\tau_1($\textbf{a}$)=\cH(\{\tau_1($\textbf{e}$), \tau_0($\textbf{b})$\})=\cH(\{1,3\})=1$.
However, if we choose to process the vertices in a different order than the degree levels, say \{\textbf{a,b,c,d,e,f}\}, we have $\tau_1($\textbf{a}$)=\cH(\{\tau_0($\textbf{e}$), \tau_0($\textbf{b})$\})=\cH(\{2,3\})=2$, and need more iteration(s) to converge.
Indeed, \textbf{a} is the only updated vertex.
In the second iteration, we get $\tau_2($\textbf{a}$)=\cH(\{\tau_1($\textbf{e}$), \tau_1($\textbf{b})$\})=\cH(\{1,2\})=1$, an update, thus continue iterating.
Third iteration does not change the $\tau$ indices, so \AND with \{\textbf{a,b,c,d,e,f}\} order converges in two iterations, just as the \SND.\\

\begin{figure}[!t]
\centering
\vspace{-15ex}
\includegraphics[width=\linewidth]{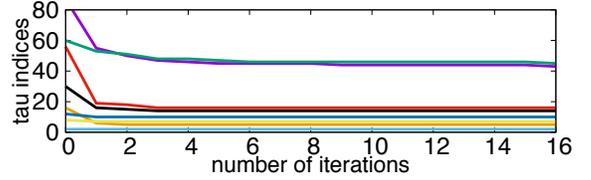}
\vspace{-18ex}
\caption{\textit{Changes in {$\tau$} indices of randomly selected edges in \fb graph during the {$k$}-truss decomposition. Wide plateaus appear during the convergence, especially at the end.
}}
\label{fig:plato}
\vspace{1ex}
\end{figure}

\vspace{-2ex}
\subsection{Avoiding redundant work by notifications}\label{sec:notif}
\noindent \SND and \AND algorithms converge when no $r$-clique updates its $\tau$ index anymore.
Consequently, update on all  $r$-cliques continue even when only one update occurs and we need an extra iteration to detect convergence.
\cref{fig:plato} shows the $\tau$ indices of randomly selected edges in the \fb graph during $k$-truss decomposition ($r=2,s=3$).
There are plenty of wide plateaus where $\tau$ indices stay constant, which implies  redundant computations.
How can we avoid this redundancy? 
Observe that repeating $\tau$ indices or plateaus are not sufficient, because an update can still occur after maintaining the same $\tau$ index for a number of iterations, creating a plateau.
In order to efficiently detect the convergence and skip any plateaus during the computation, we introduce a notification mechanism where an $r$-clique is notified to recompute its $\tau$ index, if any of its neighbors has an update.

Orange lines in \cref{alg:AND} present the notification mechanism added to  \AND . 
${\tt c}(\cdot)$ array is declared in (\cref{an:4}) to track whether an $R \in \cR(G)$ has updated its $\tau$ index.
${\tt c}(R)=\false$ means $R$ did not update its $\tau$ index, it is an idle $r$-clique, and there is no need to recompute its $\tau$ value for that iteration (\cref{an:8}). A non-idle $r$-clique is called active.
Thus, all ${\tt c}(\cdot)$ is set to \true~at the beginning to initiate the computations for all $r$-cliques.
Each $r$-clique marks itself idle at the beginning of an iteration (\cref{an:8}) and waits for an update in a neighbor.
When the $\tau(R)$ is updated, $\tau$ indices of \emph{some} neighbor $r$-cliques in $\cN_s(R)$ might be affected and they should be notified.
If $R' \in \cN_s(R)$, if $\tau(R') < \cH(L)$ (new $\tau(R)$) then $\tau(R') \le \tau(R)$ already in the previous iteration (\cref{thm:prop}), and thus no change can happen in the $h$-index computation.
Therefore, we only need to notify the neighbors that have $\tau$ indices greater than or equal to $\cH(L)$ (Lines~\ref{an:15} to~\ref{an:17}).
This version of our algorithm  requires  an additional $O( |\cR(G)|)$ space for ${\tt c}(\cdot)$ array and does not offer a theoretical improvement in time-complexity. However, avoiding redundant computations  yields faster runtimes in practice. 

\cref{fig:core2} illustrates the notification mechanism on the graph in \cref{fig:core1}, processing  the vertices in the  {\bf  a,b,c,d,e,f} order.
Again, vertex degrees are set as $\tau_0$ indices (blue) and we compute $\tau_1(u)=\cH(\{\tau_0(v) : v \in \cN_2(u) \}$, i.e., $h$-index of its neighbors' degrees, (red) for each vertex $u$.
No update happens for vertex {\bf a} and no vertices are notified.
$\tau({\bf b})$ is updated as $2$ and we check if any neighbors of {\bf b} has a $\tau$ index $\ge 2$ (\cref{an:16}).
All its neighbors have such $\tau$ indices, thus all are notified: {\bf a, c, d}.
Vertices {\bf c} and {\bf d} do not update their $\tau$ indices.
Then, $\tau({\bf e})$ is updated as $1$ and since $\tau_0({\bf e})\ge\tau_1({\bf a})>\tau_1({\bf e})$, vertices {\bf a} and {\bf f} are notified for recomputing its $\tau$ index.
At that instant, vertices {\bf a} and {\bf f} are active.
Next, vertex {\bf f} is processed and does not change its $\tau$ index, so all the vertices except {\bf a} are idle now.
In the second iteration, we only process {\bf a} and compute $\tau_2({\bf a})=\cH\{\tau_1({\bf e}), \tau_1({\bf b})\}=\cH\{1,2\}=1$.
Update in $\boldmath{\tau({\bf a})}$ notifies vertices {\bf b} and {\bf e} since both have $ \geq \tau$ indices.
In the third iteration, we recompute $\tau$ indices for {\bf b} and {\bf e}, but there is no update. So all the vertices become idle, implying  convergence.
Overall, it takes 9 $\tau$ computations  and 3 iterations   for the \AND with notification mechanism, while  24 $\tau$ computations   and 4 iterations are needed  without  the notification mechanism  (\cref{fig:core1}).
So the notification mechanism is helpful to avoid redundant computations.\\

\begin{figure}[!t]
\centering
\vspace{-1ex}
\captionsetup[subfigure]{captionskip=-3ex}
{\includegraphics[width=0.9\linewidth]{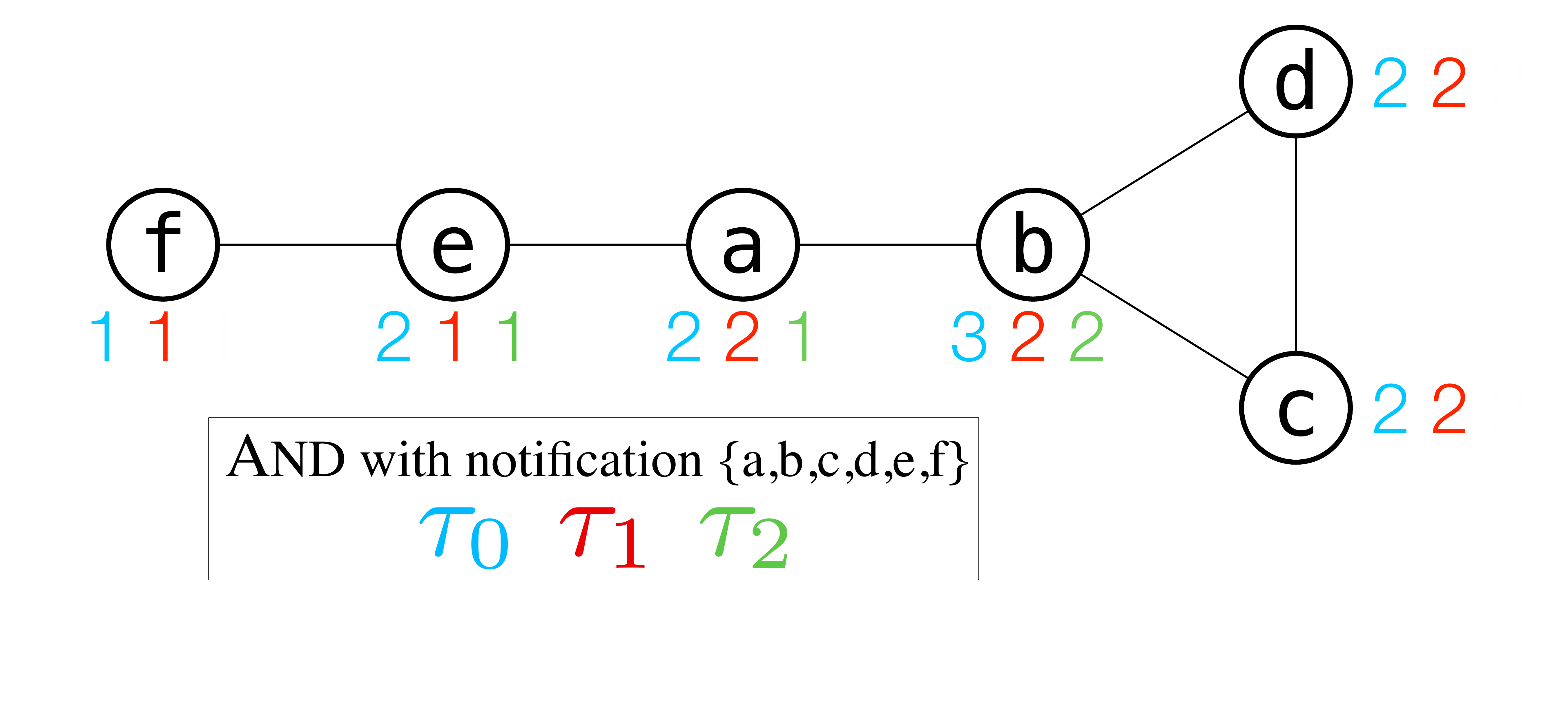}}
\vspace{-5ex}
\caption{\textit{{$k$}-core decomposition ({$r=1,s=2$}) by \AND (\cref{alg:AND}) that uses the notification mechanism.
After the first iteration, the only active vertex is $a$. In the second iteration, computation updates $\tau(a)$ and thus notifies vertices $b$ and $e$. In the third iteration, their $\tau$ indices are recomputed and no update happens. All the vertices become idle, thus convergence is obtained.
9 $\tau$ computations performed in 3 iterations by \AND with notification mechanism, while 24 $\tau$ computations are done in 4 iterations if notification mechanism is not used (\cref{fig:core1}).}}
\label{fig:core2}
\vspace{1ex}
\end{figure}

\vspace{-1ex}
\hl{
\noindent \textbf{\PAND on a set of $r$-cliques:} Local nature of the \AND algorithm enables selection of  a set of $r$-cliques and its application  only to  this set until convergence.
This is useful in query-driven scenarios where the focus is on a single (or a few) vertex/edge.
We define \PAND as the application of \AND algorithm on a set of given $r$-cliques, say $P$.
 We only  modify  the orange lines in \cref{alg:AND} where {\tt c} of an $r$-clique is re-computed only if it is in set $P$.
This way we just limit the \AND computation on a small set.
We give an application of \PAND in~\cref{sec:app} where the computation is limited on a given $r$-clique and its neighbors.
}

\subsection{Heuristics and implementation}\label{sec:heu}
\noindent We introduce key implementation details for the shared-memory parallelism and heuristics for efficient $h$-index computation.
We used OpenMP~\cite{openmp} to utilize the shared-memory architectures.
The loops, annotated as parallel in Algorithms \ref{alg:SND} and \ref{alg:AND}, are shared among the threads, and each thread
is responsible for its partition of $r$-cliques.
No synchronization or atomic operation is needed.
Default scheduling policy in OpenMP is \textit{static} and it distributes the iterations of the loop to the threads in chunks, i.e., for two threads, one takes the first half and the other takes the second.
This approach  does not work well for our algorithms, since the notification mechanism may result in significant load imbalance among threads.
If most of the idle $r$-cliques are assigned to a certain thread, this thread quickly finishes, and remains idle until the iteration ends.
To prevent this, we adopted the dynamic scheduling where each thread is given a new workload once it idle.
We set chunk size to 100 and observed no significant difference for other values.
No thread stays idle this way,  improving  parallel efficiency.

$h$-index computation for a list of numbers is traditionally done by sorting the numbers in the non-increasing order and checking the values starting from the head of the list to find the largest $h$ value for which at least $h$ items exist with at least $h$ value.
Main bottleneck in this routine is the sorting operation which takes $\cO(nlog n)$ time for $n$ numbers.
We use a linear time algorithm that uses a hashmap and does not include sorting to compute the $h$-index.
$h$ is initialized as zero and we iterate over the items in the list.
At each step, we attempt to increase the current $h$ value based on the inspected item.
For the present $h$ value in a step, we keep track of the number of items examined so far that have value equal to $h$.
We use a hashmap to keep track of the number of items that has at least $h$ value, and ignore values smaller than $h$.
This enables the computation of the $h$-index in linear time and provides a trade-off between time and space.
In addition, after the initialization, we  check to see if the current $\tau$ index can be preserved.
Once we see at least $\tau$ items with index at least $\tau$, no more checks needed.

\begin{table}[!t]
\centering
\small 
\renewcommand{\tabcolsep}{3.5pt}
%
\caption{{\it Statistics about our dataset; number of vertices, edges, triangles and four-cliques ({$K_4$}).}}
\vspace{-2ex}
\begin{tabular}{|r|r|r|r|r|r|r|r|r|r|r|r|}\hline
	&	$|V|$			&	$|E|$			&	$|\triangle|$		 	&	$|K_4|$			\\ \hline
\assk 	(\tassk)	&$	1.7	$M	&$	11.1	$M	&$	28.8	$M	&$	148.8	$M	\\ \hline
\fb (\tfb)	&$	4	$K		&$	88.2	$K	&$	1.6	$M	&$	30.0	$M	\\ \hline
\hl{\fri (\tfri)} & $65.6$M & $1.8$B & $4.1$B & $8.9$B \\\hline
\slj 	(\tslj)	&$	4.8	$M	&$	68.5	$M	&$	285.7	$M	&$9.9	 	$B		\\ \hline
\so	(\tso)	&$	2.9	$M	&$	106.3	$M	&$	524.6	$M	&$	2.4	$B	\\ \hline
\sse	(\tsse)	&$	131.8	$K	&$	711.2	$K	&$	4.9	$M	&$	58.6	$M	\\ \hline
\sth	(\tsth)	&$	456.6	$K	&$	12.5$M	&$	83.0	$M	&$	429.7	$M	\\ \hline
\tw (\ttw)	&$	81.3	$K	&$	1.3	$M	&$	13.1	$M	&$	104.9	$M	\\ \hline
\wg (\twg)	&$	916.4	$K	&$	4.3	$M	&$	13.4$M	&$	39.9	$M	\\ \hline
\wn (\twn)	&$	325.7	$K	&$	1.1	$M	&$	8.9$M	&$	231.9$M	\\ \hline
\wiki (\twiki)	&$	3.1	$M	&$	37.0	$M	&$	88.8$M	&$	162.9$M	\\ \hline
\end{tabular}
\vspace{2ex}
\label{tab:dataset}
\end{table}

\section{Experiments}
\noindent We evaluate our algorithms on three instances of the $(r,s)$ nucleus decomposition: $k$-core ($(1,2)$), $k$-truss ($(2,3)$), and $(3,4)$ nucleus, which are shown to be the practical and effective~\cite{Sariyuce15, Sariyuce-TWEB17}.
We do not store the $s$-cliques during the computation for better scalability in terms of the memory space.
Instead, we find the participations of the $r$-cliques in the $s$-cliques on-the-fly~\cite{Sariyuce15}.
Our dataset contains a diverse set of real-world networks from SNAP~\cite{snap} and Network Repository~\cite{nr} (see \cref{tab:dataset}), such as internet topology network (\assk), social networks (\fb, \fri, \slj, \so), trust network ({\tt soc-} {\tt sign-epinions}), Twitter follower-followee networks ({\tt soc-twitter} {\tt -higgs}, \tw), web networks (\wg, {\tt web-} {\tt NotreDame}), and a network of wiki pages (\wiki).

All experiments are performed on a Linux operating system running on a machine with Intel Ivy Bridge processor at 2.4 GHz with 64 GB DDR3 1866 MHz memory.
There are two sockets on the machine and each has 12 cores, making 24 cores in total.
Algorithms are implemented in C++ and compiled using gcc 6.1.0 at the -O2 level.
We used OpenMP v4.5 for the shared-memory parallelization.
Code is available at \url{http://sariyuce.com/pnd.tar}.

We first investigate the convergence characteristics of our new algorithms in \cref{sec:conAna}.
We compare the number of iterations that our algorithms need for the convergence and also examine the convergence rates for the $\kappa$ values. In addition, we investigate how the densest subgraphs evolve and present a metric that can be monitored to determine the ``good-enough'' decompositions so that trade-offs between quality and runtime can be enjoyed.
Then, we evaluate the runtime performance in \cref{sec:runtime}. In particular, we examine the impact of notification mechanism (\cref{sec:notif}) on the \AND algorithm, show the scalability for our best performing method, and compare it with respect to the partially parallel peeling algorithms.
We also examine the runtime and accuracy trade-off for our algorithms.
Last, but not least, we highlight a query-driven scenario in \cref{sec:app} where our algorithms are used on a subset of vertices/edges to estimate the core and truss numbers.

%

\subsection{Convergence Analysis}\label{sec:conAna}
\noindent Here we study the following questions:\\

\begin{compactitem}[\leftmargin=0.1ex $\bullet$]
\item How does the number of iterations change between  asynchronous computation (\AND)  and synchronous (\SND)? How do they relate to our theoretical bounds of \cref{sec:level}?
\item What is the rate of convergence regarding the $\tau$ values? How quickly do they approach to the $\kappa$ values?
\item How is the evolution of the important subgraphs (with high density) during the convergence?
\item Is there a generic way to infer the ``good-enough'' decompositions so that the computation can be halted for  trade-off between runtime and quality?
\end{compactitem}

\begin{table}[!t]
\vspace{0ex}
\centering
\renewcommand{\tabcolsep}{3.5pt}
\caption{\it Number of iterations for the theoretical upper bound, Degree Levels \textnormal{(DL)}(\cref{sec:level}), \SND, and \AND algorithms. 
}
\label{tab:converge}
\vspace{-1ex}
\begin{tabular}{|c|r|r|r|r|r|r|r|r|}\hline
		&	&	\tassk			&	\tfb			&	\tslj			&	\tso			&	\tsth			&	\twg			&	\twiki			\\ \hline
\multirow{3}{*}{$k$-core}	&	DL	&	$	1195	$	&	$	352	$	&	$	3479	$	&	$	5165	$	&	$	1713	$	&	$	384	$	&	$	2026	$	\\ \cline {2-9}
	&	\SND	&	$	63	$	&	$	21	$	&	$	99	$	&	$	147	$	&	$	73	$	&	$	23	$	&	$	55	$	\\ \cline {2-9}
	&	\AND	&	$	33	$	&	$	11	$	&	$	51	$	&	$	73	$	&	$	37	$	&	$	14	$	&	$	30	$	\\ \hline 
\multirow{3}{*}{$k$-truss}	&	DL	&	$	1605	$	&	$	859	$	&	$	5401	$	&	$	4031	$	&	$	2215	$	&	$	254	$	&	$	2824	$	\\ \cline {2-9}
	&	\SND	&	$	118	$	&	$	33	$	&	$	86	$	&	$	207	$	&	$	101	$	&	$	20	$	&	$	562	$	\\ \cline {2-9}
	&	\AND	&	$	58	$	&	$	19	$	&	$	44	$	&	$	103	$	&	$	53	$	&	$	11	$	&	$	410	$	\\ \hline 
\multirow{3}{*}{$(3,4)$}	&	DL	&	$	1734	$	&	$	1171	$	&	$	7426	$	&	$	3757	$	&	$	2360	$	&	$	157	$	&	$	1559	$	\\ \cline {2-9}
	&	\SND	&	$	72	$	&	$	38	$	&	$	123	$	&	$	196	$	&	$	109	$	&	$	11	$	&	$	122	$	\\ \cline {2-9}
	&	\AND	&	$	41	$	&	$	23	$	&	$	73	$	&	$	116	$	&	$	51	$	&	$	6	$	&	$	107	$	\\ \hline
\end{tabular}
\vspace{1ex}
\end{table}

\subsubsection{Number of iterations}
As described in \cref{sec:async}, the number of iterations for convergence can (only) be decreased by the asynchronous algorithm \AND.
We compare \SND (\cref{alg:SND}) and \AND (\cref{alg:AND}) for three nucleus decompositions.
All runs are performed in sequential, and for \AND we use the natural ordering of the $r$-cliques in datasets that is the order of vertices/edges/triangles given or computed based on the ids in the data files.
Note that, we also checked \AND with random $r$-clique orderings and did not observe significant differences.
We also compute the number of degree levels (\cref{def:ring}) that we prove as an upper bound in~\cref{sec:level}.

\cref{tab:converge} presents the results for $k$-core, $k$-truss, and $(3,4)$ nucleus decompositions.
Number of degree levels gives much tighter bounds than the obvious limits -- number of $r$-cliques.
We observe that both algorithms converge in far fewer iterations than our upper bounds -- \SND converges in $5$\% of the bounds given for all decompositions, on average.
Regarding the comparison, \AND algorithm converges in $50$\% fewer iterations than the \SND for $k$-core and $k$-truss decompositions and $35$\% fewer iterations for $(3,4)$ nucleus decomposition.
Overall, we see the clear benefit of asynchronous computation on all the decompositions, thus use \AND algorithm in the following experiments.

\begin{figure}[!t]
\centering
\captionsetup[subfigure]{captionskip=-10ex}
\vspace{-14ex}
\subfloat[{$k$}{-core}]{\includegraphics[width=0.95\linewidth]{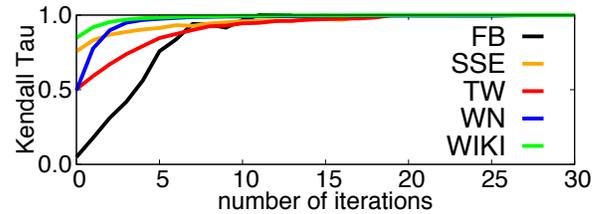}}
\vspace{-23ex}
\subfloat[{$(3,4)$} {nucleus}]{\includegraphics[width=0.95\linewidth]{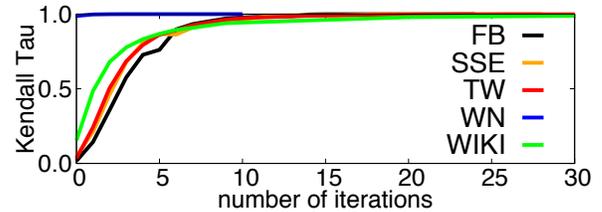}}
\vspace{-9ex}
\caption{\textit{Convergence rates for five graphs in our dataset. Kendall-Tau similarity score compares the {$\tau$} values in a given iteration with the exact decomposition ({$\kappa$} values); becomes 1.0 when they  are equal. Our algorithms compute almost-exact decompositions in around 10 iterations for {$k$}-core, {$k$}-truss (in \cref{fig:intro}), and {$(3,4)$} nucleus decompositions.
}}
\vspace{-1ex}
\label{fig:KT}
\end{figure}

\subsubsection{Convergence rates for the $\tau$ values}
In the previous section, we studied the number of iterations  required for exact solutions.  Now we will investigate how fast our estimates, $\tau$ values converge to the exact, $\kappa$ values.
We use Kendall-Tau similarity score to compare the $\tau$ and $\kappa$ values for each iteration, which becomes 1, when they are equal.  
\cref{fig:KT} and~\cref{fig:intro} present the results for five representative graphs in our dataset.
We observe that our local algorithms compute almost exact decompositions in less than 10 iterations for all decompositions, and we need $5$, $9$, and $6$ iterations to achieve $0.90$ similarity for $k$-core, $k$-truss, and $(3,4)$ nucleus decompositions, respectively.

\begin{figure}[!t]
\centering
\vspace{-9ex}
\captionsetup[subfigure]{justification=centering, captionskip=-4ex}
\subfloat[$k$-core on \tfb]{\hspace{-3ex}\includegraphics[width=0.56\linewidth]{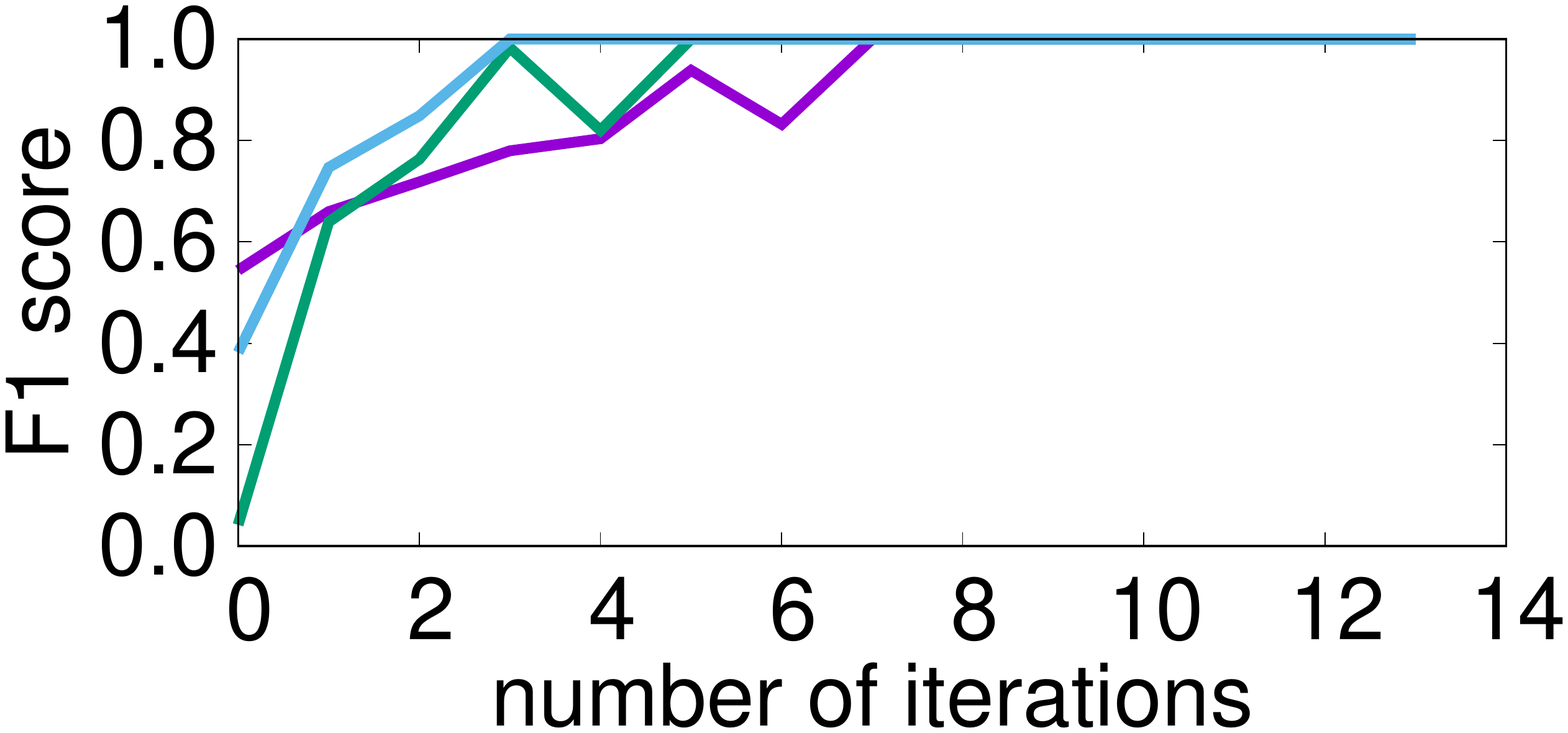}\hspace{-3ex}}
\subfloat[$k$-core on \twn]{\includegraphics[width=0.56\linewidth]{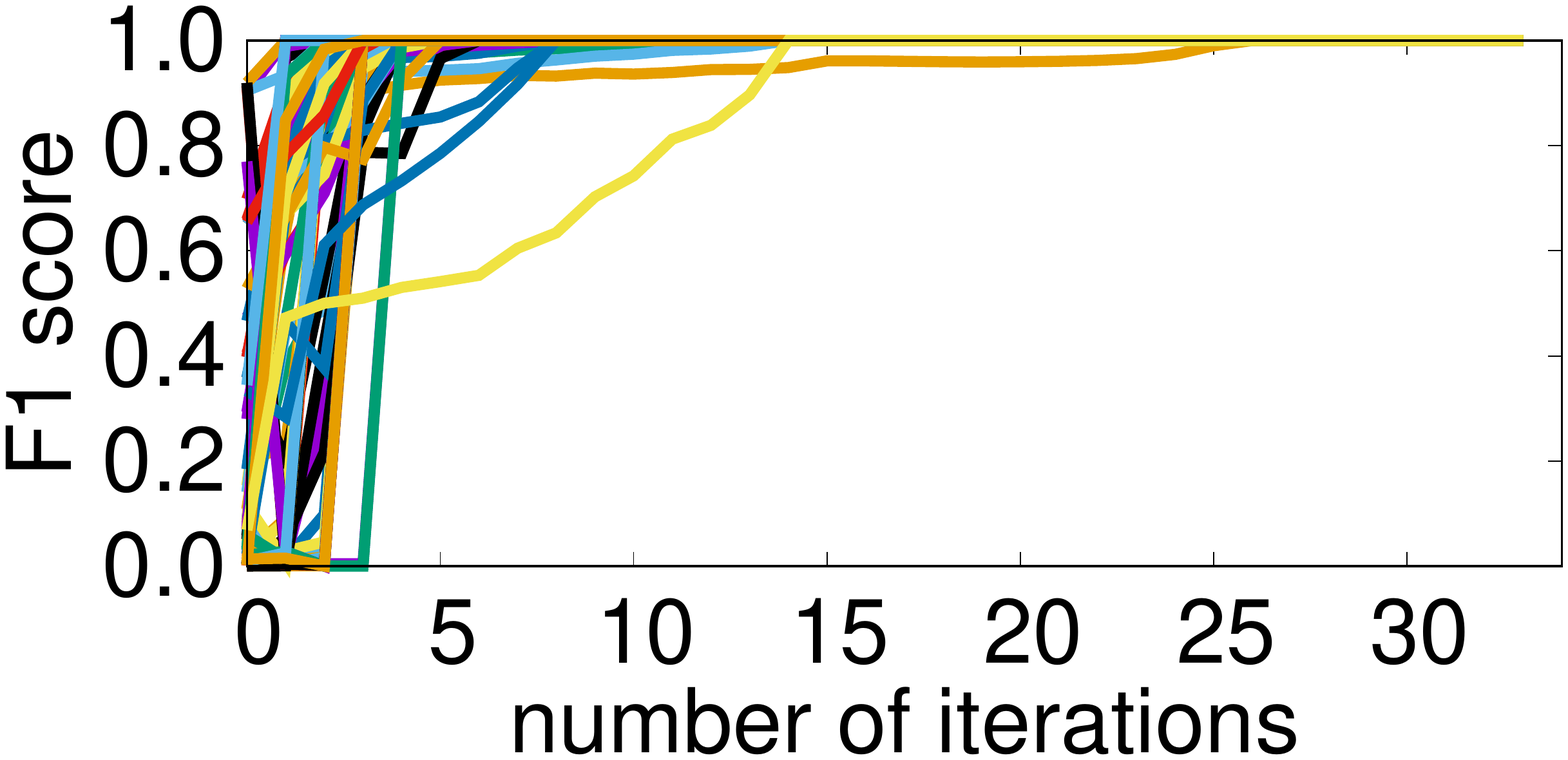}}
\vspace{-10ex}

\subfloat[$k$-truss on \tfb]{\hspace{-3ex}\includegraphics[width=0.56\linewidth]{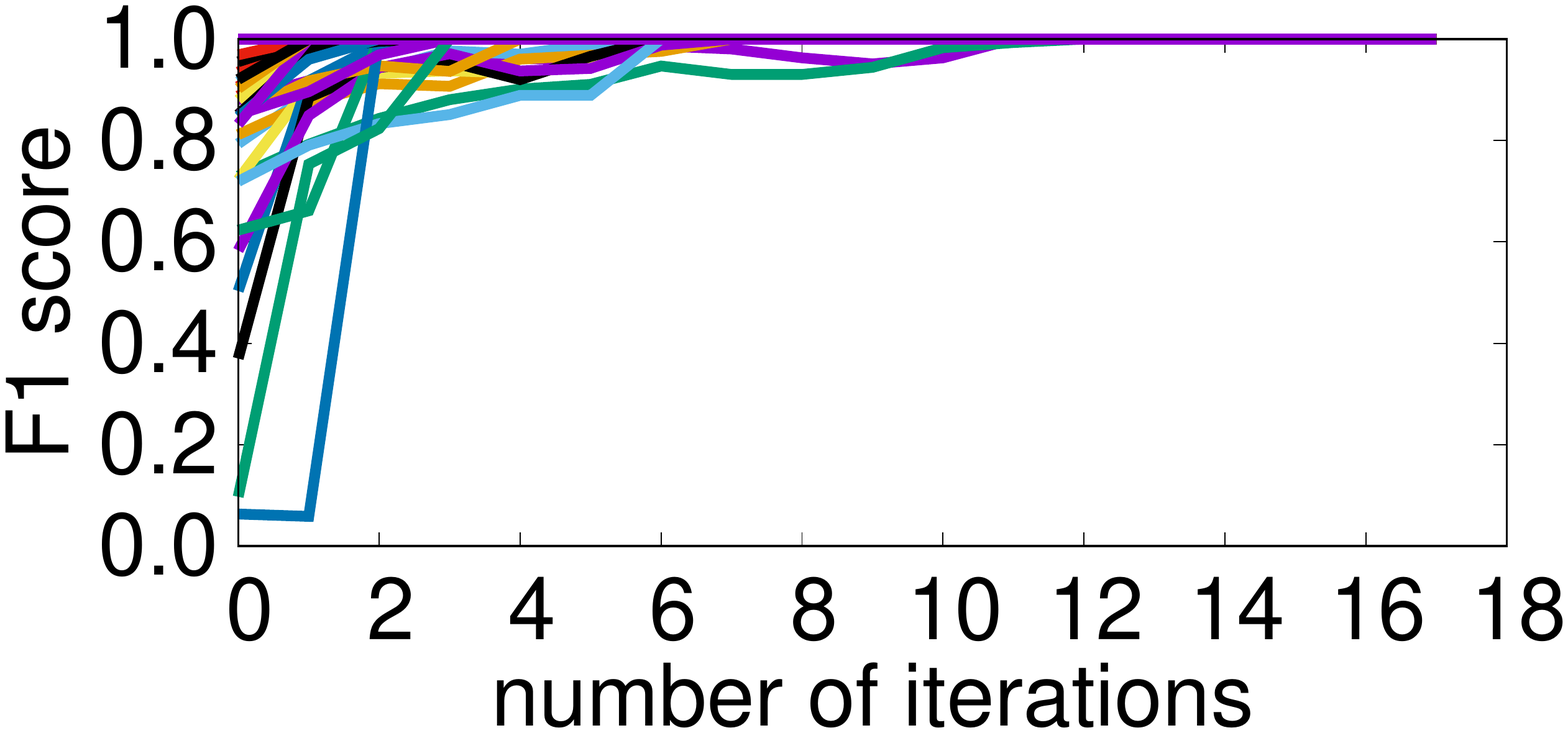}\hspace{-3ex}}
\subfloat[$k$-truss on \tsse]{\includegraphics[width=0.56\linewidth]{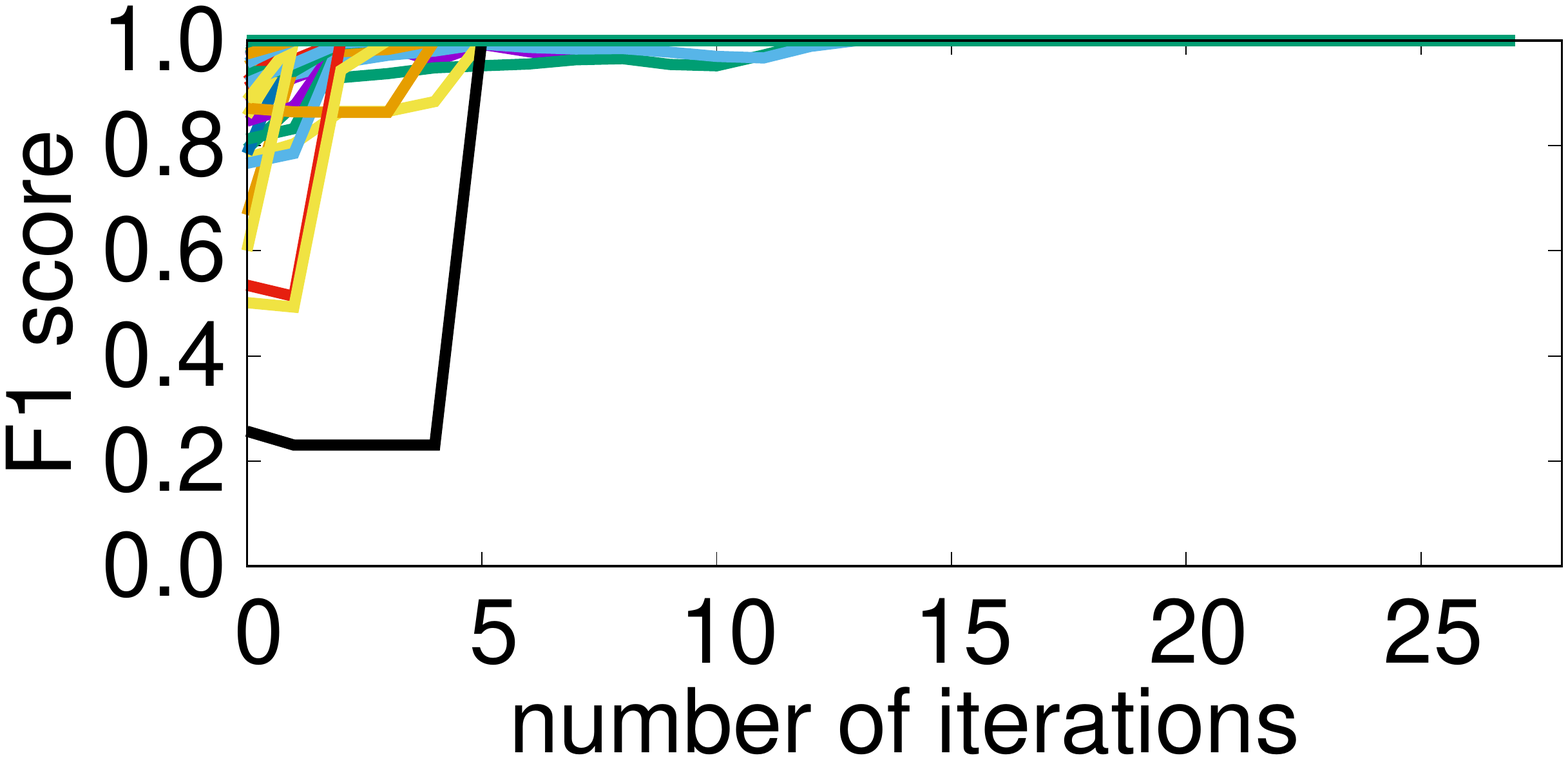}}
\vspace{-2ex}
\caption{\hl{\textit {Evolution of densest subgraphs (leaves). Each line shows the evolution of a leaf. We limit to subgraphs with at least 10 vertices to filter out the trivial ones. Almost all leaves are captured in the first few iterations.}}}
\vspace{1ex}
\label{fig:leafEvol}
\end{figure}

\subsubsection{Evolution of the densest regions}
In the hierarchy of dense subgraphs computed by our algorithms, the leaves  are the most important part (see \cref{fig:toys} and \cref{fig:toynucleus}), since 
those subgraphs have the highest edge density $(|E|/{|V|\choose 2})$, pointing to the significant regions.
Note that the $r$-cliques in a leaf subgraph have same $\kappa$ values and they are the local maximals, i.e., have greater-or-equal $\kappa$ value than all their neighbors.
For this reason, we monitor how the nodes/edges in the leaf subgraphs form their max-cores/max-trusses during the convergence process.
In the $k$-core decomposition; for a given leaf subgraph $L$, we find the max-cores, $M^i_v$, of all $v \in L$ at each iteration $i$ with respect to the $\tau_i$ values.
Then we measure the F1 score between each $M^i_v$ and $L$, and report the averages for each leaf $L$ in iteration $i$, i.e., $\sum_{v \in L} {M^i_v}/|L|$. We follow a similar way for the $k$-truss case; find the max-trusses for all edges in each leaf, track their F1 scores during convergence with respect to the leaf, and report the averages.

\begin{figure*}[!b]
\centering
\vspace{-7ex}
\captionsetup[subfigure]{captionskip=-0.1ex}
\subfloat[\assk]{\includegraphics[width=3.53cm]{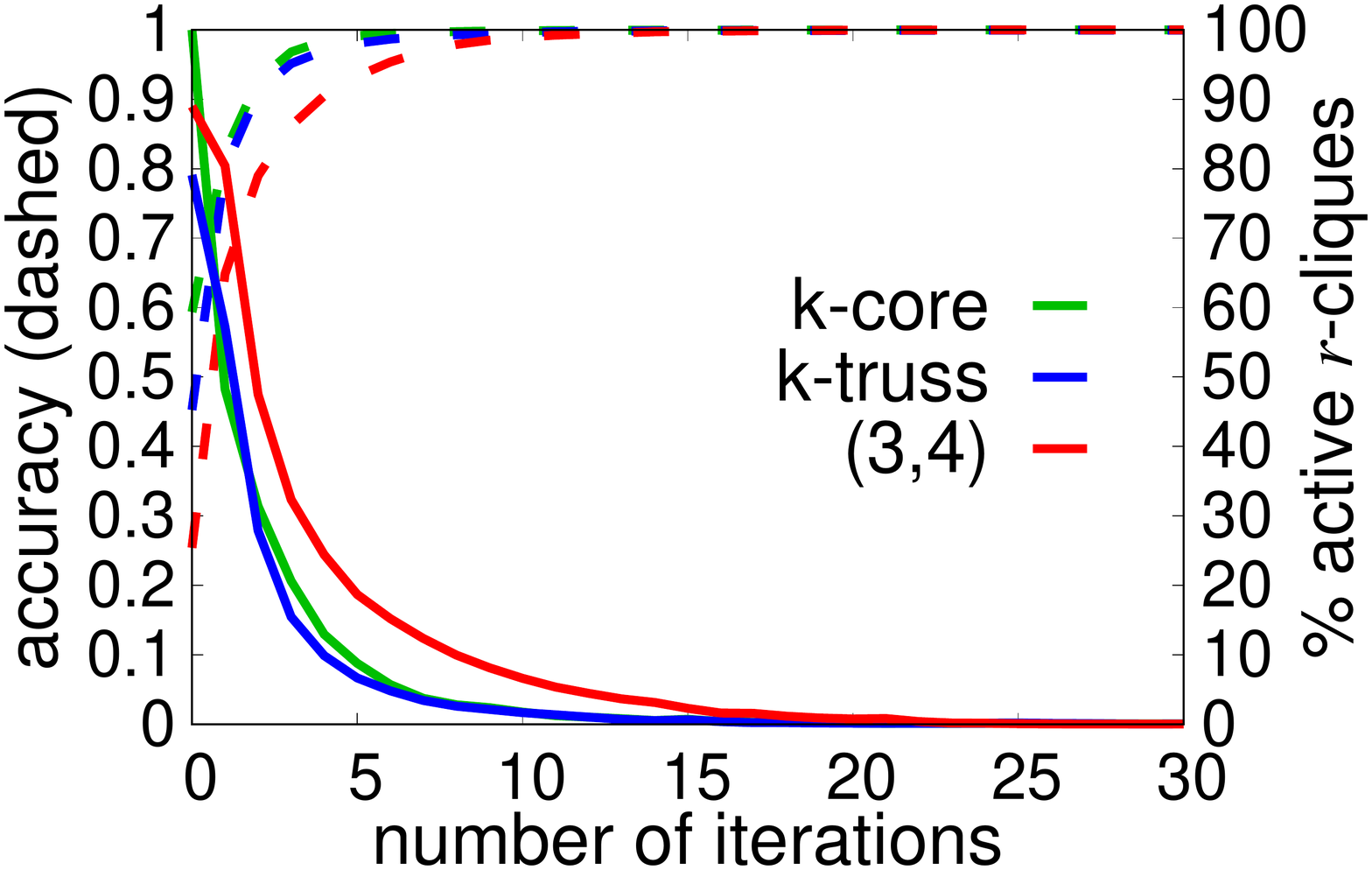}}
\hfill
\subfloat[\slj]{\includegraphics[width=3.53cm]{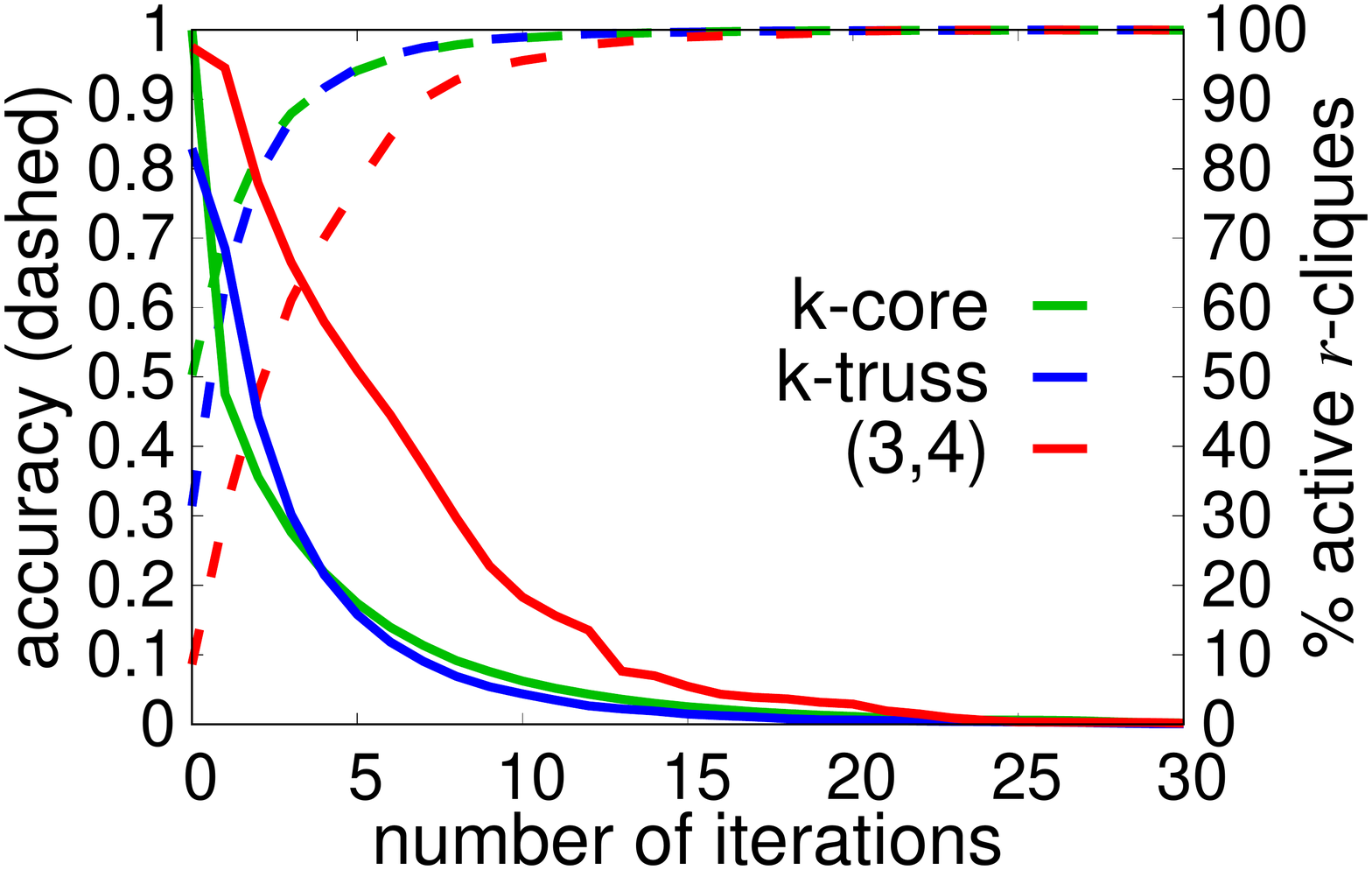}}
\hfill
\subfloat[\so]{\includegraphics[width=3.53cm]{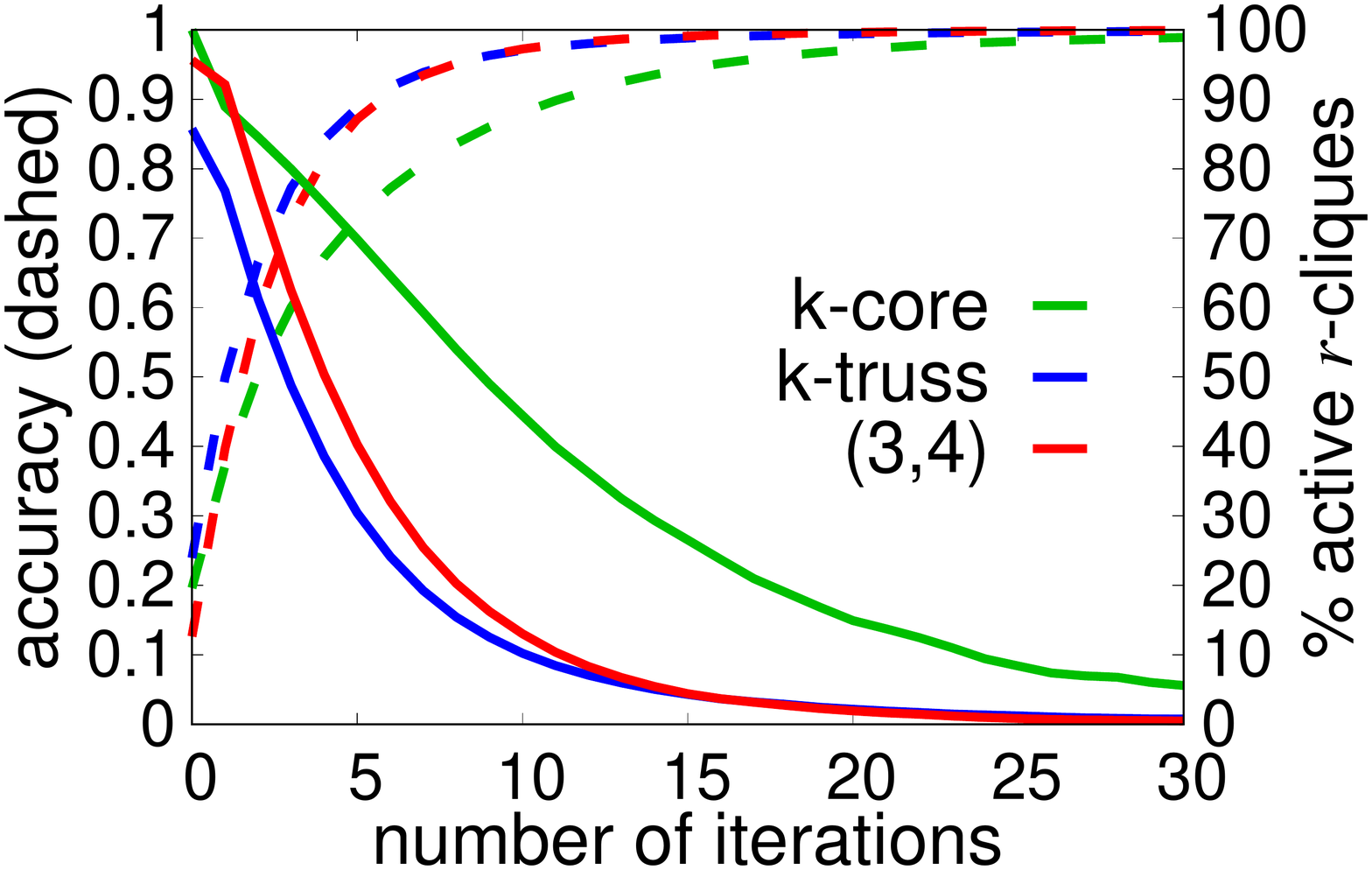}}
\hfill
\subfloat[\wg]{\includegraphics[width=3.53cm]{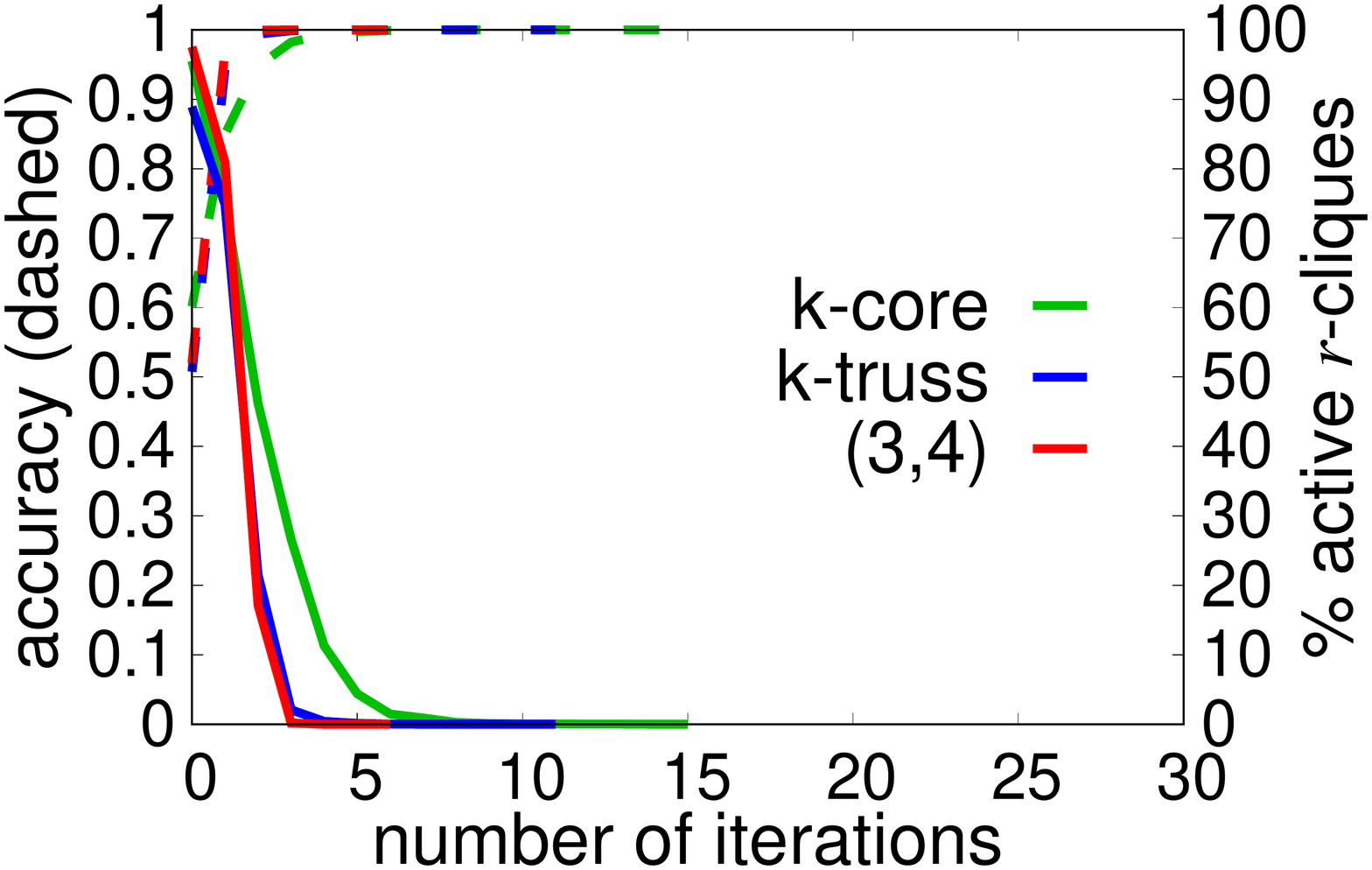}}
\hfill
\subfloat[\wiki]{\includegraphics[width=3.53cm]{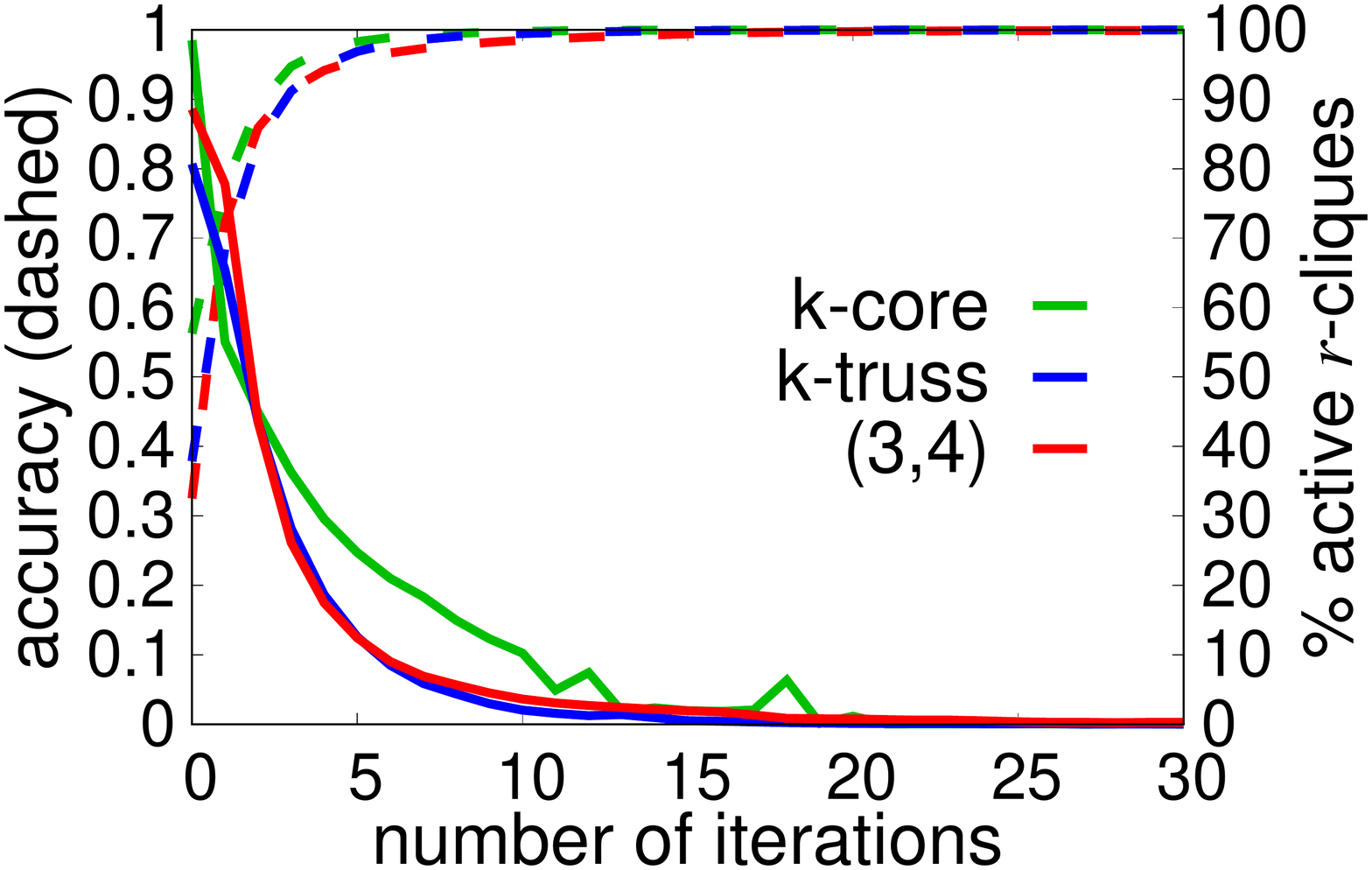}}
\vspace{-1ex}
\caption{\textit {Changes in the ratio of active {$r$}-cliques and the accuracy of {$\tau$} indices during the computation, 
When the ratio of active {$r$}-cliques goes below {$40$}\%, $\tau$ indices provide {$91\%, 89$}\% and {$92$}\% accurate results on average for {$k$}-core, {$k$}-truss and {$(3,4)$} nucleus decompositions, respectively.
If the ratio is below {$10$}\%, more than {$98$\%} accuracy is achieved in all decompositions.
}}
\label{fig:dist}
\vspace{0ex}
\end{figure*}

\cref{fig:leafEvol} presents the results for a representative set of graphs (similar trends observed for other graphs).
Each line shows the evolution of a leaf subgraph during the convergence process and we only considered the subgraphs with at least 10 vertices to filter out the trivial ones.
We observe that almost all leaves are captured in the first few iterations.
Regarding the \fb network, it has 3 leaves in the $k$-core case and all can be found in 7 iterations, and 5 iterations are enough to identify 28 of 33 leaves in $k$-truss decomposition.
This trend is also observed for the other graphs; 5 iterations find 78 of the 85 leaves in $k$-core decomposition of \wn, and 39 of the 42 leaves in $k$-truss decomposition of \sse.

\subsubsection{Predicting Convergence}
Number of iterations for convergence depends on the graph structure, as shown in \cref{tab:converge}.
We cannot know whether a particular $r$-clique has converged by tracking the stability of its $\tau$ index since there can be temporary plateaus (see \cref{fig:plato}).
However, we know which $r$-cliques are active or idle in each iteration thanks to the notification mechanism in \AND algorithm.
We propose using the  ratio of active $r$-cliques as an indicator.

We examine  the relation between the ratio of active  $r$-cliques  and the accuracy ratios of the $r$-cliques.
\cref{fig:dist} presents the results for a set of graphs on all decompositions.
We observe that when the ratio of active $r$-cliques goes below $40$\% during the computation, $91$\%, $89$\%, and $92$\% accurate results are obtained for $k$-core, $k$-truss, and $(3,4)$ nucleus decompositions, on average.
When the ratio goes below $10$\%, over  $98$\% accuracy is achieved in all decompositions.
The results show the ratio of active $r$-cliques is a helpful guide to find almost-exact results can be obtained faster.
Watching for $10$ or $40$\% of active $r$-cliques yields nice trade-offs between runtime and quality.
Watching the $40$\% threshold provides $3.67$, $4.71$, and $4.98$ speedups with respect to full computation in $k$-core, $k$-truss, and $(3,4)$ nucleus decompositions, respectively, and the speedups for $10$\% threshold are $2.26$, $2.81$, and $3.25$ (more details in \cref{sec:tradeoff}).

\subsection{Runtime performance}\label{sec:runtime}
\noindent We evaluate the performance of our algorithms and seek to answer the following questions:\\

\begin{compactitem}[\leftmargin=0.1ex $\bullet$]
\item What is the impact of the notification mechanism (in \cref{sec:notif}) on \AND algorithm?
\item How does the \AND algorithm scale with more threads? How does it compare to  sequential peeling?
\item What speedups are achieved when a certain amount of accuracy is sacrificed?
\end{compactitem}

\subsubsection{Impact of the notification mechanism}
We check the impact of the notification mechanism for $k$-truss and $(3,4)$ cases. We use 24 threads and \cref{tab:nm} presents the results where \AND is the entire \cref{alg:AND} with notification mechanism and \AND-nn does not have the notifications -- missing the orange lines in~\cref{alg:AND}.
We observe that \cref{alg:AND} brings great improvements, reaching up to $3.98$ and $3.16$ over speedups over \AND-nn for $k$-truss and $(3,4)$ cases. 
We use \AND algorithm (with notification mechanism) in the rest of the experiments.

\begin{table}[!t]
\centering
\small
\vspace{1ex}
\renewcommand{\tabcolsep}{4pt}
\caption{\textit{ Impact of the notification mechanism. \AND-nn does not use notifications. Using 24 threads, notification mechanism yields speedups  up to $3.98$ and $3.16$ for $k$-truss and $(3,4)$ cases.}}
\label{tab:nm}
\begin{tabular}{|c|r|r|r|r|r|r|r|r|}\hline
(seconds) & \multicolumn{3}{c|}{$k$-truss} & \multicolumn{3}{c|}{$(3,4)$}\\ 
 Graphs & \AND-nn & \AND & \textit{Speedup} & \AND-nn & \AND & \textit{Speedup} \\ \hline
\tfb	&$	0.45	$&$	0.35	$&$	1.29	$	&$	34.4	$&$	22.2	$&$	1.55	$	\\	\hline
\ttw	&$	3.89	$&$	2.23	$&$	1.74	$	&$	178.7	$&$	59.6	$&$	3.00	$	\\	\hline
\tsse	&$	2.50	$&$	1.46	$&$	1.72	$	&$	105.5	$&$	49.6	$&$	2.13	$	\\	\hline
\twg	&$	3.15	$&$	1.25	$&$	2.52	$	&$	25.7	$&$	16.9	$&$	1.53	$	\\	\hline
\twn	&$	2.38	$&$	0.60	$&$	3.98	$	&$	220.5	$&$	69.8	$&$	3.16	$	\\	\hline
\end{tabular}
\vspace{2ex}
\end{table}

\begin{figure}[!t]
\vspace{-0.5ex}
\centering
\captionsetup[subfigure]{captionskip=-0.5ex}
\hspace{-24ex}
\subfloat[{$k$}{-core}]{\includegraphics[width=0.52\linewidth]{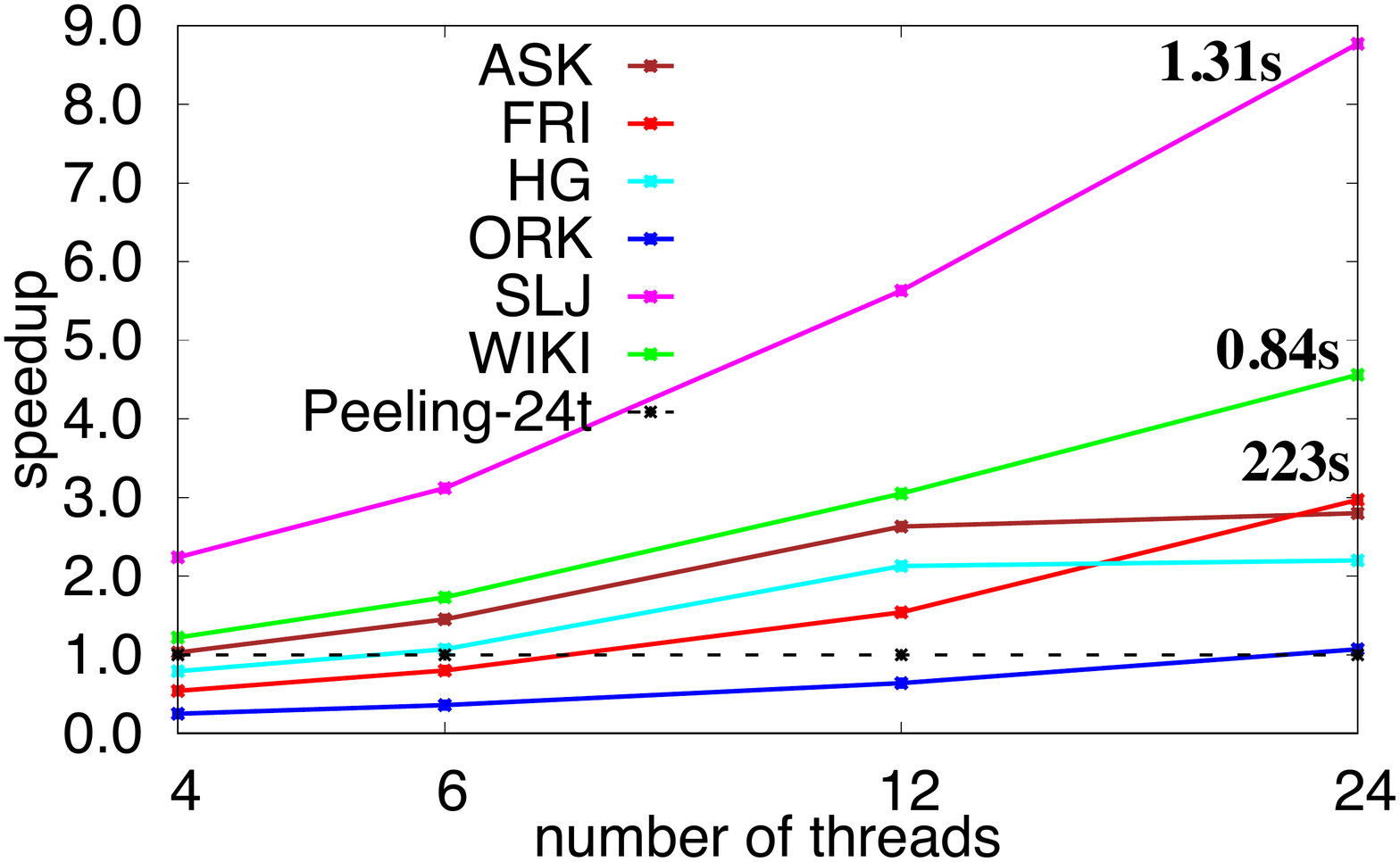}}
\subfloat[{$(3,4)$}]{\includegraphics[width=0.52\linewidth]{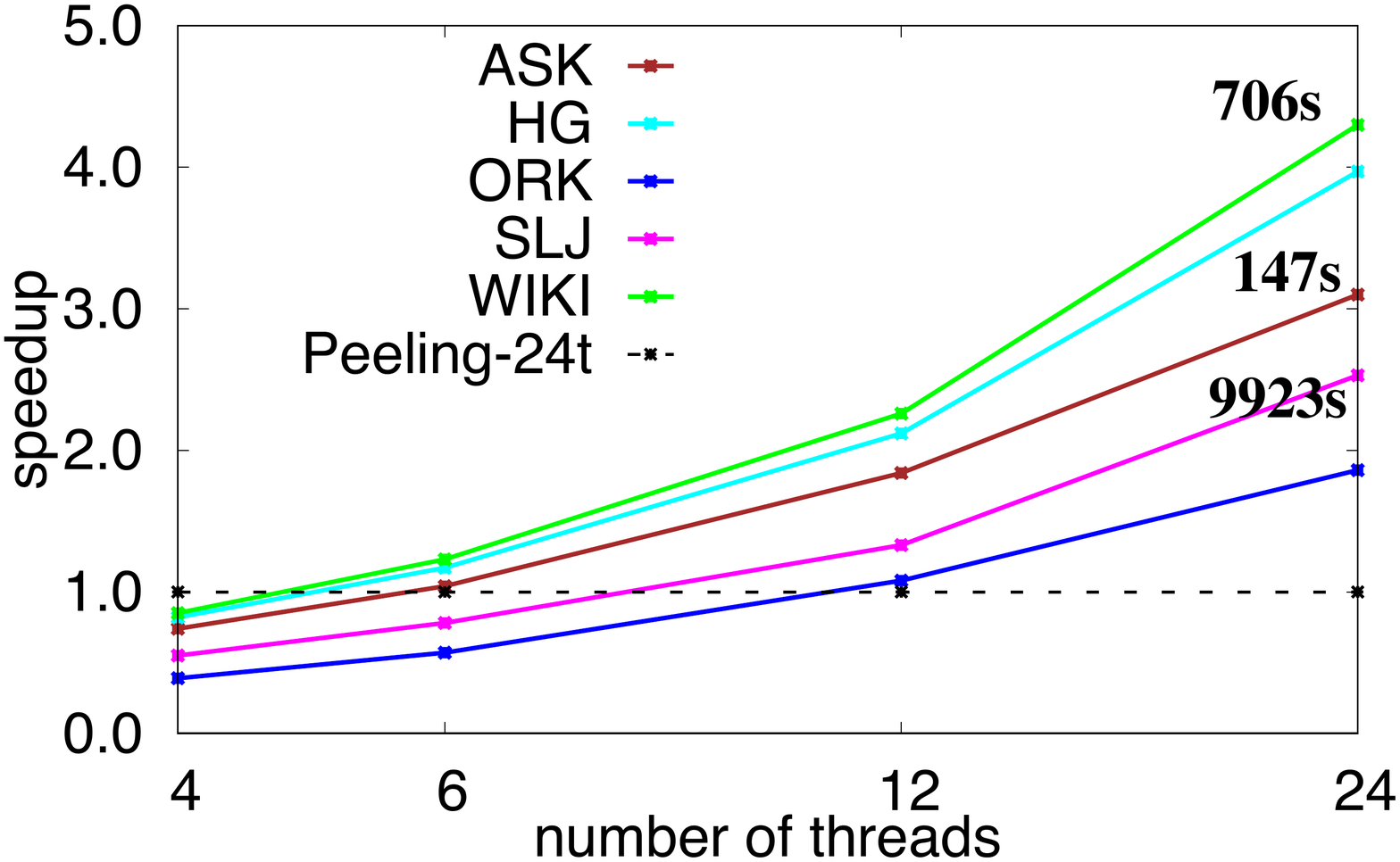}}
\hspace{-24ex}
\vspace{-1ex}
\caption{\textit{Speedups of the parallel computations with respect to the peeling computations (with 24 threads) for the {$k$}-core and {$(3,4)$} nucleus decompositions.
We used {$4, 6, 12$}, and {$24$} threads where each thread is assigned to a single core.
On average, $k$-core computations are performed $3.83$x faster when the number of threads goes from 4 to 24. This increase is $4.7$x for $(3,4)$ case.
Runtimes with 24 threads are annotated for some graphs.
Speedup numbers increase with more threads and faster solutions are possible with  more cores.}}
\vspace{1ex}
\label{fig:scale}
\end{figure}

\subsubsection{Scalability and comparison with peeling}
Given the benefit of notification mechanism, we now compare the runtime performances of \AND (\cref{alg:AND}) and the peeling process (\cref{alg:peeling}) on three decompositions.
Our machine has $24$ cores in total ($12$ in each socket) and we perform full computations until convergence with 4, 6, 12, and 24 threads.
Note that our implementations for the baseline peeling algorithms are efficient;
for instance~\cite{WaCh12} computes the truss decomposition of \assk graph in 281
secs where we can do it in 74 secs, without any parallelization. In addition, for \so and \slj graphs, we compute the truss 
decompositions in 352 and 81 secs whereas~\cite{Huang14} needs 2291 and 1176 secs (testbeds in~\cite{WaCh12} and~\cite{Huang14} are similar to ours).
For the $k$-truss and $(3,4)$ nucleus decompositions, triangle counts per edge and four-clique counts per triangle need to be computed and we parallelize these parts for both the peeling algorithms and \AND, for a fair comparison. Rest of the peeling computation is sequential.
\cref{fig:scale} and~\cref{fig:intro} present the speedups by \AND algorithm over the (partially parallel) peeling computation \textbf{with 24 threads} on $k$-core, $k$-truss, and $(3,4)$ nucleus decompositions.
For all, \AND with 24 threads obtains significant speedups over the peeling computation.
In particular, with 24 threads \AND is $8.77$x faster for the $k$-core case on the \slj, $6.3$x faster for the $k$-truss decomposition on \assk, and $4.3$x faster for the $(3,4)$ nucleus case on \wiki graph.
In addition, our speedup numbers increase by more threads.
On average, $k$-core computations are performed $3.83$x faster when the number of threads are increased from 4 to 24. This increase is $4.8$x and $4.7$x for $k$-truss and $(3,4)$ cases.
Our speedup numbers increase with more threads and \emph{faster solutions are possible with more cores}.

\hl{
\noindent \emph{Recent results:} There is a couple recent studies, concurrent to our work, that introduced new efficient parallel algorithms for $k$-core~\cite{Dhulipala17} and $k$-truss~\cite{Shaden, Pingali, Kamesh} decompositions. Dhulipala et al.~\cite{Dhulipala17} have a new parallel bucket data structure for $k$-core decomposition that enables work-efficient parallelism, which is not possible with our algorithms. They present speedups to $23.6$x on \fri graph with 72 threads. Regarding the $k$-truss decomposition, the HPEC challenge~\cite{hpec} attracted interesting studies that parallelize the computation~\cite{Shaden, Pingali, Kamesh}. In particular, Shaden et al.~\cite{Shaden} reports competitive results with respect to the earlier version of our work~\cite{localnucleus}. Note that \textbf{our main focus in this work is a generic framework that enables local computation for $k$-core, $k$-truss, and $(3,4)$ nucleus decompositions}, which has not been discussed in the previous works. Although our algorithms are not work-efficient and more specialized solutions can give better speedups, our local algorithms are more generally applicable, enable trade-offs between runtime and accuracy, and also enable query-driven scenarios that can be used to analyze smaller subgraphs.
}

\begin{figure}[!t]
\vspace{-2ex}
\centering
\captionsetup[subfigure]{captionskip=-1.5ex}
\hspace{-24ex}
\subfloat[{$k$}{-truss}]{\includegraphics[width=0.54\linewidth]{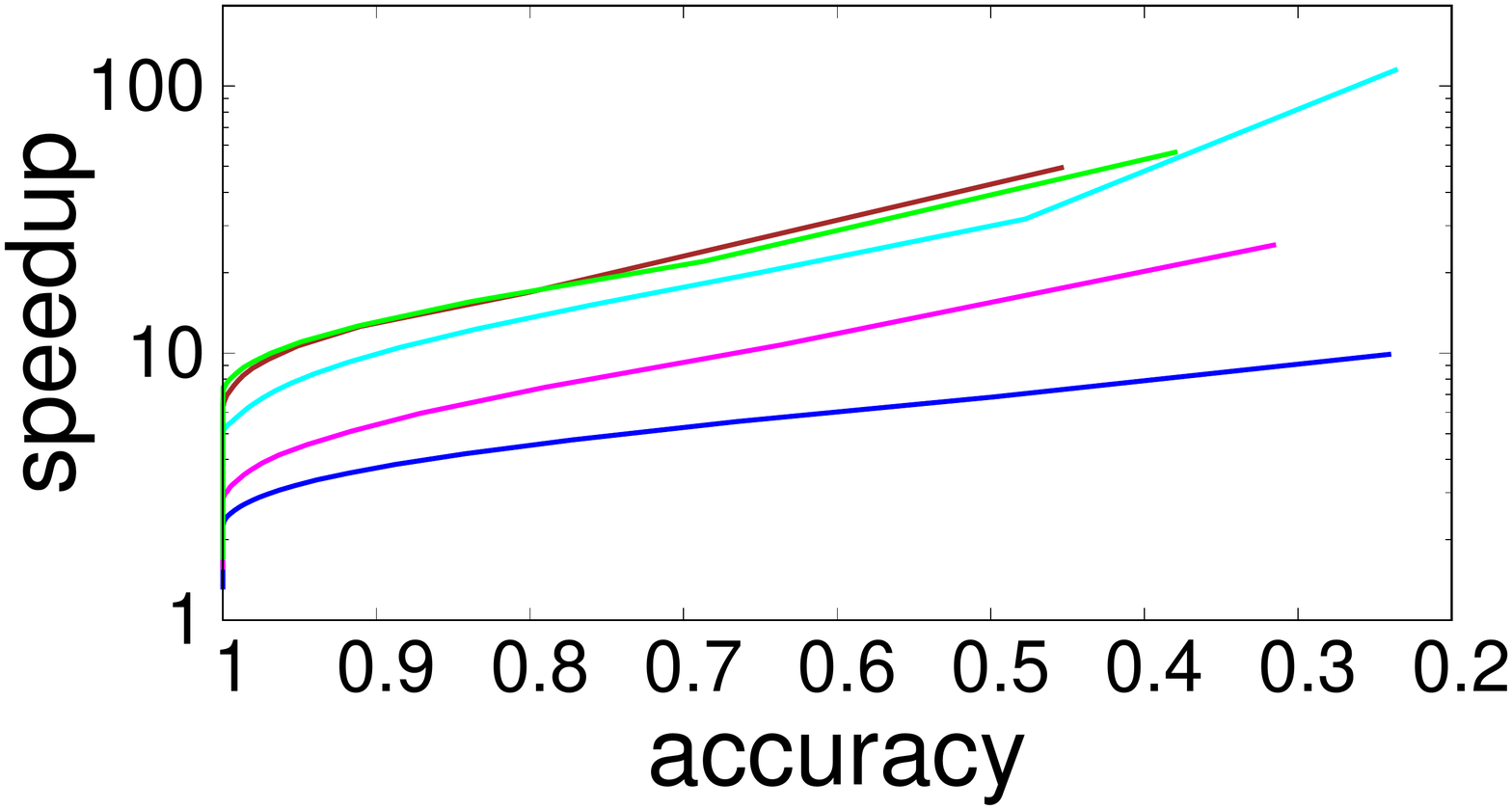}}
\hspace{-2ex}
\subfloat[{$(3,4)$}{-nucleus}]{\includegraphics[width=0.54\linewidth]{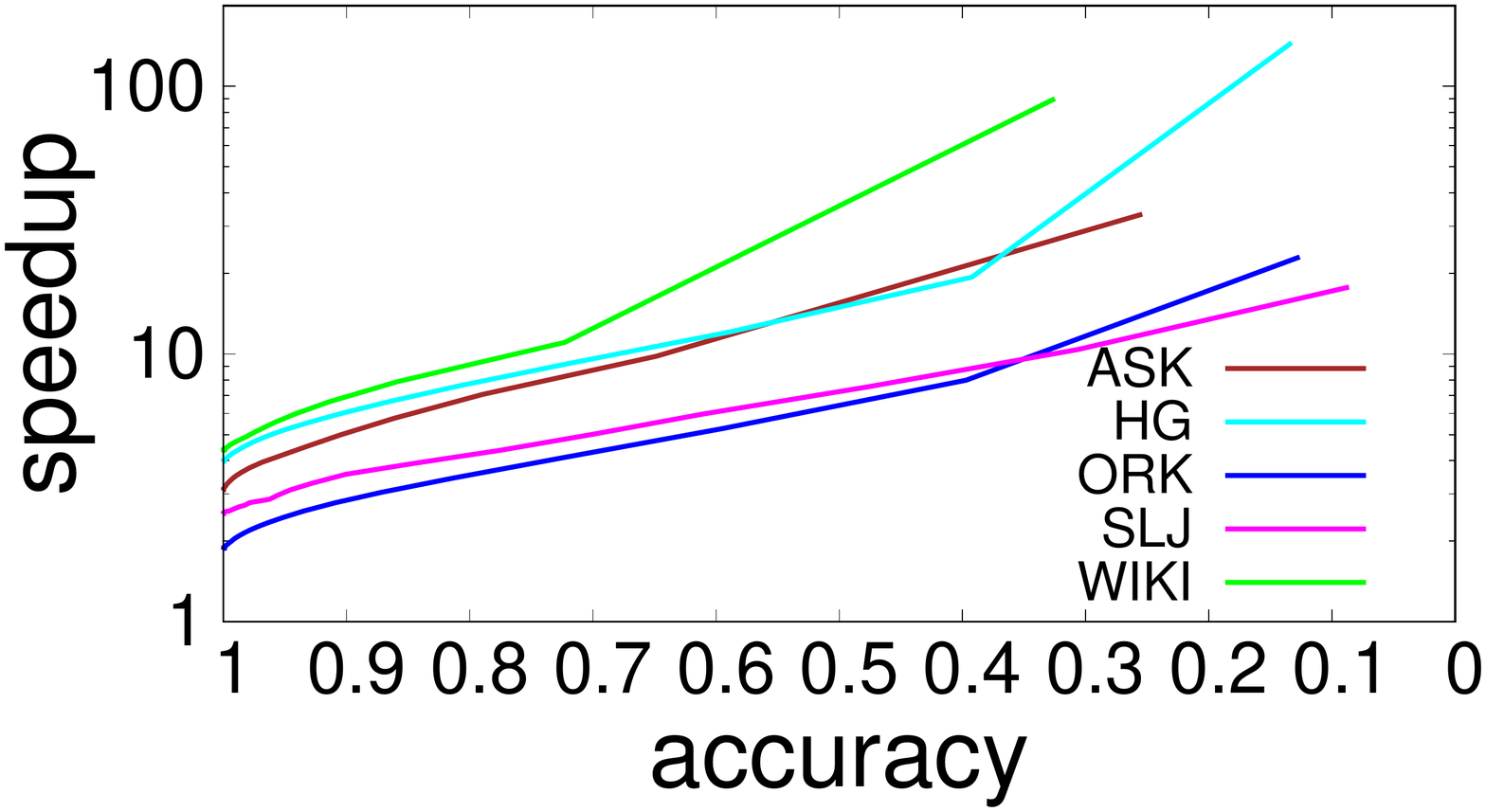}}
\hspace{-24ex}
\vspace{0ex}
\caption{\textit{Runtime/accuracy tradeoff. We show the potential for speedups with respect to the peeling computations for the {$k$}-truss and {$(3,4)$} nucleus decompositions. Speedups at full accuracy correspond to the speedups with 24 threads in \cref{fig:scale}. Number of iterations (and accuracy) decrease on the x-axis.
We reach up to $15$x and $9$x speedups on {$k$}-truss and {$(3,4)$} cases when 0.8 accuracy is allowed.
}}
\vspace{1ex}
\label{fig:spcor}
\end{figure}

\subsubsection{Runtime and accuracy trade-off}\label{sec:tradeoff}

We check the speedups for the approximate decompositions in the intermediate steps during the convergence.
We show how the speedups (with respect to peeling algorithm with 24 threads) change when a certain amount of accuracy  in $\kval_s$ indices is sacrificed.
Figure~\ref{fig:spcor} presents the behavior for $k$-truss and $(3,4)$ nucleus decompositions on some representative graphs.
We observe that speedups for the $k$-truss decomposition can reach up to $15$x when 0.8 accuracy is allowed. For $(3,4)$ nucleus decomposition, up to $9$x speedups are observed for the same accuracy score.
Overall, our local algorithms are enable to enjoy different trade-offs between the runtime and accuracy.

\hl{
\subsection{{\large \bf \PAND} to estimate \boldmath{$\kval_2$} and \boldmath{$\kval_3$} values}\label{sec:app}

\noindent
So far, we have studied the performance of our algorithms on the full graph.  Now, we will look at how we can apply  similar ideas to a  portion of the graph using the 
\PAND algorithm described  at end of \cref{sec:notif}.  We will apply \PAND to  the ego networks and show that it can be used to estimate $\kval_2$ values (core number) of vertices and $\kval_3$ values (truss number) of edges.
Ego network of a vertex $u$ is defined as the induced subgraph among $u$ and its neighbors.
It has been shown that ego networks in real-world networks exhibit low conductance~\cite{Gleich12} and also can
be used for friend suggestion in online social networks~\cite{Epasto15Ego}.
Accurate and fast estimation of core numbers~\cite{OBrien14} is important in the context of network experiments (A/B 
testing)~\cite{Ugander13} where a random subset of vertices are exposed to a treatment and responses are analyzed to measure the impact of a new feature in online social networks.

For the core number estimation of a vertex $u$, we apply \PAND on $u$ and its neighbor vertices, i.e., $u \cup \cN_2(u)$, and report $\kappa_2(u)$.
Indeed, the application of \PAND on ego-network for core number estimation is the same as the \textit{propagating estimator} in~\cite{OBrien14} for distance-1 neighborhood.
Here we generalize the same concept to estimate truss numbers of edges.
Regarding the truss number estimations, we define the ego network of an edge $e$ as the set of neighbor edges that participates in a common triangle ($\cN_3(e)$). Thus, we apply \PAND on $e \cup \cN_3(e)$ and report $\kappa_3(e)$ as a truss number estimate.
Figures~\ref{fig:12_kappa} and \ref{fig:23_kappa} present the results for core and truss number estimations.
We selected vertices/edges with varying core/truss numbers (on the x-axis) and check  the accuracy of  \PAND estimations.
Ground-truth values are shown with green lines (note that y-axes in \cref{fig:12_kappa} are log-scale).
We also show the degrees/triangle counts of the vertices/edges in red as a baseline.
Overall, \PAND yields almost exact estimations for a wide range of core/truss numbers.
On the other hand,  degree of a vertex gives a close approximation to the core number for smaller degrees, but it fails for large values.
This trend is similar for the truss numbers and triangle counts.

\begin{figure*}[!t]
\vspace{-15ex}
\centering
\captionsetup[subfigure]{captionskip=-11ex}
\subfloat[Core number estimations for \twg and \tassk]
{
\includegraphics[width=0.45\linewidth]{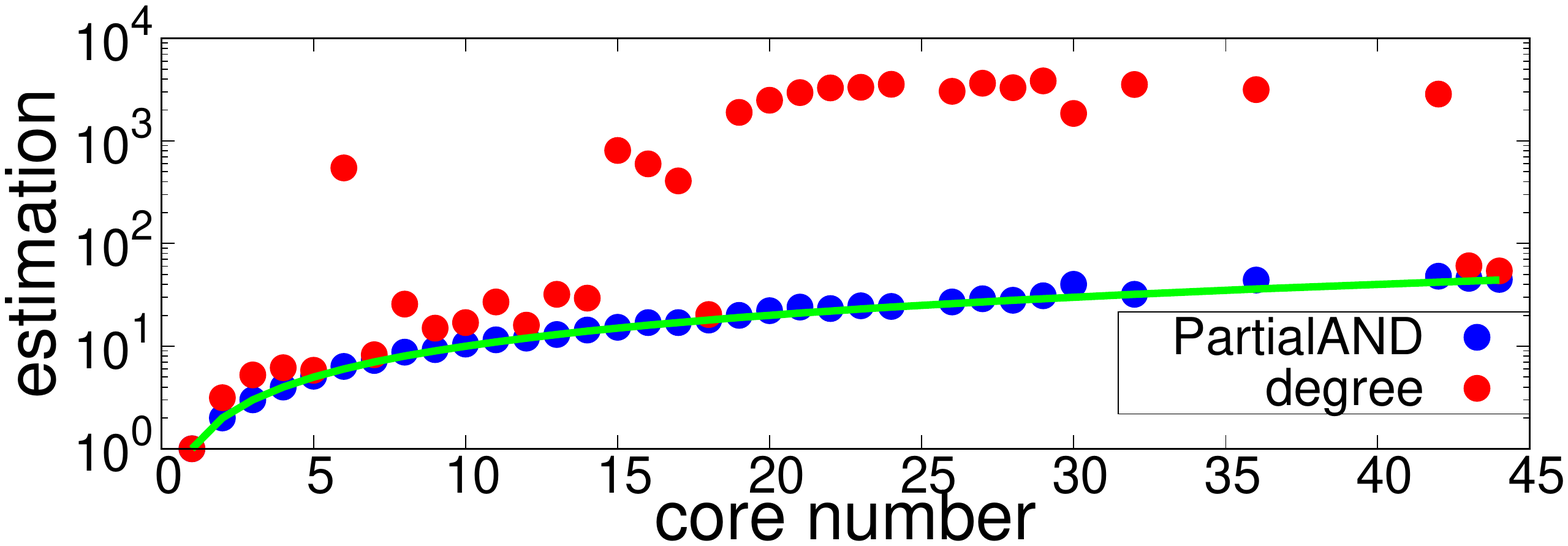}
\includegraphics[width=0.45\linewidth]{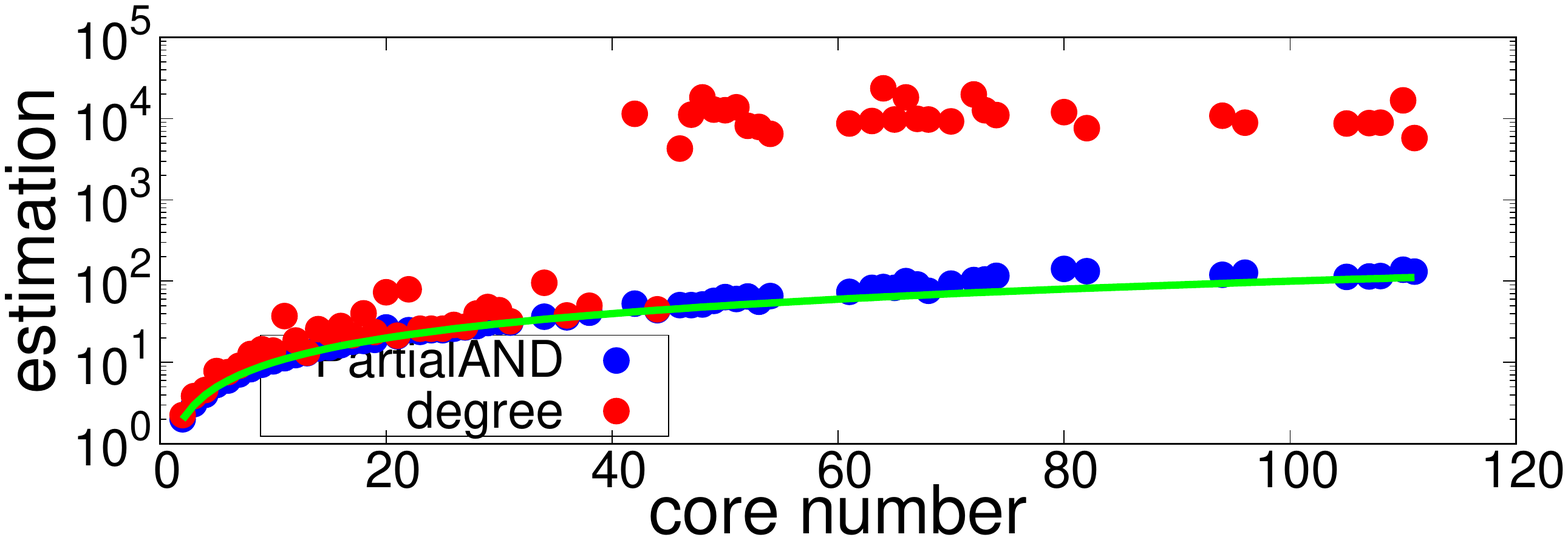}
\label{fig:12_kappa}}
\vspace{-24ex}
\subfloat[Truss number estimations for \ttw and \tsth]
{\hspace{-2ex}
\includegraphics[width=0.45\linewidth]{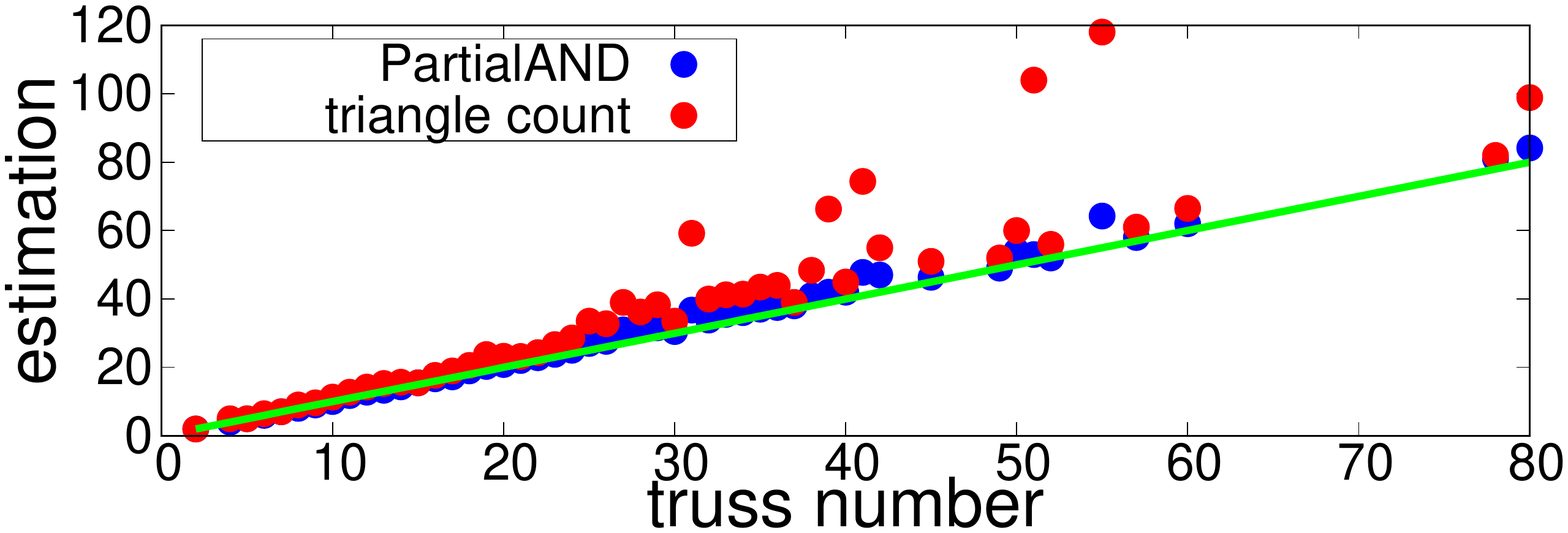}
\includegraphics[width=0.45\linewidth]{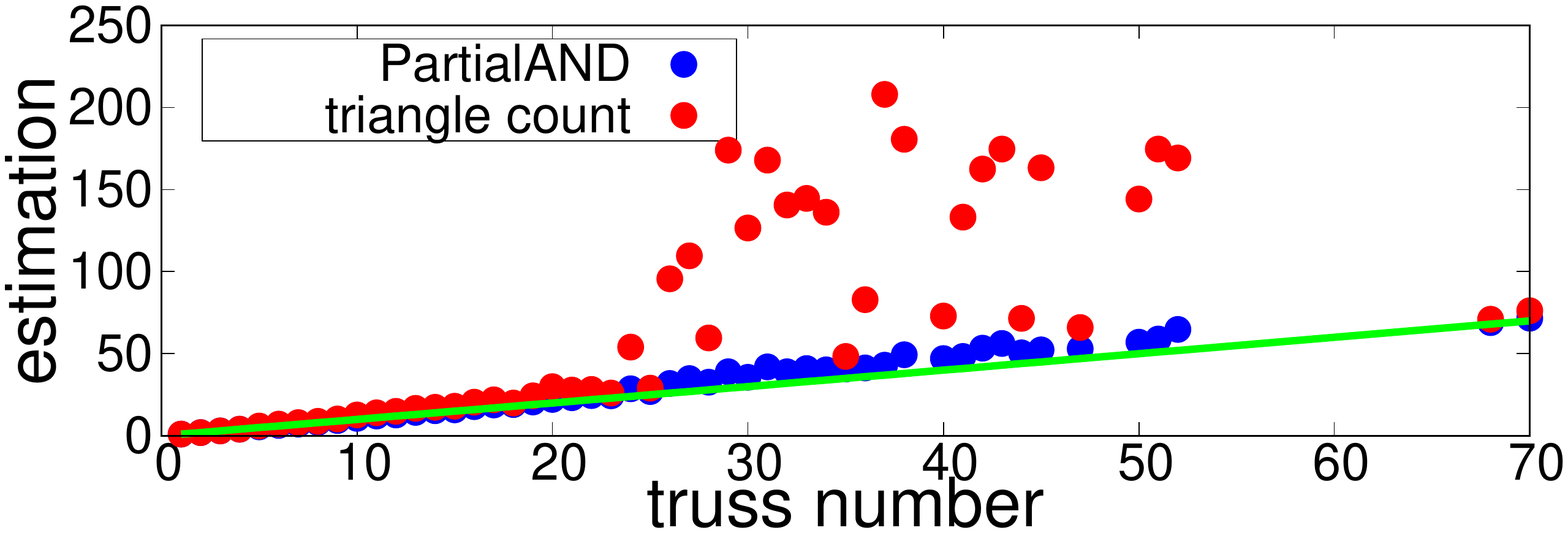}
\label{fig:23_kappa}}
\vspace{-10ex}
\caption{\textit{Accuracy of core and truss number estimations by \PAND. Top two charts present the core number estimations by \PAND and degree with respect to the ground-truth (green line) on \wg and \assk. Bottom two present the truss number estimations for \tw and \sth (green line is the ground-truth). \PAND estimates the core and truss numbers accurately for a wide range. Results for other graphs are similar and omitted for brevity.}}
\vspace{-4ex}
\label{fig:ego_kappa}
\end{figure*}

Regarding the runtime, \PAND on ego networks only takes a fraction of a second -- way more efficient than computing the entire core/truss decomposition. For instance, it takes only $0.23$ secs on average to estimate the core number of any vertex in the \so network whereas the full $k$-core decomposition needs $11.4$ secs.
It is even better in the $k$-truss case; \PAND takes $0.017$ secs on average to estimate a truss number of an edge in \sth network where the full $k$-truss computation takes $73$ secs.

}

\vspace{-2ex}
\section{Related Work}

\noindent Previous attempts to find the approximate core numbers (or $k$-cores) focus on the neighborhood of a vertex within a certain radius~\cite{OBrien14}.
It is reported that if the radius is at least  half of the diameter, close approximations can be obtained.
However, given the small-world nature of the real-world networks, the local graph within a distance of half the diameter is too large to compute.
In our work, we approximate the $k$-core, $k$-truss, and $(r,s)$ nucleus decompositions in a rigorous and efficient way that does not depend on the diameter.

Most related study is done by Lu et al.~\cite{Lu16}, where they show that iterative $h$-index computation on vertices result in the core numbers.
Their experiments on smaller graphs also show that $h$-index computation provides nice trade-offs for time and quality of the solutions.
In our work, we generalized the iterative $h$-index computation approach for \textit{any} nucleus decomposition that subsumes the $k$-core and $k$-truss algorithms.
Furthermore, we give provable upper bounds on the number of iterations for convergence.
Apart from that work, Govindan et al.~\cite{Govindan16} use the iterative $h$-index computation to design space-efficient algorithms for estimating core numbers.
Distributed algorithms in~\cite{Montresor13} and out-of-core approaches in~\cite{Khaouid15, Wen16, Cheng11} also make use of similar ideas, but only for core decomposition.
Montresor et al.~\cite{Montresor13} present a bound for the number of iterations, which is basically $|V|-1$, much looser than ours.

Regarding the parallel computations, Jiang et al.~\cite{Jiang14} introduced parallel algorithms to find the number of iterations needed to find the empty $k$-core in random hypergraphs.
Their work relies on the assumption that the edge density is below a certain threshold and the focus is on the number of iterations only.
Our local algorithms present an alternative formulation for the peeling process, and work for any $k$ value.
For the $k$-truss decomposition, Quick et al.~\cite{Quick12} introduced algorithms for vertex-centric distributed graph processing systems.
 For the same setup, Shao et al.~\cite{Shao14} proposed faster algorithms that can compute $k$-trusses in a distributed graph processing system. Both papers make use of the peeling-based algorithms for computation.
Our focus is on the local computation where the each edge has access to only its neighbors and no global graph information is necessary, thus promise better scalability.

\vspace{-2ex}
\section{Conclusion}
\noindent We introduced a generalization of the iterative $h$-index computations  to  identify  \emph{any} nucleus decomposition and prove convergence bounds.
Our local algorithms are highly parallel and can provide fast approximations to explore time and quality trade-offs.
Experimental evaluation on real-world networks exhibits the efficiency, scalability, and effectiveness of our algorithms for three decompositions.
We believe that our local algorithms will be beneficial for many real-world applications that work in challenging setups. For example, shared-nothing systems can leverage the local computation. 

%

\clearpage
\bibliographystyle{abbrv}
\bibliography{paper}

\begin{thebibliography}{10}

\bibitem{Ahmed15}
N.~K. Ahmed, J.~Neville, R.~A. Rossi, and N.~G. Duffield.
\newblock Efficient graphlet counting for large networks.
\newblock In {\em {IEEE} International Conference on Data Mining, {ICDM}},
  pages 1--10, 2015.

\bibitem{Angel12}
A.~Angel, N.~Sarkas, N.~Koudas, and D.~Srivastava.
\newblock Dense subgraph maintenance under streaming edge weight updates for
  real-time story identification.
\newblock {\em PVLDB}, 5(6):574--585, 2012.

\bibitem{BaZa03}
V.~Batagelj and M.~Zaversnik.
\newblock An o (m) algorithm for cores decomposition of networks.
\newblock {\em arXiv preprint cs/0310049}, 2003.

\bibitem{Benson16}
A.~R. Benson, D.~F. Gleich, and J.~Leskovec.
\newblock Higher-order organization of complex networks.
\newblock {\em Science}, 353(6295):163--166, 2016.

\bibitem{Cheng11}
J.~Cheng, Y.~Ke, S.~Chu, and M.~T. Ozsu.
\newblock Efficient core decomposition in massive networks.
\newblock In {\em IEEE International Conference on Data Engineering , {ICDE}},
  pages 51--62, 2011.

\bibitem{Cohen08}
J.~Cohen.
\newblock Trusses: Cohesive subgraphs for social network analysis.
\newblock Technical report, National Security Agency Technical Report, Fort
  Meade, MD, 2008.

\bibitem{openmp}
L.~Dagum and R.~Menon.
\newblock Openmp: an industry standard api for shared-memory programming.
\newblock {\em Computational Science \& Engineering, IEEE}, 5(1):46--55, 1998.

\bibitem{Dhulipala17}
L.~Dhulipala, G.~Blelloch, and J.~Shun.
\newblock Julienne: A framework for parallel graph algorithms using
  work-efficient bucketing.
\newblock In {\em ACM Symposium on Parallelism in Algorithms and Architectures
  ,{SPAA}}, pages 293--304, 2017.

\bibitem{Dourisboure07}
Y.~Dourisboure, F.~Geraci, and M.~Pellegrini.
\newblock Extraction and classification of dense communities in the web.
\newblock In {\em International Conference on World Wide Web, {WWW}}, pages
  461--470, 2007.

\bibitem{Du09}
X.~Du, R.~Jin, L.~Ding, V.~E. Lee, and J.~H.~T. Jr.
\newblock Migration motif: a spatial-temporal pattern mining approach for
  financial markets.
\newblock In {\em ACM SIGKDD International Conf. on Knowledge Discovery and
  Data Mining}, pages 1135--1144, 2009.

\bibitem{Epasto15Ego}
A.~Epasto, S.~Lattanzi, V.~Mirrokni, I.~O. Sebe, A.~Taei, and S.~Verma.
\newblock Ego-net community mining applied to friend suggestion.
\newblock {\em PVLDB}, 9(4):324--335, 2015.

\bibitem{Fratkin06}
E.~Fratkin, B.~T. Naughton, D.~L. Brutlag, and S.~Batzoglou.
\newblock Motifcut: regulatory motifs finding with maximum density subgraphs.
\newblock {\em Bioinformatics}, 22(14):e150--e157, 2006.

\bibitem{Gibson05}
D.~Gibson, R.~Kumar, and A.~Tomkins.
\newblock Discovering large dense subgraphs in massive graphs.
\newblock In {\em International Conference on Very Large Data Bases, {VLDB}},
  pages 721--732, 2005.

\bibitem{Gionis13}
A.~Gionis, F.~Junqueira, V.~Leroy, M.~Serafini, and I.~Weber.
\newblock Piggybacking on social networks.
\newblock {\em PVLDB}, 6(6):409--420, 2013.

\bibitem{Gleich12}
D.~F. Gleich and C.~Seshadhri.
\newblock Vertex neighborhoods, low conductance cuts, and good seeds for local
  community methods.
\newblock In {\em ACM SIGKDD International Conference on Knowledge Discovery
  and Data Mining}, pages 597--605, 2012.

\bibitem{Govindan16}
P.~Govindan, S.~Soundarajan, T.~Eliassi-Rad, and C.~Faloutsos.
\newblock Nimblecore: A space-efficient external memory algorithm for
  estimating core numbers.
\newblock In {\em IEEE/ACM International Conference on Advances in Social
  Networks Analysis and Mining, {ASONAM}}, pages 207--214, 2016.

\bibitem{Hirsch05}
J.~E. Hirsch.
\newblock An index to quantify an individual's scientific research output.
\newblock {\em Proceedings of the National Academy of Sciences of the United
  States of America}, 102(46):16569--16572, 2005.

\bibitem{Huang14}
X.~Huang, H.~Cheng, L.~Qin, W.~Tian, and J.~X. Yu.
\newblock Querying k-truss community in large and dynamic graphs.
\newblock In {\em ACM SIGMOD International Conference on Management of Data},
  pages 1311--1322, 2014.

\bibitem{Jha15}
M.~Jha, C.~Seshadhri, and A.~Pinar.
\newblock Path sampling: A fast and provable method for estimating 4-vertex
  subgraph counts.
\newblock In {\em International Conference on World Wide Web, {WWW}}, pages
  495--505, 2015.

\bibitem{Jiang14}
J.~Jiang, M.~Mitzenmacher, and J.~Thaler.
\newblock Parallel peeling algorithms.
\newblock In {\em ACM Symposium on Parallelism in Algorithms and
  Architectures}, SPAA, pages 319--330, 2014.

\bibitem{Jin09}
R.~Jin, Y.~Xiang, N.~Ruan, and D.~Fuhry.
\newblock 3-hop: a high-compression indexing scheme for reachability query.
\newblock In {\em ACM SIGMOD International Conference on Management of Data},
  pages 813--826, 2009.

\bibitem{Kamesh}
H.~Kabir and K.~Madduri.
\newblock Shared-memory graph truss decomposition.
\newblock In {\em {IEEE} International Conference on High Performance
  Computing, HiPC}, pages 13--22, 2017.

\bibitem{Khaouid15}
W.~Khaouid, M.~Barsky, S.~Venkatesh, and A.~Thomo.
\newblock K-core decomposition of large networks on a single {PC}.
\newblock {\em {PVLDB}}, 9(1):13--23, 2015.

\bibitem{Kumar99}
R.~Kumar, P.~Raghavan, S.~Rajagopalan, and A.~Tomkins.
\newblock Trawling the web for emerging cyber-communities.
\newblock In {\em International Conference on World Wide Web, {WWW}}, pages
  1481--1493, 1999.

\bibitem{snap}
J.~Leskovec and A.~Krevl.
\newblock {SNAP Datasets}, June 2014.

\bibitem{Lu16}
L.~L{\"u}, T.~Zhou, Q.-m. Zhang, and H.~E. Stanley.
\newblock The h-index of a network node and its relation to degree and
  coreness.
\newblock {\em Nature Communications}, 7:10168, 2016.

\bibitem{MaBe83}
D.~W. Matula and L.~L. Beck.
\newblock Smallest-last ordering and clustering and graph coloring algorithms.
\newblock {\em Journal of the ACM}, 30(3):417--427, July 1983.

\bibitem{Montresor13}
A.~Montresor, F.~D. Pellegrini, and D.~Miorandi.
\newblock Distributed k-core decomposition.
\newblock {\em IEEE Transactions on Parallel and Distributed Systems},
  24(2):288--300, 2013.

\bibitem{OBrien14}
M.~P. O'Brien and B.~D. Sullivan.
\newblock Locally estimating core numbers.
\newblock In {\em {IEEE} International Conference on Data Mining, {ICDM}},
  pages 460--469, 2014.

\bibitem{Pinar16}
A.~Pinar, C.~Seshadhri, and V.~Vishal.
\newblock Escape: Efficiently counting all 5-vertex subgraphs.
\newblock In {\em International Conference on World Wide Web, {WWW}}, pages
  1431--1440, 2017.

\bibitem{Quick12}
L.~Quick, P.~Wilkinson, and D.~Hardcastle.
\newblock Using pregel-like large scale graph processing frameworks for social
  network analysis.
\newblock In {\em IEEE/ACM International Conference on Advances in Social
  Networks Analysis and Mining, {ASONAM}}, pages 457--463, 2012.

\bibitem{nr}
R.~A. Rossi and N.~K. Ahmed.
\newblock The network data repository with interactive graph analytics and
  visualization.
\newblock In {\em AAAI Conference on Artificial Intelligence}, pages
  4292--4293, 2015.

\bibitem{Rossi17}
R.~A. Rossi, R.~Zhou, and N.~K. Ahmed.
\newblock Estimation of graphlet statistics.
\newblock {\em CoRR}, abs/1701.01772, 2017.

\bibitem{Saito06}
K.~Saito and T.~Yamada.
\newblock Extracting communities from complex networks by the k-dense method.
\newblock In {\em IEEE International Conference on Data Mining Workshops,
  {ICDMW}}, pages 300--304, 2006.

\bibitem{hpec}
S.~Samsi, V.~Gadepally, M.~Hurley, M.~Jones, E.~Kao, S.~Mohindra,
  P.~Monticciolo, A.~Reuther, S.~Smith, W.~Song, D.~Staheli, and J.~Kepner.
\newblock Static graph challenge: Subgraph isomorphism.
\newblock In {\em {IEEE} High Performance Extreme Computing Conference,
  {HPEC}}, 2017.

\bibitem{localnucleus}
A.~E. Sariy{\"{u}}ce, C.~Seshadhri, and A.~Pinar.
\newblock Parallel local algorithms for core, truss, and nucleus
  decompositions.
\newblock {\em CoRR}, abs/1704.00386, 2017.

\bibitem{Sariyuce15}
A.~E. Sar{\i}y{\"{u}}ce, C.~Seshadhri, A.~P{\i}nar, and {\"{U}}.~V.
  {\c{C}}ataly{\"{u}}rek.
\newblock Finding the hierarchy of dense subgraphs using nucleus
  decompositions.
\newblock In {\em International Conference on World Wide Web, {WWW}}, pages
  927--937, 2015.

\bibitem{Sariyuce-TWEB17}
A.~E. Sar{\i}y{\"{u}}ce, C.~Seshadhri, A.~P{\i}nar, and {\"{U}}.~V.
  {\c{C}}ataly{\"{u}}rek.
\newblock Nucleus decompositions for identifying hierarchy of dense subgraphs.
\newblock {\em ACM Transactions on Web}, 11(3):16:1--16:27, 2017.

\bibitem{Seidman83}
S.~B. Seidman.
\newblock Network structure and minimum degree.
\newblock {\em Social Networks}, 5(3):269--287, 1983.

\bibitem{Shao14}
Y.~Shao, L.~Chen, and B.~Cui.
\newblock Efficient cohesive subgraphs detection in parallel.
\newblock In {\em ACM SIGMOD International Conference on Management of Data},
  pages 613--624, 2014.

\bibitem{Shaden}
S.~Smith, X.~Liu, N.~K. Ahmed, A.~S. Tom, F.~Petrini, and G.~Karypis.
\newblock Truss decomposition on shared-memory parallel systems.
\newblock In {\em IEEE High Performance Extreme Computing Conference, {HPEC}},
  pages 1--6, 2017.

\bibitem{Tsourakakis15}
C.~Tsourakakis.
\newblock The k-clique densest subgraph problem.
\newblock In {\em International Conference on World Wide Web, {WWW}}, pages
  1122--1132, 2015.

\bibitem{Ugander13}
J.~Ugander, B.~Karrer, L.~Backstrom, and J.~Kleinberg.
\newblock Graph cluster randomization: Network exposure to multiple universes.
\newblock In {\em ACM SIGKDD International Conference on Knowledge Discovery
  and Data Mining}, pages 329--337, 2013.

\bibitem{Verma12}
A.~Verma and S.~Butenko.
\newblock Network clustering via clique relaxations: A community based.
\newblock {\em Graph Partitioning and Graph Clustering}, 588:129, 2013.

\bibitem{Pingali}
C.~Voegele, Y.~Lu, S.~Pai, and K.~Pingali.
\newblock Parallel triangle counting and k-truss identification using
  graph-centric methods.
\newblock In {\em IEEE High Performance Extreme Computing Conference, {HPEC}},
  pages 1--7, 2017.

\bibitem{WaCh12}
J.~Wang and J.~Cheng.
\newblock Truss decomposition in massive networks.
\newblock {\em PVLDB}, 5(9):812--823, 2012.

\bibitem{Wen16}
D.~Wen, L.~Qin, Y.~Zhang, X.~Lin, and J.~Yu.
\newblock I/o efficient core graph decomposition at web scale.
\newblock In {\em IEEE International Conference on Data Engineering, {ICDE}},
  pages 133--144, 2016.

\bibitem{Zhang12}
Y.~Zhang and S.~Parthasarathy.
\newblock Extracting analyzing and visualizing triangle k-core motifs within
  networks.
\newblock In {\em IEEE International Conference on Data Engineering, {ICDE}},
  pages 1049--1060, 2012.

\end{thebibliography}

\end{document}